\newtheorem{theorem}{Theorem}
\newtheorem{proposition}[theorem]{Proposition}
\newtheorem{lemma}[theorem]{Lemma}
\theoremstyle{definition}
\newtheorem{assumption}[theorem]{Assumption}
\theoremstyle{remark}
\newtheorem{remark}[theorem]{Remark}
\numberwithin{equation}{section}
\numberwithin{theorem}{section}
\DeclareMathOperator{\Pas}{\mathds{P}\text{-as}}
\DeclareMathOperator*{\argmin}{argmin}
\DeclareMathOperator*{\Argmin}{Argmin}
\DeclareMathOperator*{\supp}{supp}
\DeclareMathOperator{\VaR}{VaR}
\DeclareMathOperator{\ES}{ES}
\DeclareMathOperator{\dist}{dist}
\DeclareMathOperator{\oo}{o}
\DeclareMathOperator{\OO}{O}
\DeclareMathOperator{\Cost}{Cost}
\DeclareMathOperator{\minimize}{minimise}
\DeclareMathOperator{\subjectTo}{subject~to}
\DeclareMathOperator{\Argmax}{Argmax}
\title{A Multilevel Stochastic Approximation Algorithm for Value-at-Risk and Expected Shortfall Estimation}
\author{
St\'ephane~Cr\'epey\thanks{
Universit\'e Paris Cit\'e,
Laboratoire de Probabilit\'es, Statistique et Mod\'elisation (LPSM), CNRS UMR 8001,
75013 Paris, France.
\texttt{stephane.crepey@lpsm.paris}
}\and Noufel~Frikha\thanks{
Universit\'e Paris 1 Panth\'eon-Sorbonne,
Centre d'Economie de la Sorbonne (CES),
106 Boulevard de l’H\^opital,
75642 Paris Cedex 13, France.
\texttt{noufel.frikha@univ-paris1.fr}
}\and Azar~Louzi\thanks{
Universit\'e Paris Cit\'e,
Laboratoire de Probabilit\'es, Statistique et Mod\'elisation (LPSM), CNRS UMR 8001,
75013 Paris, France.
\texttt{azar.louzi@lpsm.paris}
}
}
\date{August 27, 2025}
\begin{document}

\maketitle

\begin{center}
\begin{minipage}{.8\linewidth}
\small
{\bf Abstract.}
We propose a multilevel stochastic approximation (MLSA) scheme for the computation of the value-at-risk (VaR) and expected shortfall (ES) of a financial loss, which can only be computed via simulations conditionally on the realisation of future risk factors. Thus the problem of estimating its VaR and ES is nested in nature and can be viewed as an instance of stochastic approximation problems with biased innovations. In this framework, for a prescribed accuracy $\varepsilon $, the optimal complexity of a nested stochastic approximation algorithm is shown to be of the order $\varepsilon ^{-3}$. To estimate the VaR, our MLSA algorithm attains an optimal complexity of the order $\varepsilon ^{-2-\delta }$, where $\delta \in (0,1)$ is some parameter depending on the integrability degree of the loss, while to estimate the~ES, the algorithm achieves an optimal complexity of the order $\varepsilon ^{-2}|\ln {\varepsilon }|^{2}$. Numerical studies of the joint evolution of the error rate and the execution time demonstrate how our MLSA algorithm regains a significant amount of the performance lost due to the nested nature of the problem.
\medskip

\noindent
{\bf Keywords.}
value-at-risk~;
expected shortfall~;
stochastic approximation~;
nested Monte Carlo~;
multilevel Monte Carlo~;
numerical finance

\noindent
{\bf MSC.}
65C05~;
62L20~;
62G32~;
91G60

\noindent
{\bf JEL.}
C02~;
C61~;
G32

\noindent
{\bf DOI.}
10.1007/s00780-025-00573-5
\quad
\noindent
{\bf arXiv.}
2304.01207
\quad
\noindent
{\bf HAL.}
04037328
\end{minipage}
\end{center}

\section{Introduction}

The post-great recession era has witnessed the implementation of multiple
risk measures with the goal of better controlling financial losses. In
a spirit of precaution and consistency, the Fundamental Review of the Trading
Book~\cite{BISFRTB} triggered a shift away from the value-at-risk (VaR)
towards the expected shortfall (ES, i.e., the average loss given it exceeds
the VaR) as a reference regulatory risk measure. The future valuation of
a position in financial derivatives is defined by a conditional expectation.
In the case of exotic products or nonlinear portfolio computations (e.g.~credit
or funding valuation adjustments), this future valuation can only be computed
by Monte Carlo simulation. VaR and ES computations on future portfolio
losses are thus nested in nature. A brute force nested Monte Carlo computational
approach \`a la Gordy and Juneja~\cite{Gordy} may be too heavy for practical
use. As a shortcut solution, a regression-based estimator for the inner
conditional expectation is implemented by Broadie et al.~\cite{BroadieDuMoallemi15},
but the resulting regression error is difficult to control. Barrera et
al.~\cite{barrera:hal-01710394} adopt the stochastic approximation (SA)
point of view of Bardou et al.~\cite{BardouFrikhaPages+2009+173+210} (see
also Bardou et al.~\cite{bardou2016cvar,10.1007/978-3-642-04107-5_11} and
Frikha~\cite{doi:10.1137/120903142}), itself building on top of the Rockafellar
and Uryasev~\cite{10.21314/JOR.2000.038} representation of the VaR and
ES as joint solutions to a convex optimisation problem. Barrera et al.~\cite{barrera:hal-01710394}
revisit the nested vs.~simulation/regression computational strategies of
both~\cite{Gordy} and~\cite{BroadieDuMoallemi15} under assumptions that are more realistic and
easier to check; but we clarify in this paper that in order to reach an
estimation accuracy of the order~$\varepsilon $, the complexity of the
resulting nested SA algorithm in~\cite{barrera:hal-01710394} is of the
order $\varepsilon ^{-3}$. As for the regression strategy in~\cite{barrera:hal-01710394},
its non-asymptotic error bounds involve some constants that are rather
large, which renders them of theoretical interest only.\\

In the present paper, we propose a multilevel SA (MLSA) algorithm for the
computation of the VaR and ES of a loss that has the form of a conditional
expectation. MLSA was introduced by Frikha~\cite{10.1214/15-AAP1109} as
an extension to the SA framework of multilevel Monte Carlo path simulation
of Giles~\cite{10.1287/opre.1070.0496}. It was then revisited by Dereich
and M\"uller-Gronbach~\cite{Dereich2019} from the different perspective
of approximating the objective function of an SA problem rather than its
solutions. We also refer to Frikha and Huang~\cite{FRIKHA20154066} for
a multi-step Richardson--Romberg stochastic approximation method. We stress
that the uniform mean-reversion assumption on the objective function discussed
by~\cite{10.1214/15-AAP1109} is generally not satisfied in a VaR--ES setup;
so the theoretical performance of the MLSA scheme applied to both risk
measures is not directly guaranteed by the results therein. We also refer
to Giles and Haji-Ali~\cite{doi:10.1137/18M1173186} who propose a multilevel
Monte Carlo estimator for the computation of the nested expectation
$\mathbb{P}[\mathbb{E}[X|Y]\geq 0]$ as well as a stochastic root-finding
algorithm for the computation of the VaR and~ES, but do not provide a theoretical
analysis of the latter. Our main contribution is to propose an optimised
MLSA algorithm that achieves sharp theoretical complexities when focusing
on estimating either the VaR or the ES. The VaR-focused estimation achieves
a complexity of the order $\varepsilon ^{-2-\delta}$, where
$\delta \in (0,1)$ is some specific parameter depending on the integrability
degree of the loss. The ES-focused estimation attains a complexity of the
order $\varepsilon ^{-2}|\ln{\varepsilon}|^{2}$. The theoretical analysis
is verified numerically on two financial case studies that show a considerable
gain with respect to the nested SA approach in terms of the computational
time needed to estimate the VaR and ES.\\

The paper is organised as follows. Section~\ref{sec:problem} presents the
problem of computing the VaR and ES by means of an SA scheme. Section~\ref{sec:nested-sa}
analyses the nested SA scheme for the computation of the VaR and ES. Section~\ref{sec:mlsa}
introduces and analyses the MLSA algorithm. The theoretical analysis and
the gain in performance achieved by our optimised MLSA scheme are illustrated
numerically in Sect.~\ref{s:numerical}. The proofs of the technical results
are given in Appendices~\ref{prf:thetah->theta0}--\ref{prf:thm:ml-variance-cv}.

\section{Baseline setup}
\label{sec:problem}

\subsection{Value-at-risk and expected shortfall}
\label{sec1.1}

Let $(\Omega ,\mathcal{A},\mathbb{P})$ be a probability space on which
is defined a real-valued financial loss
\begin{equation}
\label{eq:L}
X_{0}=\mathbb{E}[\varphi (Y,Z)\,|\,Y],
\end{equation}
where $Y\in \mathbb{R}^{d}$, $Z\in \mathbb{R}^{q}$ are two independent
random variables and
$\varphi :\mathbb{R}^{d}\times \mathbb{R}^{q}\to \mathbb{R}$ is a measurable
function such that $\varphi (Y,Z)\in L^{1}(\mathbb{P})$. Given a portfolio
and a time horizon $\delta >0$, $Y$ and $Z$ typically model the portfolio's
risk factors respectively up to and beyond $\delta $,
$\varphi (Y,Z)$ the portfolio's subsequent future cash flows and
$X_{0}$ the portfolio's future loss at time $\delta $ (see Sects.~\ref{ssec:numerical}
and~\ref{s:swap}). Hence if $\varphi (y,Z)\in L^{1}(\mathbb{P})$ for all
$y\in \mathbb{R}^{d}$, we can rewrite $X_{0}$ as
\begin{equation}
\label{eq:X0}
X_{0} =\mathbb{E}[\varphi (Y,Z)\,|\,Y] =\Phi (Y), \qquad \text{with }
\Phi (y):=\mathbb{E}[\varphi (y,Z)], y\in \mathbb{R}^{d}.
\end{equation}

We are interested in computing the VaR and ES of $X_{0}$ at a given confidence
level \text{${\alpha \in (0,1)}$,} denoted respectively by
$\VaR _{\alpha}(X_{0})$ and $\ES _{\alpha}(X_{0})$. As per F\"ollmer and
Schied~\cite{doi:10.1002/9780470061602.eqf15003} and Acerbi and Tasche~\cite{ACERBI20021487},
these risk measures are defined by
\begin{align*}
\VaR _{\alpha}(X_{0}) &:=\inf{\{\xi \in \mathbb{R}:\mathbb{P}[X_{0}
\leq \xi ]\geq \alpha \}},
\\
\ES _{\alpha}(X_{0}) &:=\frac{1}{1-\alpha}\int _{\alpha}^{1}\VaR _{a}(X_{0})
\mathrm{d}a.
\end{align*}

\subsection{Stochastic approximation}
\label{sec1.2}

As established by Rockafellar and Uryasev~\cite{10.21314/JOR.2000.038},
the VaR and ES are linked by a convex optimisation problem. To accurately
state this connection, we introduce the mapping
\begin{equation}
\label{eq:V0}
V_{0}(\xi ):=\xi +\frac{1}{1-\alpha}\mathbb{E}[(X_{0}-\xi )^{+}],
\qquad \xi \in \mathbb{R}.
\end{equation}
If the distribution function $F_{X_{0}}$ of $X_{0}$ is continuous, then
$V_{0}$ is continuously differentiable on $\mathbb{R}$ and
\begin{equation*}
V_{0}'(\xi )=\frac{1}{1-\alpha}\big(F_{X_{0}}(\xi )-\alpha \big),
\qquad \xi \in \mathbb{R}.
\end{equation*}
If $X_{0}$ admits a continuous density function $f_{X_{0}}$, then
$V_{0}$ is twice continuously differentiable on $\mathbb{R}$ with
\begin{equation*}
V_{0}''(\xi )=\frac{1}{1-\alpha}f_{X_{0}}(\xi ),\qquad \xi \in
\mathbb{R}.
\end{equation*}

\begin{lemma}[{\cite[Proposition~2.1]{10.1007/978-3-642-04107-5_11}}]
\label{lmm:stochopt}
Suppose $X_{0}\in L^{1}(\mathbb{P})$ and the distribution function of~$X_{0}$
is continuous. Then $V_{0}$ is convex and continuously differentiable,
we have $\lim _{|\xi |\rightarrow \infty}V_{0}(\xi )=\infty $ and
\begin{equation*}
\VaR _{\alpha}(X_{0})=\min \Argmin{V_{0}},
\end{equation*}
where
\begin{equation*}
\Argmin{V_{0}} =\{V_{0}'=0\} =\{\xi \in \mathbb{R}:\mathbb{P}[X_{0}
\leq \xi ]=\alpha \}
\end{equation*}
is a closed bounded non-empty interval of $\mathbb{R}$. Moreover,
\begin{equation}
\label{eq:H1}
V_{0}'(\xi )=\mathbb{E}[H_{1}(\xi ,X_{0})], \qquad \text{where } H_{1}(
\xi ,x)=1-\frac{1}{1-\alpha}\mathds1_{\{x \geq \xi \}}, \xi ,x\in
\mathbb{R},
\end{equation}
and
\begin{equation*}
\ES _{\alpha}(X_{0})=\min{V_{0}}.
\end{equation*}
\end{lemma}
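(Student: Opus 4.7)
The plan is to verify each assertion in turn, essentially following the original argument of Rockafellar--Uryasev adapted to the integrability setting here.

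First I would establish convexity, continuity and differentiability of $V_0$. For each fixed $x$, the map $\xi\mapsto(x-\xi)^+$ is convex and $1$-Lipschitz in $\xi$; since $X_0\in L^1(\mathbb{P})$, the expectation $\xi\mapsto\mathbb{E}[(X_0-\xi)^+]$ is finite-valued, convex, and (being the supremum of a uniformly Lipschitz family) continuous. Adding the linear term $\xi$ preserves convexity, so $V_0$ is convex and continuous. For the derivative, I would invoke dominated convergence with the Lipschitz bound $|(X_0-\xi)^+-(X_0-\xi')^+|/|\xi-\xi'|\le 1$ as dominating function and the pointwise derivative $-\mathbf{1}_{X_0>\xi}$, which equals $-\mathbf{1}_{X_0\ge\xi}$ almost surely because $F_{X_0}$ has no atoms. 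This yields
\[
V_0'(\xi)=1-\frac{1}{1-\alpha}\mathbb{P}(X_0\ge\xi)=\mathbb{E}[H_1(\xi,X_0)]=\frac{1}{1-\alpha}\bigl(F_{X_0}(\xi)-\alpha\bigr),
\]
which is continuous in $\xi$ by continuity of $F_{X_0}$, giving the $C^1$ property.

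Next I would prove coercivity. As $\xi\to+\infty$, dominated convergence (with $|X_0|$ as dominant) gives $\mathbb{E}[(X_0-\xi)^+]\to 0$, so $V_0(\xi)\sim\xi\to+\infty$. As $\xi\to-\infty$, the trivial inequality $(X_0-\xi)^+\ge X_0-\xi$ yields
\[
V_0(\xi)\ge\xi+\frac{1}{1-\alpha}\bigl(\mathbb{E}[X_0]-\xi\bigr)=\frac{\mathbb{E}[X_0]}{1-\alpha}-\frac{\alpha}{1-\alpha}\xi\longrightarrow+\infty,
\]
since $\alpha\in(0,1)$. Coercivity together with continuity guarantees that $V_0$ attains its minimum.

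Then I would identify $\Argmin V_0$ with the level set $\{F_{X_0}=\alpha\}$. Convexity and $C^1$-ness give $\Argmin V_0=\{V_0'=0\}=\{F_{X_0}=\alpha\}$, which is non-empty by coercivity, closed by continuity of $F_{X_0}$, bounded again by coercivity, and an interval since the minimizer set of a convex function is convex. By definition $\VaR_\alpha(X_0)=\inf\{\xi\colon F_{X_0}(\xi)\ge\alpha\}$; continuity of $F_{X_0}$ implies that this infimum is attained with value exactly $\alpha$, and is therefore the left endpoint of $\{F_{X_0}=\alpha\}$, which gives $\VaR_\alpha(X_0)=\min\Argmin V_0$.

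Finally, for $\ES_\alpha(X_0)=\min V_0=V_0(\VaR_\alpha(X_0))$, I would use the quantile-transform representation $X_0\stackrel{d}{=}F_{X_0}^{-1}(U)$ with $U$ uniform on $(0,1)$, so that
\[
\mathbb{E}[(X_0-\VaR_\alpha(X_0))^+]=\int_0^1\bigl(\VaR_u(X_0)-\VaR_\alpha(X_0)\bigr)^+\mathrm{d}u=\int_\alpha^1\bigl(\VaR_u(X_0)-\VaR_\alpha(X_0)\bigr)\mathrm{d}u,
\]
using monotonicity of $u\mapsto\VaR_u$. Plugging this into the definition of $V_0$ at $\VaR_\alpha(X_0)$ collapses the $\VaR_\alpha(X_0)$ terms and yields $\frac{1}{1-\alpha}\int_\alpha^1\VaR_u(X_0)\mathrm{d}u=\ES_\alpha(X_0)$. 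The main subtlety here, and probably the step that most requires care, is the quantile change of variables (layer-cake) and the matching of the continuous-cdf convention with the $\inf/\min$ in the definition of $\VaR$; everything else is a routine application of convex analysis and dominated convergence.
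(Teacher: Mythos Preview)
Your proof is correct and follows the standard Rockafellar--Uryasev route. Note, however, that the paper does not actually supply a proof of this lemma: it is stated with a citation to \cite[Proposition 2.1]{10.1007/978-3-642-04107-5_11} and used as a black box thereafter. So there is nothing in the paper to compare your argument against, and your write-up in fact provides more detail than the paper does.
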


In general, the set $\Argmin{V_{0}}$ does not reduce to a singleton. But
if the distribution function of $X_{0}$ is increasing, it does and
$\argmin{V_{0}}=\VaR _{\alpha}(X_{0})$.
\\

Let $\xi ^{0}_{\star}\in \Argmin{V_{0}}$ and
$\chi ^{0}_{\star}=\min{V_{0}}=V_{0}(\xi ^{0}_{\star})$. In the case of
an exactly computable function $\Phi $ in \eqref{eq:X0} and a strictly
increasing distribution function of~$X_{0}$, the approach initiated by
Bardou et al.~\cite{10.1007/978-3-642-04107-5_11} applies and allows to
approximate the unique couple $(\xi ^{0}_{\star},\chi ^{0}_{\star})$ by
a sequence $(\xi ^{0}_{n},\chi ^{0}_{n})_{n\geq 0}$ given by the two-time-scale
stochastic approximation (SA) dynamics
\begin{equation}
\label{sgd:algorithm}
\begin{cases}
\displaystyle \xi ^{0}_{n+1}=\xi ^{0}_{n}-\gamma _{n+1}H_{1}(\xi ^{0}_{n},X_{0}^{(n+1)}),
\\[1ex]
\displaystyle \chi ^{0}_{n+1}=\chi ^{0}_{n}-\frac{1}{n+1}H_{2}(\xi ^{0}_{n},
\chi ^{0}_{n},X_{0}^{(n+1)}),
\end{cases}
\end{equation}
where
\begin{equation*}
H_{2}(\xi ,\chi ,x)=\chi -\bigg(\xi +\frac{1}{1-\alpha}(x-\xi )^{+}
\bigg), \qquad \xi ,\chi ,x\in \mathbb{R}.
\end{equation*}
The sequence $(X_{0}^{(n)})_{n\geq 1}$ stands for i.i.d.~copies of
$X_{0}$ and $(\xi ^{0}_{0},\chi ^{0}_{0})$ is a random vector independent
of $(X_{0}^{(n)})_{n\geq 1}$ with
$\mathbb{E}[|\xi ^{0}_{0}|^{2}]+\mathbb{E}[|\chi ^{0}_{0}|^{2}]<
\infty $. The learning rate sequence $(\gamma _{n})_{n\geq 1}$ in
\eqref{sgd:algorithm} is deterministic, positive, nonincreasing and such
that
\begin{equation*}
\sum _{n\geq 1}\gamma _{n}=\infty , \qquad \sum _{n\geq 1}\gamma _{n}^{2}<
\infty .
\end{equation*}
While such behaviour is typical for learning rates in the stochastic approximation
literature, it is not required in our analyses as we allow
$\sum _{n\geq 1}\gamma _{n}^{2}=\infty $. See Algorithm~\ref{alg:robbins-monro}.

\begin{algorithm}[H]
\caption{SA for estimating (VaR, ES)}
\label{alg:robbins-monro}
\begin{algorithmic}[1]
\Require {$N\in \mathbb{N}$, a strictly positive nonincreasing sequence $(\gamma _{n})_{n\geq 1}$ such that $\sum_{n=1}^\infty\gamma_n=\infty$ and $\lim_{n\to\infty}\gamma_n=0$.}
\State {Choose $(\xi ^{0}_{0},\chi ^{0}_{0})$ such that $\mathbb{E}[|\xi ^{0}_{0}|^{2}]+\mathbb{E}[|\chi ^{0}_{0}|^{2}]<\infty $}
\For {$n=0, \dots , N-1$}
   \State {Simulate $X_{0}^{(n+1)}\sim X_{0}=\Phi (Y)$}
   \State {$\xi ^{0}_{n+1}\gets \xi ^{0}_{n}-\gamma _{n+1}H_{1}(\xi ^{0}_{n},X_{0}^{(n+1)})$}
   \State {$\chi ^{0}_{n+1}\gets \chi ^{0}_{n}-\frac{1}{n+1}H_{2}(\chi ^{0}_{n},\xi ^{0}_{n},X_{0}^{(n+1)})$}
\EndFor
\State \Return {$(\xi ^{0}_{N},\chi ^{0}_{N})$}
\end{algorithmic}
\end{algorithm}

Usually however, one does not have access to an exact simulator of
$X_{0}$ as given by~\eqref{eq:L}, because the law of $\varphi (Y,Z)$ conditionally
on $Y$ is not known and no analytical expression of $\Phi $ in
\eqref{eq:X0} is available.

\section{Nested stochastic approximation}
\label{sec:nested-sa}

The above discussion naturally suggests to replace the samples of
$X_{0}$ in the dynamics~\eqref{sgd:algorithm} by approximate samples. We
consider a bias parameter
\begin{equation*}
h=\frac{1}{K}\in \mathcal{H}:=\bigg\{\frac{1}{K}: K\in \mathbb{N}
\bigg\}.
\end{equation*}
We then approximate $X_{0}$ by the random variable $X_{h}$ defined by
\begin{equation}
\label{eq:Xh}
X_{h}=\frac{1}{K}\sum _{k=1}^{K}\varphi (Y,Z^{(k)}),
\end{equation}
where the sequence $(Z^{(k)})_{1\leq k\leq K}$ consists of i.i.d.~copies
of $Z$ independent of $Y$. In order to simulate $X_{h}$, it suffices to
sample $Y$, then independently sample $(Z^{(k)})_{1\leq k\leq K}$, to eventually
obtain $X_{h}$ as the sample mean of
$(\varphi (Y,Z^{(k)}))_{1\leq k\leq K}$.
\\

In the spirit of Sect.~\ref{sec:problem}, assuming that the distribution
function $F_{X_{h}}$ of $X_{h}$ is continuous, we define the approximating
optimisation problem
\begin{equation}
\label{eq:Vh}
\min _{\xi \in \mathbb{R}}{V_{h}(\xi )}, \qquad \text{where } V_{h}(
\xi ):=\xi +\frac{1}{1-\alpha}\mathbb{E}[(X_{h}-\xi )^{+}], \xi \in
\mathbb{R}.
\end{equation}
Because $F_{X_{h}}$ is continuous, $V_{h}$ is continuously differentiable
on $\mathbb{R}$ and we have \text{${V_{h}'(\xi )=\frac{1}{1-\alpha}(F_{X_{h}}(
\xi )-\alpha )}$} for $\xi \in \mathbb{R}$. If $X_{h}$ admits a continuous
density function~$f_{X_{h}}$, then $V_{h}$ is twice continuously differentiable
on $\mathbb{R}$ and \text{${V_{h}''(\xi )=\frac{1}{1-\alpha}f_{X_{h}}(
\xi )}$} for $\xi \in \mathbb{R}$. By Lemma~\ref{lmm:stochopt},
\begin{equation*}
\Argmin{V_{h}}=\{V_{h}'=0\}\neq \varnothing \qquad \text{and}\qquad
\chi ^{h}_{\star}=\min{V_{h}}=V_{h}(\xi ^{h}_{\star}).
\end{equation*}
Moreover, any minimiser $\xi ^{h}_{\star}$ of $V_{h}$ satisfies
$\mathbb{P}[X_{h}\leq \xi ^{h}_{\star}]=\alpha $. If the distribution function
of $X_{h}$ is strictly increasing, then
$\xi ^{h}_{\star}=\argmin{V_{h}}$ is uniquely defined.
\\

If the distribution function of $X_{h}$ is strictly increasing, in order
to approximate the unique couple
$(\xi ^{h}_{\star},\chi ^{h}_{\star})$, we devise the two-time-scale nested
SA (NSA) algorithm
\begin{equation}
\label{approximate:sgd:algorithm}
\begin{cases}
\displaystyle \xi ^{h}_{n+1} =\xi ^{h}_{n}-\gamma _{n+1}H_{1}(\xi ^{h}_{n},X_{h}^{(n+1)}),
\\[1ex]
\displaystyle \chi ^{h}_{n+1} =\chi ^{h}_{n}-\frac{1}{n+1}H_{2}(\xi ^{h}_{n},
\chi ^{h}_{n},X_{h}^{(n+1)}),
\end{cases}
\end{equation}
where $(X_{h}^{(n)})_{n\geq 1}$ is a sequence of i.i.d.~copies of
$X_{h}$ and $(\xi ^{h}_{0},\chi ^{h}_{0})$ is an $\mathbb{R}^{2}$-valued
random variable independent of $(X_{h}^{(n)})_{n\geq 1}$ with
$\mathbb{E}[|\xi ^{h}_{0}|^{2}]+\mathbb{E}[|\chi ^{h}_{0}|^{2}]<
\infty $. This scheme is nested in nature in the sense that the update
of the outer layer $(\xi ^{h}_{n+1},\chi ^{h}_{n+1})$ at step $n+1$ entails
simulating the inner layer $X_{h}^{(n+1)}$ by Monte Carlo methods as in~\eqref{eq:Xh}.
Besides, this scheme is biased as the target of
$(\xi ^{h}_{n},\chi ^{h}_{n})_{n\geq 0}$ is
$(\xi ^{h}_{\star},\chi ^{h}_{\star})$, which hopefully converges to
$(\xi ^{0}_{\star},\chi ^{0}_{\star})$ as
$\mathcal{H}\ni h\downarrow 0$.
\\

Algorithm~\ref{alg:nested-sa} summarises the NSA procedure for approximating
$(\xi ^{0}_{\star},\chi ^{0}_{\star})$.

\begin{algorithm}[H]
\caption{Nested SA for estimating (VaR, ES)}
\label{alg:nested-sa}
\begin{algorithmic}[1]
\Require {$K,N\in \mathbb{N}$, a strictly positive nonincreasing sequence $(\gamma _{n})_{n\geq 1}$ such that $\sum_{n=1}^\infty\gamma_n=\infty$ and $\lim_{n\to\infty}\gamma_n=0$.}
\State {Choose $(\xi ^{h}_{0},\chi ^{h}_{0})$ such that $\mathbb{E}[|\xi ^{h}_{0}|^{2}]+\mathbb{E}[|\chi ^{h}_{0}|^{2}]<\infty $}
\For {$n=0,\dots ,N-1$}
   \State {Simulate $Y^{(n+1)}\sim Y$ and $Z^{(n+1,1)},\dots ,Z^{(n+1,K)}\sim Z$ i.i.d.~and independent of $Y^{(n+1)}$}
   \State {$X_{h}^{(n+1)}\gets \frac{1}{K}\sum _{k=1}^{K}\varphi (Y^{(n+1)},Z^{(n+1,k)})$}
   \State {$\xi ^{h}_{n+1}\gets \xi ^{h}_{n}-\gamma _{n+1}H_{1}(\xi ^{h}_{n},X_{h}^{(n+1)})$}
   \State {$\chi ^{h}_{n+1}\gets \chi ^{h}_{n}-\frac{1}{n+1}H_{2}(\chi ^{h}_{n},\xi ^{h}_{n},X_{h}^{(n+1)})$}
\EndFor
\State \Return {$(\xi ^{h}_{N},\chi ^{h}_{N})$}
\end{algorithmic}
\end{algorithm}

\subsection{Convergence analysis}
\label{sect.2.1}

We study here the NSA scheme \eqref{approximate:sgd:algorithm}. We first
analyse its bias, then prove an \text{$L^{2}(\mathbb{P})$-error} bound on
the iterates, after that analyse its complexity and eventually establish
a tuning method for the number of iterations with respect to the bias parameter
$h\in \mathcal{H}$ in order to achieve some prescribed error.
\\

For $h\in \overline{\mathcal{H}}:=\mathcal{H}\cup \{0\}$, we set
$\Theta _{h}:=\Argmin{V_{h}}$, assuming that the distribution function
$F_{X_{h}}$ of $X_{h}$ is continuous.

\begin{lemma}
\label{lmm:thetah->theta0}
Suppose that $\varphi (Y,Z)\in L^{1}(\mathbb{P})$, that the distribution
function $F_{X_{h}}$ is continuous for any
$h\in \overline{\mathcal{H}}$ and that the sequence
$(X_{h})_{h\in \mathcal{H}}$ of random variables converges in distribution
to $X_{0}$ as $\mathcal{H}\ni h\downarrow 0$. Then
\begin{equation*}
\sup _{\xi \in \Theta _{h}} \dist (\xi ,\Theta _{0})\to 0
\qquad \text{as } \mathcal{H}\ni h\downarrow 0.
\end{equation*}
If additionally $(X_{h})_{h\in \mathcal{H}}$ converges to $X_{0}$ in
$L^{1}(\mathbb{P})$, then
\begin{equation*}
\chi ^{h}_{\star}\to \chi ^{0}_{\star }\qquad \text{as }
\mathcal{H}\ni h\downarrow 0.
\end{equation*}
\end{lemma}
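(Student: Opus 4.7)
The plan is to leverage the characterization $\Theta_h=\{\xi\in\mathbb{R}\colon F_{X_h}(\xi)=\alpha\}$ from Lemma~\ref{lmm:stochopt} and reduce both claims to extract-a-subsequence arguments based on the weak (resp.\ $L^1$) convergence of $X_h$ to $X_0$.

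For the first claim, I would first note that tightness of the family $(X_h)_{h\in\overline{\mathcal{H}}}$, which follows from convergence in distribution, combined with $\alpha\in(0,1)$, yields some $M>0$ such that $\Theta_h\subset[-M,M]$ for all sufficiently small $h\in\mathcal{H}$ as well as for $h=0$. Arguing by contradiction, suppose $\sup_{\xi\in\Theta_h}\dist(\xi,\Theta_0)\not\to0$; then there exist $h_n\downarrow0$ and $\xi_{h_n}\in\Theta_{h_n}$ with $\dist(\xi_{h_n},\Theta_0)\geq\delta>0$, and by the uniform bound one may extract a convergent subsequence $\xi_{h_n}\to\xi^*\in[-M,M]$. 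Since $F_{X_0}$ is continuous, P\'olya's theorem upgrades the pointwise convergence $F_{X_{h_n}}\to F_{X_0}$ (which holds everywhere by continuity of the limit) to uniform convergence on $\mathbb{R}$, whence
\[
|F_{X_0}(\xi_{h_n})-\alpha|=|F_{X_0}(\xi_{h_n})-F_{X_{h_n}}(\xi_{h_n})|\leq\sup_{\xi\in\mathbb{R}}|F_{X_0}(\xi)-F_{X_{h_n}}(\xi)|\xrightarrow[n\to\infty]{}0.
\]
Continuity of $F_{X_0}$ at $\xi^*$ then yields $F_{X_0}(\xi^*)=\alpha$, that is, $\xi^*\in\Theta_0$, contradicting $\dist(\xi^*,\Theta_0)\geq\delta$.

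For the second claim, pick any selector $\xi_\star^h\in\Theta_h$, so that $\chi_\star^h=V_h(\xi_\star^h)$. Along any sequence $h_n\downarrow0$, the first claim together with $\Theta_{h_n}\subset[-M,M]$ allows extracting a subsequence $\xi_\star^{h_{n_k}}\to\xi^*\in\Theta_0$. I would then decompose
\[
V_{h_{n_k}}(\xi_\star^{h_{n_k}})-V_0(\xi^*)=(\xi_\star^{h_{n_k}}-\xi^*)+\frac{1}{1-\alpha}\bigl(\mathbb{E}[(X_{h_{n_k}}-\xi_\star^{h_{n_k}})^+]-\mathbb{E}[(X_0-\xi^*)^+]\bigr).
\]
The first term tends to zero; in the second, the Lipschitz bound $|(x-a)^+-(y-a)^+|\leq|x-y|$ combined with the $L^1$ convergence of $X_h$ to $X_0$ handles the passage from $X_{h_{n_k}}$ to $X_0$, after which dominated convergence (with integrable dominator $|X_0|+M$) handles the replacement of $\xi_\star^{h_{n_k}}$ by $\xi^*$. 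Thus $\chi_\star^{h_{n_k}}\to V_0(\xi^*)=\chi_\star^0$, and since every subsequence admits a further subsequence with this same limit, $\chi_\star^h\to\chi_\star^0$.

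The main obstacle is the contradiction step in the first claim: evaluating the cdfs $F_{X_{h_n}}$ at moving arguments $\xi_{h_n}$ cannot be controlled by pointwise convergence alone, and the argument hinges on P\'olya's theorem, which is available here precisely because the limiting cdf $F_{X_0}$ is assumed continuous. The second part is then essentially a Lipschitz perturbation computation combined with dominated convergence, with the $L^1$ assumption doing the routine work.
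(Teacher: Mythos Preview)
Your proof is correct. For the first claim, your argument is essentially the same as the paper's: both rely on the uniform convergence of $F_{X_h}$ to $F_{X_0}$ (the paper cites the second Dini theorem, you cite P\'olya's theorem, which is the same result here), followed by an extraction argument showing any accumulation point of a selector from $\Theta_h$ lies in $\Theta_0$. The only cosmetic difference is that you obtain the uniform bound on $\Theta_h$ from tightness, whereas the paper deduces it from $V_0'(\bar\xi^h_\star)\to0$ together with the fact that $V_0'$ is bounded away from $0$ at infinity.

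For the second claim your route genuinely differs from the paper's. The paper uses a one-line argument: since $\chi^h_\star=\min V_h$ and $\chi^0_\star=\min V_0$,
\[
|\chi^h_\star-\chi^0_\star|=\big|\min_\xi V_h(\xi)-\min_\xi V_0(\xi)\big|\leq\sup_\xi|V_h(\xi)-V_0(\xi)|\leq\frac1{1-\alpha}\,\mathbb{E}[|X_h-X_0|],
\]
which goes to $0$ directly by the $L^1$ assumption. Your subsequence extraction combined with the Lipschitz bound and dominated convergence is perfectly valid but more laborious; the paper's estimate avoids any compactness argument and even yields a quantitative bound on $|\chi^h_\star-\chi^0_\star|$ in terms of $\|X_h-X_0\|_{L^1}$.
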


\begin{proof}
See Appendix~\ref{prf:thetah->theta0}.
\end{proof}

We introduce the following set of assumptions on the sequence
$(X_{h})_{h\in \overline{\mathcal{H}}}$.

\begin{assumption}
\label{asp:Xh->X0}
\begin{enumerate}[\rm(i)]
\item
\label{asp:Xh->X0-ii}
For any $h\in \mathcal{H}$, $F_{X_{h}}$ admits the first-order Taylor expansion
\begin{equation}
\label{neugl1}
F_{X_{h}}(\xi )-F_{X_{0}}(\xi )=v(\xi )h+w(\xi ,h)h, \qquad \xi \in
\mathbb{R},
\end{equation}
for some functions $v,w(\,\cdot \,,h):\mathbb{R}\to \mathbb{R}$ satisfying,
for any $\xi ^{0}_{\star}\in \Theta _{0}$,
\begin{equation}
\label{neugl2}
\int ^{\infty}_{\xi ^{0}_{\star}}v(\xi )\mathrm{d}\xi <\infty ,
\qquad \lim _{\mathcal{H}\ni h\downarrow 0}w(\xi ^{0}_{\star},h)=
\lim _{\mathcal{H}\ni h\downarrow 0}\int ^{\infty}_{\xi ^{0}_{\star}}w(
\xi ,h)\mathrm{d}\xi =0.
\end{equation}
\item
\label{asp:Xh->X0-iii}
For any $h\in \overline{\mathcal{H}}$, the law of $X_{h}$ admits a continuous
density function $f_{X_{h}}$ with respect to Lebesgue measure. Moreover,
the sequence $(f_{X_{h}})_{h\in \mathcal{H}}$ of density functions converges
locally uniformly towards $f_{X_{0}}$.
\end{enumerate}
\end{assumption}

\begin{remark}
\label{rem2.3}
We refer to Giorgi et al.~\cite[Proposition~5.1(b)]{Giorgi2020} for a result
on the satisfaction of \eqref{neugl1}. The condition~\eqref{neugl2} reads
also
$w(\xi ^{0}_{\star},h)=\int _{\xi ^{0}_{\star}}^{\infty }w(\xi ,h)
\mathrm{d}\xi =\oo (1)$ as $\mathcal{H}\ni h\downarrow 0$, i.e.,
\begin{equation*}
\begin{aligned}
F_{X_{h}}(\xi ^{0}_{\star})-F_{X_{0}}(\xi ^{0}_{\star}) &=v(\xi ^{0}_{
\star})h+\oo (h),
\\
\int _{\xi ^{0}_{\star}}^{\infty}\big(F_{X_{h}}(\xi )-F_{X_{0}}(\xi )
\big)\mathrm{d}\xi &=h\int _{\xi ^{0}_{\star}}^{\infty }v(\xi )
\mathrm{d}\xi +\oo (h).
\end{aligned}
\end{equation*}
Assumption~\ref{asp:Xh->X0}(\ref{asp:Xh->X0-iii}) is a weakened postulate
in comparison with the second part of \cite[Proposition~5.1(a)]{Giorgi2020}.
\end{remark}

The result below quantifies the weak error implied by approximating the
unbiased problem with the biased one, in the form of a first-order expansion
in the bias parameter~$h$ of the error between
$(\xi ^{h}_{\star},\chi ^{h}_{\star})$ and
$(\xi ^{0}_{\star},\chi ^{0}_{\star})$.

\begin{proposition}
\label{prp:bias-cv}
Suppose that $\varphi (Y,Z)\in L^{1}(\mathbb{P})$, Assumption~\ref{asp:Xh->X0}
holds and the density function $f_{X_{0}}$ is strictly positive. Then as
$\mathcal{H}\ni h\downarrow 0$, we have for any
$\xi ^{h}_{\star}\in \Theta _{h}$ that
\begin{equation*}
\xi ^{h}_{\star}-\xi ^{0}_{\star }=-
\frac{v(\xi ^{0}_{\star})}{f_{X_{0}}(\xi ^{0}_{\star})}h+\oo (h) ,
\qquad \chi ^{h}_{\star}-\chi ^{0}_{\star }=-h\int _{\xi ^{0}_{\star}}^{
\infty }\frac{v(\xi )}{1-\alpha}\mathrm{d}\xi +\oo (h).
\end{equation*}
\end{proposition}

\begin{proof}
See Appendix~\ref{prf:bias-cv}.
\end{proof}

\begin{assumption}
\label{asp:misc}
\begin{enumerate}[\rm(i)]
\item
\label{asp:misc-iii}
For any $R>0$,
\begin{equation*}
\inf _{
\substack{h\in \overline{\mathcal{H}}\\\xi \in B(\xi ^{0}_{\star},R)}}{f_{X_{h}}(
\xi )}>0.
\end{equation*}

\item
\label{asp:misc-iv}
The density functions $(f_{X_{h}})_{h\in \overline{\mathcal{H}}}$ are uniformly
bounded and uniformly Lip\-schitz, i.e.,
\begin{equation*}
\sup _{h\in \overline{\mathcal{H}}}{(\|f_{X_{h}}\|_{\infty}+[f_{X_{h}}]_{
{\mathrm{Lip}}})}<\infty .
\end{equation*}
\end{enumerate}
\end{assumption}

\begin{remark}
\label{rem2.6}
Assumption~\ref{asp:misc}(\ref{asp:misc-iii}) is natural in view of Assumption~\ref{asp:Xh->X0}(\ref{asp:Xh->X0-iii})
if one assumes that the distribution function $F_{X_{0}}$ is strictly increasing.
Assumption~\ref{asp:misc}(\ref{asp:misc-iv}) is in line with Assumption~\ref{asp:Xh->X0}(\ref{asp:Xh->X0-iii}).
\end{remark}

Henceforth, we set $k_{\alpha}:=1\vee \frac{\alpha}{1-\alpha}$. Using the
definition of $(\bar\lambda _{h,q})_{h\in \overline{\mathcal{H}}}$ in Lemma~\ref{lmm:lyapunov}(\ref{lmm:lyapunov-ii}),
we let, for any positive integer $q$,
\begin{equation}
\label{eq:lambda}
\bar\lambda _{q}:=\inf _{h\in \overline{\mathcal{H}}}{\bar\lambda _{h,q}}.
\end{equation}

\begin{theorem}
\label{thm:variance-cv}
\begin{enumerate}[\rm(i)]
\item\label{thm:variance-cv:i}
Suppose that $\varphi (Y,Z)\in L^{2}(\mathbb{P})$, Assumptions~\ref{asp:Xh->X0}
and~\ref{asp:misc} hold and
\begin{equation*}
\sup _{h\in \overline{\mathcal{H}}}{\mathbb{E}\bigg[|\xi ^{h}_{0}|^{2}
\exp \bigg(\frac{2}{1-\alpha}k_{\alpha }\sup _{h'\in
\overline{\mathcal{H}}}\|f_{X_{h'}}\|_{\infty}|\xi _{0}^{h}|\bigg)
\bigg]}+\sup _{h\in \overline{\mathcal{H}}}\mathbb{E}[|\chi _{0}^{h}|]<
\infty .
\end{equation*}
If $\gamma _{n}=\gamma _{1}n^{-\beta}$, $\beta \in (0,1]$, with
$\bar\lambda _{1}\gamma _{1}>1 $ if $\beta =1$, then for any
$h\in \mathcal{H}$ and any positive integer $n$,
\begin{equation}
\label{uniform:L2:bound:var:alg}
\mathbb{E}[(\xi ^{h}_{n}-\xi ^{h}_{\star})^{2}]\leq \bar{K}^{\beta}_{h,2}
\gamma _{n}
\end{equation}
for some constants $(\bar{K}^{\beta}_{h,2})_{h\in \mathcal{H}}$ such that
$\sup _{h\in \mathcal{H}}{\bar{K}_{h,2}^{\beta}}<\infty $. Moreover,
\begin{equation*}
\mathbb{E}[|\chi ^{h}_{n}-\chi ^{h}_{\star}|]\leq
\frac{C_{h}^{\beta}}{n^{\frac{1}{2}\wedge \beta}}
\end{equation*}
for some constants $(C_{h}^{\beta})_{h\in \mathcal{H}}$ such that
$\sup _{h\in \mathcal{H}}{C_{h}^{\beta}}<\infty $.
\item
\label{thm:variance-cv:ii}
Assume further that
\begin{equation*}
\sup _{h\in \overline{\mathcal{H}}}{\mathbb{E}\bigg[|\xi ^{h}_{0}|^{4}
\exp \bigg(\frac{4}{1-\alpha}k_{\alpha }\sup _{h'\in
\overline{\mathcal{H}}}\|f_{X_{h'}}\|_{\infty}|\xi _{0}^{h}|\bigg)
\bigg]}+\sup _{h\in \overline{\mathcal{H}}}\mathbb{E}[(\chi _{0}^{h})^{2}]<
\infty .
\end{equation*}
If $\gamma _{n}=\gamma _{1}n^{-\beta}$, $\beta \in (0,1]$, with
$\bar\lambda _{2}\gamma _{1}>2 $ when $\beta =1$, then for any
$h\in \mathcal{H}$ and any positive integer $n$,
\begin{equation}
\label{uniform:L4:bound:var:alg}
\mathbb{E}[(\xi ^{h}_{n}-\xi ^{h}_{\star})^{4}]\leq \bar{K}^{\beta}_{h,4}
\gamma _{n}^{2}
\end{equation}
for some constants $(\bar{K}^{\beta}_{h,4})_{h\in \mathcal{H}}$ satisfying
$\sup _{h\in \mathcal{H}}{\bar{K}_{h,4}^{\beta}}<\infty $. Moreover,
\begin{equation*}
\mathbb{E}[(\chi ^{h}_{n}-\chi ^{h}_{\star})^{2}]\leq
\frac{\bar{C}_{h}^{\beta}}{n^{1\wedge 2\beta}}
\end{equation*}
for some constants $(\bar{C}_{h}^{\beta})_{h\in \mathcal{H}}$ such that
$\sup _{h\in \mathcal{H}}{\bar{C}_{h}^{\beta}}<\infty $.
\end{enumerate}
\end{theorem}

\begin{proof}
See Appendix~\ref{prf:variance-cv}.
\end{proof}

\subsection{Complexity analysis}
\label{sec2.2}

Proposition~\ref{prp:bias-cv} and Theorem~\ref{thm:variance-cv}(\ref{thm:variance-cv:i})
suggest a heuristic to balance the bias parameter $h\in \mathcal{H}$ with
respect to the number $n$ of steps in
\eqref{approximate:sgd:algorithm} in order to achieve a prescribed accuracy
$\varepsilon \in (0,1)$.

\begin{theorem}
\label{thm:cost:nsa}
Let $\varepsilon \in (0,1)$ be a prescribed accuracy. Within the framework
of Theorem~\ref{thm:variance-cv}(\ref{thm:variance-cv:i}), setting
\begin{equation*}
h=\frac{1}{\lceil \varepsilon ^{-1}\rceil} \qquad \text{and}\qquad n=
\lceil \varepsilon ^{-\frac{2}{\beta}} \rceil
\end{equation*}
yields a global $L^{1}(\mathbb{P})$-error of the order
$\varepsilon $. The corresponding computational cost satisfies
\begin{equation*}
\Cost _{{\mathrm{NSA}}}\leq Cnh^{-1}\leq C\varepsilon ^{-
\frac{2}{\beta}-1} \qquad \text{as } \varepsilon \downarrow 0
\end{equation*}
for some constant $C$ that may change from one occurrence to the next,
but is independent of~$\varepsilon $. The minimal computational cost is
of the order~$\varepsilon ^{-3}$ and is attained for $\beta = 1$, i.e.,
$\gamma _{n} = \gamma _{1} n^{-1}$, under the restriction
$\bar{\lambda}_{1}\gamma _{1} > 1$.
\end{theorem}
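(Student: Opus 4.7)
The plan is to split the global $L^1$ error into a statistical and a bias component, balance both against $\varepsilon$, and then read off the per-iteration cost from Algorithm~\ref{alg:nested-sa}.

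First, by the triangle inequality,
\begin{equation*}
\mathbb{E}[|\xi^h_n - \xi^0_\star|] \leq \mathbb{E}[|\xi^h_n - \xi^h_\star|] + |\xi^h_\star - \xi^0_\star|,
\end{equation*}
and similarly for $\chi^h_n - \chi^0_\star$. I would invoke Theorem~\ref{thm:variance-cv}(\ref{thm:variance-cv:i}) together with Jensen's inequality to bound the statistical part $\mathbb{E}[|\xi^h_n - \xi^h_\star|]$ by a constant multiple of $\gamma_n^{1/2} \lesssim n^{-\beta/2}$ uniformly in $h \in \mathcal{H}$, and $\mathbb{E}[|\chi^h_n - \chi^h_\star|]$ by a constant multiple of $n^{-1/2 \wedge \beta}$, again uniformly in $h$ (this uniformity being exactly what Theorem~\ref{thm:variance-cv}(\ref{thm:variance-cv:i}) delivers via the supremum bounds on $\bar K_{h,2}^\beta$ and $C_h^\beta$). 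Proposition~\ref{prp:bias-cv} then yields the bias estimates $|\xi^h_\star - \xi^0_\star| = \OO(h)$ and $|\chi^h_\star - \chi^0_\star| = \OO(h)$.

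Next I balance the two sources of error. Taking $n = \lceil \varepsilon^{-2/\beta}\rceil$ enforces $n^{-\beta/2} \leq \varepsilon$, which controls the $\xi$-variance; since $1/2 \wedge \beta \geq \beta/2$ on $(0,1]$, the $\chi$-variance rate $n^{-1/2 \wedge \beta}$ is at worst $n^{-\beta/2} \leq \varepsilon$, so the $\xi$-component is the binding constraint for the choice of $n$. Choosing $h = 1/\lceil\varepsilon^{-1}\rceil$ then ensures that both $\OO(h)$ biases are of order $\varepsilon$. Altogether, $\mathbb{E}[|\xi^h_n - \xi^0_\star|]$ and $\mathbb{E}[|\chi^h_n - \chi^0_\star|]$ are both of order $\varepsilon$, which is the claimed global $L^1$ error.

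Finally, I read off the cost directly from Algorithm~\ref{alg:nested-sa}: each outer iteration $n \mapsto n+1$ requires $K = 1/h$ evaluations of $\varphi$ to construct $X_h^{(n+1)}$, so the total cost is bounded by a fixed constant times $n h^{-1}$. Under the prescribed parameters, $n h^{-1} \sim \varepsilon^{-2/\beta}\varepsilon^{-1} = \varepsilon^{-2/\beta - 1}$ as $\varepsilon \downarrow 0$. The exponent $2/\beta + 1$ is strictly decreasing in $\beta$ on $(0,1]$, so it is minimized at $\beta = 1$ (subject to the constraint $\bar\lambda_1 \gamma_1 > 1$ required by Theorem~\ref{thm:variance-cv}), yielding the announced complexity of order $\varepsilon^{-3}$. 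The argument is essentially bookkeeping; the only point that requires care is the verification that the $\chi$-variance does not bind the choice of $n$, which follows at once from the elementary inequality $1/2 \wedge \beta \geq \beta/2$ on $(0,1]$.
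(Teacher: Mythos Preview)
Your proof is correct and follows essentially the same approach as the paper: split the global error into bias and statistical parts via the triangle inequality, invoke Proposition~\ref{prp:bias-cv} for the $\OO(h)$ bias and Theorem~\ref{thm:variance-cv}(\ref{thm:variance-cv:i}) (with Jensen for the $\xi$-component) for the statistical error, then balance and read off the cost. Your explicit check that $1/2\wedge\beta\geq\beta/2$ so that the $\chi$-variance does not bind the choice of $n$ is a nice detail the paper leaves implicit.
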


\begin{proof}
Following Proposition~\ref{prp:bias-cv} and Theorem~\ref{thm:variance-cv}(\ref{thm:variance-cv:i}),
if $\gamma _{n}=\gamma _{1}n^{-\beta}$, $\beta \in (0,1]$, with
$\bar\lambda _{1}\gamma _{1}>1 $ if $\beta =1$, then there exists
$C>0$ such that for any $h\in \mathcal{H}$ and any positive integer
$n$,
\begin{equation*}
\mathbb{E}[|\xi ^{h}_{n}-\xi ^{0}_{\star}|]\leq C(h+n^{-
\frac{\beta}{2}}) ,\qquad \mathbb{E}[|\chi ^{h}_{n}-\chi ^{0}_{\star}|]
\leq C(h+n^{-\frac{1}{2}\wedge \beta}).
\end{equation*}
The assertions follow easily from these upper bounds.
\end{proof}

\begin{remark}
\label{rem2.9}
Barrera et al.~\cite[Algorithm~1]{barrera:hal-01710394} differs from our
NSA scheme in two ways. First, they use a single-time-scale scheme for
both the VaR and ES. We use in contrast a two-time-scale scheme, with a
slower and more precise VaR component (independent from the ES) and a faster
ES component. Second, the $n$th iteration of their NSA algorithm uses a
bias parameter $h_{n}$ dependent upon $n$, where $(h_{n})_{n\geq 1}$ tends
to~$0$ as $n\to \infty $, hence building iterates such that
$(\xi ^{h_{n}}_{n},\chi ^{h_{n}}_{n})\to (\xi ^{0}_{\star},\chi ^{0}_{
\star})$ as $n\to \infty $. However, their approach results in a substantially
increased complexity. Indeed, in view of~\cite[Theorem~3.2]{barrera:hal-01710394},
the error is of the order
$\sqrt{\gamma _{n}}=\gamma _{1}^{1/2}n^{-\beta /2}$,
$\beta \in (\frac{1}{2},1]$, $n\geq 1$. For a prescribed accuracy
$\varepsilon >0$, one has to choose
$n=\lceil \varepsilon ^{-2/\beta}\rceil $. In~\cite[Sect.~3.2]{barrera:hal-01710394},
the authors recommend selecting $h_{n}=n^{-\beta '}$,
$\beta '>\beta $, $n\geq 1$. This results in a complexity of
$\sum _{k=1}^{n}h_{k}^{-1}\leq C\varepsilon ^{-2(1+\beta ')/\beta}$, which
is optimal when $\beta '\to \beta =1$, yielding a complexity of the order
$\varepsilon ^{-4}$. Our NSA algorithm
\eqref{approximate:sgd:algorithm}, in contrast, has an optimal complexity
of the order $\varepsilon ^{-3}$.
\end{remark}

The next section is devoted to the MLSA scheme which combines multiple
paired NSA estimators obtained through a geometric progression of bias
parameters in order to reduce the complexity.

\section{Multilevel stochastic approximation}
\label{sec:mlsa}

We first explain the construction of the multilevel SA algorithm to approximate
the VaR and ES; then we provide an analysis thereof from both convergence
and complexity viewpoints.
\\

Let $h_{0}=\frac{1}{K}\in \mathcal{H}$ be a bias parameter and
$M \geq 2$, $L$ two positive integers. $L$~is called the number of levels.
Consider the geometric progression of bias parameters
\begin{equation*}
h_{\ell}=\frac{h_{0}}{M^{\ell}}=\frac{1}{KM^{\ell}}, \qquad \ell \in
\{0,\dots ,L\}.
\end{equation*}
Assuming that the distribution function of $X_{h_{\ell}}$ is continuous
and strictly increasing for any \text{${\ell \in \{0,\dots ,L\}}$,} we let
$(\xi ^{h_{\ell}}_{\star},\chi ^{h_{\ell}}_{\star})=(\argmin{V_{h_{
\ell}}},\min{V_{h_{\ell}}})$, $\ell \in \{0,\dots ,L\}$. We thus have the
telescopic decompositions
\begin{equation*}
\xi ^{h_{L}}_{\star }=\xi ^{h_{0}}_{\star}+\sum _{\ell =1}^{L}(\xi ^{h_{
\ell}}_{\star}-\xi ^{h_{\ell -1}}_{\star}), \qquad \chi ^{h_{L}}_{
\star }=\chi ^{h_{0}}_{\star}+\sum _{\ell =1}^{L}(\chi ^{h_{\ell}}_{
\star}-\chi ^{h_{\ell -1}}_{\star}).
\end{equation*}

While the idea of NSA is to put the entirety of the available computational
power behind simulating the quantities on the left-hand sides above, MLSA
rather simulates the telescopic decompositions on the right-hand sides.
It starts by simulating the level-$0$ terms $\xi ^{h_{0}}_{\star}$ and
$\chi ^{h_{0}}_{\star}$ which are, although of high bias, very fast to
estimate. It then simulates independent incremental error corrections
$(\xi ^{h_{\ell}}_{\star}-\xi ^{h_{\ell -1}}_{\star},\chi ^{h_{\ell}}_{
\star}-\chi ^{h_{\ell -1}}_{\star})_{1\leq \ell \leq L}$ to the initial
simulations.
\\

In Sect.~\ref{sec:nested-sa}, we saw how NSA can approximate each pair
$(\xi ^{h_{\ell}}_{\star},\chi ^{h_{\ell}}_{\star})$,
$\ell \in \{0,\dots ,L\}$. Let
$\mathbf{N}=(N_{0},\dots ,N_{L})\in \mathbb{N}^{L+1}$. Following Frikha~\cite{10.1214/15-AAP1109},
we define the multilevel~SA estimators
$\xi ^{{\mathrm{ML}}}_{\mathbf{N}}$ and
$\chi ^{{\mathrm{ML}}}_{\mathbf{N}}$ of $\xi ^{h_{L}}_{\star}$ and
$\chi ^{h_{L}}_{\star}$ by
\begin{equation}
\label{eq:ml}
\begin{cases}
\displaystyle \xi ^{{\mathrm{ML}}}_{\mathbf{N}} =\xi ^{h_{0}}_{N_{0}}+
\sum _{\ell =1}^{L}(\xi ^{h_{\ell}}_{N_{\ell}}-\xi ^{h_{\ell -1}}_{N_{
\ell}}),
\\
\displaystyle \chi ^{{\mathrm{ML}}}_{\mathbf{N}} =\chi ^{h_{0}}_{N_{0}}+
\sum _{\ell =1}^{L}(\chi ^{h_{\ell}}_{N_{\ell}}-\chi ^{h_{\ell -1}}_{N_{
\ell}}),
\end{cases}
\end{equation}
where at any level $\ell \in \{0,\dots ,L\}$, the initialisations
$(\xi ^{h_{\ell}}_{0},\chi ^{h_{\ell}}_{0})$ are generated such that
$\mathbb{E}[|\xi ^{h_{\ell}}_{0}|^{2}]+\mathbb{E}[|\chi ^{h_{\ell}}_{0}|^{2}]<
\infty $ and the iterates
$(\xi ^{h_{\ell}}_{n},\chi ^{h_{\ell}}_{n})_{n\geq 1}$ are computed using
the NSA~scheme \eqref{approximate:sgd:algorithm}. We stress that for any
fixed level $\ell \in \{1,\dots ,L\}$, the random variables
$(X^{(n)}_{h_{\ell -1}},X^{(n)}_{h_{\ell}})_{1\leq n\leq N_{\ell}}$ used
to compute
$(\xi ^{h_{\ell -1}}_{N_{\ell}},\xi ^{h_{\ell}}_{N_{\ell}})$ and
$(\chi _{N_{\ell}}^{h_{\ell -1}}, \chi _{N_{\ell}}^{h_{\ell}}) $ are i.i.d.~with
the same law as $(X_{h_{\ell -1}},X_{h_{\ell}})$, where
\begin{equation*}
X_{h_{\ell -1}}=\frac{1}{KM^{\ell -1}}\sum _{k=1}^{KM^{\ell -1}}
\varphi (Y,Z^{(k)})
\end{equation*}
and $X_{h_{\ell}}$ is obtained from $X_{h_{\ell -1}}$ via
\begin{equation*}
X_{h_{\ell}}=\frac{1}{M}X_{h_{\ell -1}}+\frac{1}{KM^{\ell}}\sum _{k=KM^{
\ell -1}+1}^{KM^{\ell}}\varphi (Y,Z^{(k)}).
\end{equation*}

Algorithm~\ref{alg:mlsa} summarises this process.

\begin{algorithm}[H]
\caption{Multilevel SA for estimating $(\mathrm{VaR},\mathrm{ES})$}
\label{alg:mlsa}
\begin{algorithmic}[1]
\Require {A number $L\geq 1$ of levels, a bias parameter $h_{0}=\frac{1}{K}\in \mathcal{H}$, a geometric step size $M\geq 2$, strictly positive integers $N_{0},\dots ,N_{L}$, a strictly positive nonincreasing sequence
$(\gamma _{n})_{n\geq 1}$  such that $\sum_{n=1}^\infty\gamma_n=\infty$ and $\lim_{n\to\infty}\gamma_n=0$.}
\For {$\ell =0, \dots , L$}
   \State {Set $h_{\ell}\gets \frac{h_{0}}{M^{\ell}}$}
   \For {$j=(\ell -1)^{+}, \dots , \ell $}
      \State {Choose $(\xi ^{h_{j}}_{0},\chi ^{h_{j}}_{0})$ such that $\mathbb{E}[|\xi ^{h_{j}}_{0}|^{2}]+\mathbb{E}[|\chi ^{h_{j}}_{0}|^{2}]<\infty $}
   \EndFor
   \For {$n=0, \dots , N_{\ell}-1$}
      \State {Simulate $Y^{(n+1)}\sim Y$}
      \State {Simulate $Z^{(n+1,1)},\dots ,Z^{(n+1,KM^{\ell})} \sim Z$  i.i.d.~and independent from $Y^{(n+1)}$}
      \If {$\ell =0$}
         \State {$X_{h_{0}}^{(n+1)}\gets \frac{1}{K}\sum _{k=1}^{K}\varphi (Y^{(n+1)},Z^{(n+1,k)})$}
      \Else
         \State {$X_{h_{\ell -1}}^{(n+1)}\gets \frac{1}{KM^{\ell -1}}\sum _{k=1}^{KM^{\ell -1}}\varphi (Y^{(n+1)},Z^{(n+1,k)})$}
         \State {$X_{h_{\ell}}^{(n+1)}\gets \frac{1}{M}X^{(n+1)}_{h_{\ell -1}}+\frac{1}{KM^{\ell}}\sum _{k=KM^{\ell -1}+1}^{KM^{\ell}}\varphi (Y^{(n+1)},Z^{(n+1,k)})$}
      \EndIf
      \For {$j=(\ell -1)^{+}, \dots ,\ell $}
         \State {$\xi ^{h_{j}}_{n+1}\gets \xi ^{h_{j}}_{n}-\gamma _{n+1}H_{1}(\xi ^{h_{j}}_{n},X_{h_{j}}^{(n+1)})$}
         \State {$\chi ^{h_{j}}_{n+1}\gets \chi ^{h_{j}}_{n}-\frac{1}{n+1}H_{2}(\chi ^{h_{j}}_{n},\xi ^{h_{j}}_{n},X_{h_{j}}^{(n+1)})$}
      \EndFor
   \EndFor
\EndFor
\State {$\xi ^{\text{\rm ML}}_{\mathbf{N}}\gets \xi ^{h_{0}}_{N_{0}}+\sum _{\ell =1}^{L} (\xi ^{h_{\ell}}_{N_{\ell}}-\xi ^{h_{\ell -1}}_{N_{\ell}})$}
\State {$\chi ^{\text{\rm ML}}_{\mathbf{N}}\gets \chi ^{h_{0}}_{N_{0}}+\sum _{\ell =1}^{L} (\chi ^{h_{\ell}}_{N_{\ell}}-\chi ^{h_{\ell -1}}_{N_{\ell}})$}
\State \Return {$(\xi ^{\text{\rm ML}}_{\mathbf{N}},\chi ^{\text{\rm ML}}_{\mathbf{N}})$}
\end{algorithmic}
\end{algorithm}

\begin{remark}
\label{rem3.1}
Intuitively, the larger $\ell $, the closer the random variables
$X_{h_{\ell -1}}$ and $X_{h_{\ell}}$ are to $X_{0}$ and the iterations
$N_{\ell}$ are required at the level $\ell $ of the MLSA scheme~\eqref{eq:ml}
to achieve a high accuracy. Hence $N_{0}\geq \cdots \geq N_{L}$.
\end{remark}

\subsection{Convergence analysis}
\label{sect.3.1}

The global error between the multilevel estimator
$(\xi ^{{\mathrm{ML}}}_{\mathbf{N}},\chi ^{{\mathrm{ML}}}_{\mathbf{N}})$
and its target $(\xi ^{0}_{\star},\chi ^{0}_{\star})$ can be decomposed
into a sum of a statistical error and a bias error via
\begin{align}
\xi ^{{\mathrm{ML}}}_{\mathbf{N}}-\xi ^{0}_{\star }&=(\xi ^{
{\mathrm{ML}}}_{\mathbf{N}}-\xi ^{h_{L}}_{\star})+(\xi ^{h_{L}}_{\star}-
\xi ^{0}_{\star}),
\label{var:error}
\\
\chi ^{{\mathrm{ML}}}_{\mathbf{N}}-\chi ^{0}_{\star }&=(\chi ^{
{\mathrm{ML}}}_{\mathbf{N}}-\chi ^{h_{L}}_{\star})+(\chi ^{h_{L}}_{
\star}-\chi ^{0}_{\star}).
\label{es:error}
\end{align}

In the ensuing analysis, we quantify each error appearing in the decompositions
above in terms of the parameters of \eqref{eq:ml}. We then propose an optimal
choice for $L$ and $N_{0}, \dots , N_{L}$ to achieve a given prescribed
accuracy.

\begin{proposition}
\label{prop:local:strong:error:indicator:func}
\begin{enumerate}[\rm(i)]
\item
\label{prop:local:strong:error:indicator:func-i}
Assume that the real-valued random variables $X_{h}$ admit density functions
$f_{X_{h}}$ that are bounded uniformly in
$h\in \overline{\mathcal{H}}$.
\begin{enumerate}[\rm a.]
\item
\label{prop:local:strong:error:indicator:func-ia}
If
\begin{equation}
\label{assumption:finite:Lp:moment}
\mathbb{E}\big[\big|\varphi (Y,Z)-\mathbb{E}[\varphi (Y,Z)\,|\,Y]
\big|^{p_{\star}}\big]<\infty \qquad
\text{holds for some }p_{\star}>1,
\end{equation}
then for any $h,h'\in \overline{\mathcal{H}}$ such that
$0\leq h\leq h'$ and any $\xi \in \mathbb{R}$,
\begin{equation*}
\mathbb{E}[|\mathds1_{\{X_{h}>\xi \}}-\mathds1_{\{X_{h'}>\xi \}}|]
\leq C(h'-h)^{\frac{p_{\star}}{2(p_{\star}+1)}},
\end{equation*}
where
\begin{equation*}
C:= B_{p_{\star}}\mathbb{E}\big[\big|\varphi (Y, Z)-\mathbb{E}[
\varphi (Y, Z)\,|\,Y]\big|^{p_{\star}}\big]^{\frac{1}{p_{\star}+1}}
\Big(\sup _{h\in \overline{\mathcal{H}}}\|f_{X_{h}}\|_{\infty}\Big)^{
\frac{p_{\star}}{p_{\star}+1}}
\end{equation*}
with a strictly positive constant $B_{p_{\star}}$ that depends only upon
$p_{\star}$.
\item
\label{prop:local:strong:error:indicator:func-ib}
Assume there exists a nonnegative constant $C_{g}<\infty $ such that for
all~$u\in \mathbb{R}$,
\begin{equation}
\label{assumption:conditional:gaussian:concentration}
\mathbb{E}\Big[\exp \Big(u\big(\varphi (Y,Z)-\mathbb{E}[\varphi (Y,Z)
\,|\,Y]\big)\Big)\,\Big|\,Y\Big]\leq \mathrm{e} ^{C_{g}u^{2}}\qquad \Pas
\end{equation}
Then for any $h,h'\in \overline{\mathcal{H}}$ such that $0\leq h<h'$ and
any $\xi \in \mathbb{R}$,
\begin{equation*}
\mathbb{E}[|\mathds1_{\{X_{h}>\xi \}}-\mathds1_{\{X_{h'}>\xi \}}|]
\leq 2\sqrt{C_{g}(h'-h)}\Big(1+\sup _{h\in \overline{\mathcal{H}}}\|f_{X_{h}}
\|_{\infty}\sqrt{2\big|\ln{\big(C_{g}(h'-h)\big)}\big|}\Big).
\end{equation*}
\end{enumerate}
\item
\label{prop:local:strong:error:indicator:func-ii}
Let
$G_{\ell}:= h_{\ell}^{-\frac{1}{2}}(X_{h_{\ell}}-X_{h_{\ell -1}})$ and
define
\begin{equation*}
F_{X_{h_{\ell -1}}|G_{\ell}=g} (x) := \mathbb{P}[X_{h_{\ell -1}}\leq x
\,|\, G_{\ell}=g],
\end{equation*}
for $g\in \supp (\mathbb{P}_{G_{\ell}})$, $\ell \geq 1$. Consider the sequence
$(K_{\ell})_{\ell \geq 1}$ of random variables given by~$K_{\ell}:= K_{
\ell}(G_{\ell})$, where
\begin{equation*}
K_{\ell}(g):=\sup _{x\neq y}
\frac{|F_{X_{h_{\ell -1}}|G_{\ell}=g}(x)-F_{X_{h_{\ell -1}}|G_{\ell}=g}(y)|}{|x-y|},
\qquad g\in \supp (\mathbb{P}_{G_{\ell}}), \ell \geq 1.
\end{equation*}
Assume that $(K_{\ell})_{\ell \geq 1}$ satisfies
\begin{equation}
\label{assump:unif:lipschitz:integrability:conditional:cdf}
\sup _{\ell \geq 1}\mathbb{E}[|G_{\ell}|K_{\ell}]<\infty .
\end{equation}
Then
\begin{equation*}
\sup _{\ell \geq 1,\xi \in \mathbb{R}} h_{\ell}^{-\frac{1}{2}}
\mathbb{E}[|\mathds1_{\{X_{h_{\ell}}>\xi \}}-\mathds1_{\{X_{h_{\ell -1}}>
\xi \}}|]<\infty .
\end{equation*}
\end{enumerate}
\end{proposition}

\begin{proof}
See Appendix~\ref{prf:local:strong:error:indicator:func}.
\end{proof}

\begin{remark}
\label{rem3.3}
The scenarios formulated in Proposition~\ref{prop:local:strong:error:indicator:func}
are ordered by strength. The scenario
\eqref{assumption:finite:Lp:moment} is already described in Barrera et
al.~\cite[Assumption~H7]{barrera:hal-01710394}, although for~$p_{
\star}>2$; our formulation also includes heavy-tailed setups where
$p_{\star}<2$. A~variant of the Gaussian concentration scenario
\eqref{assumption:conditional:gaussian:concentration} is stated in~\cite[Assumption~H6]{barrera:hal-01710394}.
According to Frikha and Menozzi~\cite[Sect.~1]{10.1214/ECP.v17-1952}, it
suffices that $\mathbb{E}[\exp (u_{0}\varphi (Y,Z)^{2})\,|\,Y]$ be bounded
for some $u_{0}>0$ for
\eqref{assumption:conditional:gaussian:concentration} to hold. Observe
that \eqref{assumption:finite:Lp:moment} follows from
\eqref{assumption:conditional:gaussian:concentration} for any~$p_{
\star}>1$ via a power series expansion of the exponential. The scenario
\eqref{assump:unif:lipschitz:integrability:conditional:cdf} is a weakening
of Haji-Ali et al.~\cite[Assumption~2.5]{doi:10.1137/21M1447064} (and Gordy
and Juneja~\cite[Assumption~1]{Gordy}). As shown in Sect.~\ref{ssec:mlsa-cost},
it describes a setup that is computationally optimal. Note that if
$(g,\ell )\mapsto F_{X_{h_{\ell -1}}|G_{\ell}=g}$ is uniformly Lipschitz,
then by \eqref{strong:error:diff:Xh} with $h=h_{\ell}$ and
$h'=h_{\ell -1}$,
\eqref{assump:unif:lipschitz:integrability:conditional:cdf} holds.
\end{remark}

\begin{assumption}
\label{asp:fh-f0}
There exist $C,\delta >0$ such that for any $h\in \mathcal{H}$ and any
compact set $\mathcal{K}\subseteq \mathbb{R}$,
\begin{equation*}
\sup _{\xi \in \mathcal{K}}{|f_{X_{h}}(\xi )-f_{X_{0}}(\xi )|}\leq Ch^{
\frac{1}{4}+\delta}.
\end{equation*}
\end{assumption}

\begin{remark}
\label{rem3.5}
Similarly to the second part of Assumption~\ref{asp:Xh->X0}(\ref{asp:Xh->X0-iii}),
Assumption~\ref{asp:fh-f0} is a weakened postulate in comparison with Giorgi
et al.~\cite[Proposition~5.1(a)]{Giorgi2020}.
\end{remark}

We state below our main result regarding the non-asymptotic squared statistical
error of Algorithm~\ref{alg:mlsa}.

\begin{theorem}
\label{thm:ml-variance-cv}
Suppose that $\varphi (Y,Z)\in L^{2}(\mathbb{P})$ and Assumptions~\ref{asp:Xh->X0},
\ref{asp:misc} and~\ref{asp:fh-f0} hold. Then if
$\gamma _{n}=\gamma _{1}n^{-\beta}$, $\beta \in (0,1]$, with
$\bar\lambda _{2}\gamma _{1}>2 $ if $\beta =1$, and if
\begin{equation*}
\sup _{h\in \overline{\mathcal{H}}}{\mathbb{E}\bigg[|\xi ^{h}_{0}|^{4}
\exp \bigg(\frac{4}{1-\alpha}k_{\alpha }\sup _{h'\in
\overline{\mathcal{H}}}\|f_{X_{h'}}\|_{\infty}|\xi _{0}^{h}|\bigg)
\bigg]}+\sup _{h\in \overline{\mathcal{H}}}\mathbb{E}[(\chi _{0}^{h})^{2}]<
\infty ,
\end{equation*}
there exists some constant $\bar{C}<\infty $ such that, for any positive integer
$L$ and any $\mathbf{N}=(N_{0},\dots ,N_{L})\in \mathbb{N}^{L+1}$, we have
\begin{equation}
\label{L2:norm:ML:VaR}
\mathbb{E}[(\xi ^{{\mathrm{ML}}}_{\mathbf{N}}-\xi _{\star}^{h_{L}})^{2}]
\leq \bar{C}\bigg(\gamma _{N_{0}}+\Big(\sum _{\ell =1}^{L}\gamma _{N_{\ell}}
\Big)^{2}+\sum _{\ell =1}^{L}\gamma _{N_{\ell}}^{\frac{3}{2}}+\sum _{
\ell =1}^{L}\gamma _{N_{\ell}}\epsilon (h_{\ell})\bigg)
\end{equation}
and
\begin{eqnarray}
\label{L2:norm:ML:CVaR}
&&\mathbb{E}[(\chi ^{{\mathrm{ML}}}_{\mathbf{N}}-\chi _{\star}^{h_{L}})^{2}]
\nonumber
\\
&&\leq \bar{C}\bigg(\frac{1}{N_{0}^{1\wedge 2\beta}}+\sum _{\ell =1}^{L}
\frac{h_{\ell}}{N_{\ell}}+\sum _{\ell =1}^{L}
\frac{\bar{\gamma}_{N_{\ell}}}{N_{\ell}}\epsilon (h_{\ell})+\sum _{
\ell =1}^{L}\frac{1}{N_{\ell}^{2}}\sum _{k=1}^{N_{\ell}}\gamma _{k}^{
\frac{3}{2}}+\Big(\sum _{\ell =1}^{L}\bar{\gamma}_{N_{\ell}}\Big)^{2}
\bigg),
\end{eqnarray}
where $\bar{\gamma}_{n}=\frac{1}{n}\sum _{k=1}^{n}\gamma _{k}$ and for
$h\in \mathcal{H}$,
\begin{equation}
\label{eq:eps(hl)}
\epsilon (h):=
\begin{cases}
h^{\frac{p_{*}}{2(1+p_{*})}} &\quad
\text{if \eqref{assumption:finite:Lp:moment} is satisfied,}
\\
h^{\frac{1}{2}}|\ln{h}|^{\frac{1}{2}} &\quad
\text{if \eqref{assumption:conditional:gaussian:concentration} is satisfied,}
\\
h^{\frac{1}{2}} &\quad
\text{if \eqref{assump:unif:lipschitz:integrability:conditional:cdf} is
satisfied.}
\end{cases}
\end{equation}
\end{theorem}

\begin{proof}
See Appendix~\ref{prf:thm:ml-variance-cv}.
\end{proof}

\subsection{Complexity analysis}
\label{ssec:mlsa-cost}

Considering the error decompositions \eqref{var:error} and
\eqref{es:error} for $h_{0}\in \mathcal{H}$ and $M\geq 2$, we rely on Theorem~\ref{thm:ml-variance-cv}
to find the optimal number $L\geq 1$ of levels and number
$N_{\ell}\geq 1$ of steps at each level $\ell \in \{0,\dots ,L\}$ of the
MLSA scheme \eqref{eq:ml} in order to achieve a prescribed accuracy
$\varepsilon \in (0,1)$.

\begin{proposition}
\label{prop3.7}
Under Assumption~\ref{asp:misc}(\ref{asp:misc-iii}), let
$\varepsilon \in (0,1)$ be some prescribed accuracy. If
$h_{0}>\varepsilon $, then setting
\begin{equation*}
L=\bigg\lceil \frac{\ln ({h_{0}\varepsilon ^{-1}})}{\ln{M}}
\bigg\rceil
\end{equation*}
achieves a bias error on the estimation of
$(\xi ^{0}_{\star},\chi ^{0}_{\star})$ of the order $\varepsilon $.
\end{proposition}

\begin{proof}
In view of Proposition~\ref{prp:bias-cv}, the bias error of the couple
of estimators
$(\xi ^{{\mathrm{ML}}}_{\mathbf{N}},\chi ^{{\mathrm{ML}}}_{\mathbf{N}})$
is of the order $h_{L}$. We thus select $L\geq 1$ such that
$h_{L}=\frac{h_{0}}{M^{L}}\leq \varepsilon $.
\end{proof}

\begin{lemma}
\label{lem3.8}
The computational cost of MLSA satisfies
\begin{equation*}
\Cost _{{\mathrm{MLSA}}}\leq C\sum _{\ell =0}^{L}
\frac{N_{\ell}}{h_{\ell}}.
\end{equation*}
\end{lemma}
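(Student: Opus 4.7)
The plan is to upper-bound the number of elementary operations performed by Algorithm~\ref{alg:mlsa}, level by level, and to observe that each level's dominant cost is governed by the number of inner Monte Carlo evaluations of $\varphi$, which scales like $1/h_\ell$ per outer iteration.

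First I would note that the outer loop over $\ell\in\{0,\dots,L\}$ is additive in cost, so it suffices to bound the cost of a single level $\ell$ and sum. Within level $\ell$, the work splits into: (a) sampling $Y^{(n+1)}$ and the inner $Z^{(n+1,k)}$'s, together with the evaluations of $\varphi(Y^{(n+1)},Z^{(n+1,k)})$ used to assemble $X_{h_{\ell-1}}^{(n+1)}$ and $X_{h_\ell}^{(n+1)}$ via~\eqref{eq:ml-Xhj}; and (b) the $O(1)$ arithmetic cost of the two-time-scale SA updates of $(\xi_n^{h_j},\chi_n^{h_j})$ for $j\in\{(\ell-1)^+,\dots,\ell\}$. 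For part (a), at level $\ell\geq 1$, generating the coupled pair $(X_{h_{\ell-1}}^{(n+1)},X_{h_\ell}^{(n+1)})$ requires exactly $KM^\ell = 1/h_\ell$ evaluations of $\varphi$, since $X_{h_\ell}^{(n+1)}$ reuses the first $KM^{\ell-1}$ inner samples of $X_{h_{\ell-1}}^{(n+1)}$ and only draws $KM^\ell-KM^{\ell-1}$ additional ones; at level $\ell=0$, it requires $K=1/h_0$ evaluations. Part (b) contributes an $O(1)$ cost per iteration that is absorbed into the same bound.

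Multiplying the per-iteration cost by the number $N_\ell$ of outer iterations at level $\ell$ yields a level-$\ell$ cost bounded by $C(N_\ell/h_\ell)$ for some constant $C$ accounting for the unit cost of one $\varphi$ evaluation plus the $O(1)$ SA overhead. Summing over $\ell\in\{0,\dots,L\}$ delivers
\begin{equation*}
\Cost_\text{\tiny\rm MLSA}\leq C\sum_{\ell=0}^L\frac{N_\ell}{h_\ell},
\end{equation*}
as claimed.

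There is no substantial obstacle here; the only small subtlety is making sure the coupling in~\eqref{eq:ml-Xhj} is exploited so that the inner cost at level $\ell$ is $1/h_\ell$ and not $1/h_\ell+1/h_{\ell-1}$ (the latter would give the same asymptotic bound but with a worse constant). Everything else is a direct operation count on Algorithm~\ref{alg:mlsa}.
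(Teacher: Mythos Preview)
Your proposal is correct and takes essentially the same approach as the paper, which simply states that the bound ``follows straight from the scheme~\eqref{eq:ml}.'' You have merely spelled out the operation count that the paper leaves implicit.
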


\begin{proof}
This follows directly from the scheme \eqref{eq:ml}.
\end{proof}

\begin{theorem}
\label{thm:cost:mlsa}
Let $\varepsilon \in (0,1)$ be some prescribed accuracy. Then under the
same assumptions as in Theorem~\ref{thm:ml-variance-cv}, we have:
\begin{enumerate}[\rm(i)]
\item
\label{thm:cost:mlsa:var}
\textup{(VaR-focused parametrisation)} Set
\begin{equation*}
N_{\ell}=\bigg\lceil (\bar{C}\gamma _{1})^{\frac{1}{\beta}}\varepsilon ^{-
\frac{2}{\beta}}\bigg(\sum _{\ell '=0}^{L}h_{\ell '}^{-
\frac{\beta}{1+\beta}}\epsilon (h_{\ell '})^{\frac{1}{1+\beta}}\bigg)^{
\frac{1}{\beta }}h_{\ell}^{\frac{1}{1+\beta}}\epsilon (h_{\ell})^{
\frac{1}{1+\beta}}\bigg\rceil , \qquad 0\leq \ell \leq L,
\end{equation*}
i.e.,
\begin{eqnarray*}
N_{\ell }=
\begin{cases}
\lceil (\bar{C}\gamma _{1})^{\frac{1}{\beta}}\varepsilon ^{-\frac{2}{\beta}}h_{
\ell}^{\frac{1}{1+\beta}(1+\frac{p_{*}}{2(1+p_{*})})} (\sum _{\ell '=0}^{L}h_{
\ell '}^{\frac{1}{1+\beta}(-\beta +\frac{p_{*}}{2(1+p_{*})})})^{
\frac{1}{\beta}}\rceil &\!\!\!\!
\text{under \eqref{assumption:finite:Lp:moment},}
\\
\lceil (\bar{C}\gamma _{1})^{\frac{1}{\beta }}\varepsilon ^{-
\frac{2}{\beta}}h_{\ell}^{\frac{3}{2(1+\beta )}}|\ln{h_{\ell}}|^{
\frac{1}{2(1+\beta )}} (\sum _{\ell '=0}^{L}h_{\ell '}^{
\frac{1-2\beta}{2(1+\beta )}}|\ln{h_{\ell '}}|^{\frac{1}{2(1+\beta )}})^{
\frac{1}{\beta}}\rceil &\!\!\!\!
\text{under \eqref{assumption:conditional:gaussian:concentration},}
\\
\lceil (\bar{C}\gamma _{1})^{\frac{1}{\beta}}\varepsilon ^{-\frac{2}{\beta}}h_{
\ell}^{\frac{3}{2(1+\beta )}} (\sum _{\ell '=0}^{L}h_{\ell '}^{
\frac{1-2\beta}{2(1+\beta )}})^{\frac{1}{\beta}}\rceil &\!\!\!\!
\text{under
\eqref{assump:unif:lipschitz:integrability:conditional:cdf},}
\end{cases}
\end{eqnarray*}
where $\bar{C}$ is the constant appearing in \eqref{L2:norm:ML:VaR}. Then MLSA
achieves a statistical error on the estimation of $\xi ^{0}_{\star}$ of
the order $\varepsilon $. The optimal computational cost of MLSA is achieved
when $\beta =1$, under the constraint
$\bar\lambda _{2}\gamma _{1}>2$, in which case
\begin{equation*}
\Cost _{{\mathrm{MLSA}}}^{{\mathrm{VaR}}} \leq C
\begin{cases}
\varepsilon ^{-3 +\frac{p_{*}}{2(1+p_{*})}} &\quad
\text{if \eqref{assumption:finite:Lp:moment} is satisfied,}
\\
\varepsilon ^{-\frac{5}{2}}|\ln{\varepsilon}|^{\frac{1}{2}} &\quad
\text{if \eqref{assumption:conditional:gaussian:concentration} is satisfied,}
\\
\varepsilon ^{-\frac{5}{2}} &\quad
\text{if \eqref{assump:unif:lipschitz:integrability:conditional:cdf} is
satisfied.}
\end{cases}
\end{equation*}

\item
\label{thm:cost:mlsa:es}
\textup{(ES-focused parametrisation)} Under the constraint
$\bar\lambda _{2}\gamma _{1}>2$, set $\beta =1$ and
\begin{equation*}
N_{\ell}=\lceil \bar{C}\varepsilon ^{-2}Lh_{\ell}\rceil
=\bigg\lceil \bar{C}\varepsilon ^{-2}\Big\lceil\frac{\ln ({h_{0}\varepsilon ^{-1}})}{\ln{M}}\Big\rceil h_{\ell}\bigg\rceil ,
\qquad 0\leq \ell \leq L,
\end{equation*}
where $\bar{C}$ is the constant appearing in \eqref{L2:norm:ML:CVaR}.
Then MLSA achieves a statistical error on the estimation of
$\chi ^{0}_{\star}$ of the order $\varepsilon $. The corresponding computational
cost satisfies

\begin{equation*}
\Cost _{{\mathrm{MLSA}}}^{{\mathrm{ES}}}\leq C\varepsilon ^{-2}|\ln{
\varepsilon}|^{2}.
\end{equation*}
\end{enumerate}
\end{theorem}

\begin{proof}
(\ref{thm:cost:mlsa:var})\
Similarly to the multilevel Monte Carlo algorithm
for the computation of the probability $\mathbb{P}[X_{0}>0]$ by Giles and
Haji-Ali~\cite{doi:10.1137/18M1173186}, it is expected that the leading
term in the global $L^{2}$-error \eqref{L2:norm:ML:VaR} is the last one,
namely $\sum _{\ell =1}^{L}\gamma _{N_{\ell}}\epsilon (h_{\ell})$. Following
this heuristic, in order to obtain the optimal values for
$N_{0},\dots ,N_{L}$, we minimise the complexity under a mean-squared-error
constraint, i.e., we solve
\begin{equation*}
\begin{cases}
\minimize &\sum _{\ell =0}^{L}N_{\ell }h_{\ell}^{-1}
\qquad \text{over ${N_{0},\dots ,N_{L}>0}$}
\\[3ex]
\subjectTo &\sum _{\ell =0}^{L}\gamma _{N_{\ell}}
\epsilon (h_{\ell})=\bar{C}^{-1}\varepsilon ^{2},
\end{cases}
\end{equation*}
where $\bar{C}$ is the constant appearing in \eqref{L2:norm:ML:VaR}.
We can easily check that with the above choices of $L$ and $N_{0},\dots ,N_{L}$, the
first terms
$\gamma _{N_{0}}+(\sum _{\ell =1}^{L}\gamma _{N_{\ell}})^{2}+\sum _{
\ell =1}^{L}\gamma _{N_{\ell}}^{\frac{3}{2}}$ in
\eqref{L2:norm:ML:VaR} are $\OO (\varepsilon ^{2})$ as
$\varepsilon \downarrow 0$. The related complexity computations are standard
and hence omitted.
\\

(\ref{thm:cost:mlsa:es})\
To parametrise the amounts
$N_{0},\dots ,N_{L}$ of iterations optimally to compute the~ES, we minimise
the complexity of MLSA while constraining the global $L^{2}$-error of the
multilevel ES estimator \eqref{L2:norm:ML:CVaR} to an order of
$\varepsilon ^{2}$. We presume that the leading term of the upper estimate
\eqref{L2:norm:ML:CVaR} is
$\sum _{\ell =1}^{L}\frac{h_{\ell}}{N_{\ell}}$, which we check a posteriori.
Hence we~solve
\begin{equation*}
\begin{cases}
\minimize &\displaystyle \sum _{\ell =0}^{L}N_{\ell }h_{\ell}^{-1}
\qquad \text{over ${N_{0},\dots ,N_{L}>0}$}
\\[3ex]
\subjectTo &\displaystyle \sum _{\ell =0}^{L} h_{\ell }N_{\ell}^{-1}=\bar{C}^{-1}
\varepsilon ^{2}.
\end{cases}
\end{equation*}

Under this parametrisation, we check that
$(\sum _{\ell =0}^{L}\bar{\gamma}_{N_{\ell}})^{2}=\OO (\varepsilon ^{2
\beta}|\ln{\varepsilon}|^{2(\mathds1_{\{\beta =1\}}-\beta )})$ as~$
\varepsilon \downarrow 0$. Hence in order to achieve a mean-squared error
of the order $\varepsilon ^{2}$, one has to choose $\beta =1$. One can
also verify that the remaining terms of the upper estimate of~\eqref{L2:norm:ML:CVaR},
namely
$\frac{1}{N_{0}}+\sum _{\ell =1}^{L}
\frac{\bar{\gamma}_{N_{\ell}}}{N_{\ell}}\epsilon (h_{\ell})+\sum _{
\ell =1}^{L}\frac{1}{N_{\ell}^{2}}\sum _{k=1}^{N_{\ell}}\gamma _{k}^{
\frac{3}{2}}$, are of the order~$\varepsilon ^{2}$.
\end{proof}

For the VaR estimation, in the best case scenario where
\eqref{assump:unif:lipschitz:integrability:conditional:cdf} is satisfied,
it is possible for a prescribed accuracy $\varepsilon $ to estimate the
VaR with a complexity of $\OO (\varepsilon ^{-\frac{5}{2}})$ by using the
multilevel stochastic approximation approach. This is an order of magnitude
lower than the optimal complexity of $\OO (\varepsilon ^{-3})$ of the nested
stochastic approximation approach obtained in Theorem~\ref{thm:cost:nsa}.

As for the ES estimation, note that the resulting complexity coincides
exactly with the optimal complexity of the standard multilevel Monte Carlo
algorithm derived by Giles~\cite{10.1287/opre.1070.0496}.

\section{Financial case studies}
\label{s:numerical}

We now assess the convergence results of NSA and MLSA numerically. This
is done through stylised financial setups, where the probability measure
$\mathbb{P}$ in the paper is used for all pricing and risk computations.
For comparative and assessment purposes, the numerical settings we are
interested in allow to retrieve the VaR and ES either ana\-lytically or
through Algorithms~\ref{alg:robbins-monro}, \ref{alg:nested-sa} or~\ref{alg:mlsa}.
Recall that real-life scenarios nonetheless lack analytical formulas and
prohibit the recourse to Algorithm~\ref{alg:robbins-monro}. The code producing
the results exhibited next is available at \texttt{github.com/azarlouzi/mlsa}.

\subsection{European option}
\label{ssec:numerical}

We endorse the stylised financial setup of Giles and Haji-Ali~\cite[Sect.~3]{doi:10.1137/18M1173186}
and consider an option with payoff $-W_{T}^{2}$ at maturity $T=1$, where
$W$ is a standard Brownian motion. Assuming zero interest rates, the value
$v(t,y)$ of the option at time $t\in [0,1]$ is given by
\begin{equation*}
v(t,y)=\mathbb{E}[-W_{1}^{2}\,|\,W_{t}=y].
\end{equation*}
Let $\delta \in (0,1)$ be a time horizon. We define the loss $X_{0}$ of
the option by
\begin{equation*}
X_{0}=v(0,0)-v(\delta ,W_{\delta}).
\end{equation*}

Let $\varphi :\mathbb{R}^{2}\to \mathbb{R}$ be given by
\begin{align*}
\varphi (y,z) &:=-(\sqrt{\delta}\, y+\sqrt{1-\delta}\,z)^{2}
\\
&
\hphantom{:}
=-\delta y^{2}-2\sqrt{\delta (1-\delta )}\,yz-(1-\delta )z^{2},
\qquad y,z\in \mathbb{R}.
\end{align*}
Then
\begin{equation}
\label{eq:X0=f(Y,Z)}
X_{0} \stackrel{\mathcal{L}}{=}\mathbb{E}[\varphi ({Y},Z)]-\mathbb{E}[
\varphi (Y,Z)\,|\,Y] =-1-\mathbb{E}[\varphi (Y,Z)\,|\,Y] =\delta (Y^{2}-1),
\end{equation}
where $Y$ and $Z$ are independent and $\sim \mathcal{N}(0,1)$.\\

The VaR $\xi ^{0}_{\star}$ at level $\alpha \in (0,1)$ of the loss
$X_{0}$ can be obtained analytically. It satisfies
\begin{equation*}
1-\alpha =\mathbb{P}[X_{0}>\xi ^{0}_{\star}] =\mathbb{P}\bigg[Y^{2}>1+
\frac{\xi ^{0}_{\star}}{\delta}\bigg]
=2F\bigg(-\Big(1+\frac{\xi ^{0}_{\star}}{\delta}\Big)^{\frac{1}{2}}
\bigg),
\end{equation*}
where $F$ is the standard Gaussian distribution function. Hence
\begin{equation}
\label{eq:theta*}
\xi ^{0}_{\star}=\delta \bigg(\bigg(F^{-1}\Big(\frac{1-\alpha}{2}
\Big)\bigg)^{2}-1\bigg).
\end{equation}
We can also get an analytical formula for the ES $\chi ^{0}_{\star}$ at
level $\alpha $. Indeed, using the symmetry of the Gaussian distribution,
\begin{align*}
\chi ^{0}_{\star }&=\mathbb{E}[X_{0}\,|\,X_{0}>\xi ^{0}_{\star}]
\\
&=\frac{\delta}{1-\alpha}\big(2\mathbb{E}[Y^{2}\mathds1_{\{Y>\mu \}}]-(1-
\alpha )\big), \qquad \text{where } \mu :=\bigg(1+
\frac{\xi ^{0}_{\star}}{\delta}\bigg)^{\frac{1}{2}}.
\end{align*}
Integrating by parts,
\begin{equation*}
\mathbb{E}[Y^{2}\mathds1_{\{Y>\mu \}}]=\mu f(\mu )+F(-\mu ),
\end{equation*}
where $f$ denotes the standard Gaussian density function. Hence
\begin{equation}
\label{eq:c*}
\chi ^{0}_{\star}=\frac{2\delta}{1-\alpha}\bigg(\mu f(\mu )+F(-\mu )-
\frac{1-\alpha}{2}\bigg).
\end{equation}

The loss $X_{0}\stackrel{\mathcal{L}}{=}\delta (Y^{2}-1)$ (cf.~\eqref{eq:X0=f(Y,Z)})
can be simulated directly so that Algorithm~\ref{alg:robbins-monro} is
applicable. But the goal here is to estimate the values of
$\xi ^{0}_{\star}$ and $\chi ^{0}_{\star}$ using Algorithms~\ref{alg:nested-sa}
and~\ref{alg:mlsa}, and to study empirically the influence of the parametrisation
of each algorithm on their performances. The exact values obtained with
\eqref{eq:theta*} and \eqref{eq:c*} serve as convergence benchmarks.
\\

The loss
$X_{0}\stackrel{\mathcal{L}}{=}-1-\mathbb{E}[\varphi (Y,Z)\,|\,Y]$ can
alternatively be approximated, for a given bias parameter
$h=\frac{1}{K}\in \mathcal{H}$, by
\begin{equation*}
X_{h}=-1-\frac{1}{K}\sum _{k=1}^{K}\varphi (Y,Z^{(k)}),
\end{equation*}
where
$\text{$Y,Z^{(1)},\dots ,Z^{(K)}$ are~{i.i.d.}} \sim \mathcal{N}(0,1)$.
We can then apply Algorithm~\ref{alg:nested-sa} or~\ref{alg:mlsa} on this
basis.

\subsubsection{Numerical results}
\label{sec4.1.1}

In the following applications, the confidence level is taken as
$\alpha =97.5\%$, and the time horizon is set to $\delta =0.5$. This setup
yields $\xi ^{0}_{\star}\approx 2.012$ and
$\chi ^{0}_{\star}\approx 2.901$ (computed by using the explicit formulas
\eqref{eq:theta*} and \eqref{eq:c*}). The value of $\beta =1$ is used for
the step sequences $(\gamma _{n})_{n\geq 1}$ since according to Theorems~\ref{thm:cost:nsa}
and \ref{thm:cost:mlsa}, it leads to the optimal complexities for Algorithms~\ref{alg:nested-sa}
and~\ref{alg:mlsa}.
\\

To showcase the results of Proposition~\ref{prp:bias-cv}, we study the
linearity of the errors $\xi ^{h}_{\star}-\xi ^{0}_{\star}$ and
$\chi ^{h}_{\star}-\chi ^{0}_{\star}$ and the stability of the renormalised
errors $h^{-1}(\xi ^{h}_{\star}-\xi ^{0}_{\star})$ and
$h^{-1}(\chi ^{h}_{\star}-\chi ^{0}_{\star})$ as
$\mathcal{H}\ni h\downarrow 0$. We run Algorithm~\ref{alg:nested-sa} for
multiple values of \text{${h\in \{\frac{1}{10},\frac{1}{20},
\frac{1}{50},\frac{1}{100},\frac{1}{200}\}}$} and for a very large number
$N=10^{6}$ of iterations. We adopt the step size
$\gamma _{n}=0.1/(10^{4}+n)$, with a smoothing applied to the denominator
to avoid instability at early iterations. We average $200$ outcomes
$(\xi ^{h}_{10^{6}},\chi ^{h}_{10^{6}})$ to form an estimate
$(\bar\xi ^{h}_{10^{6}},\bar\chi ^{h}_{10^{6}})$ of
$(\xi ^{h}_{\star},\chi ^{h}_{\star})$. The left panel of Fig.~\ref{fig:approx-h}
shows the evolution of the quantities
$\bar\xi ^{h}_{10^{6}}-\xi ^{0}_{\star}$ and
$\bar\chi ^{h}_{10^{6}}-\chi ^{0}_{\star}$ as $h$ runs through
$\{\frac{1}{10},\frac{1}{20},\frac{1}{50},\frac{1}{100},\frac{1}{200}
\}$, while the right panel shows the evolution of the corresponding quantities
$h^{-1}(\bar\xi ^{h}_{10^{6}}-\xi ^{0}_{\star})$ and
$h^{-1}(\bar\chi ^{h}_{10^{6}}-\chi ^{0}_{\star})$.

\begin{figure}[H]
\centerline{
\begin{subfigure}{.5\linewidth}
\flushleft
\includegraphics[width=.975\linewidth]{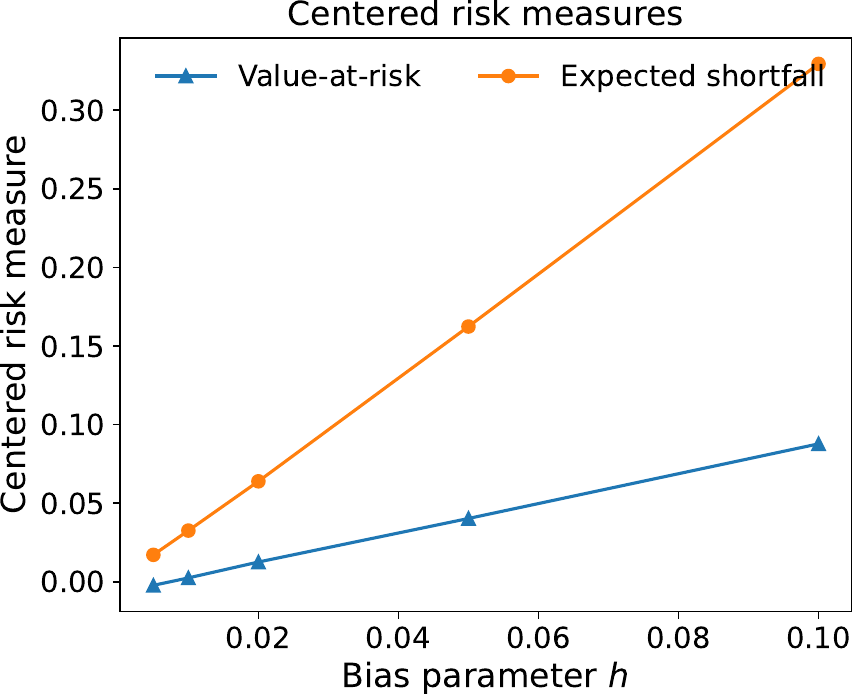}
\end{subfigure}%
\begin{subfigure}{.5\linewidth}
\flushright
\includegraphics[width=.975\linewidth]{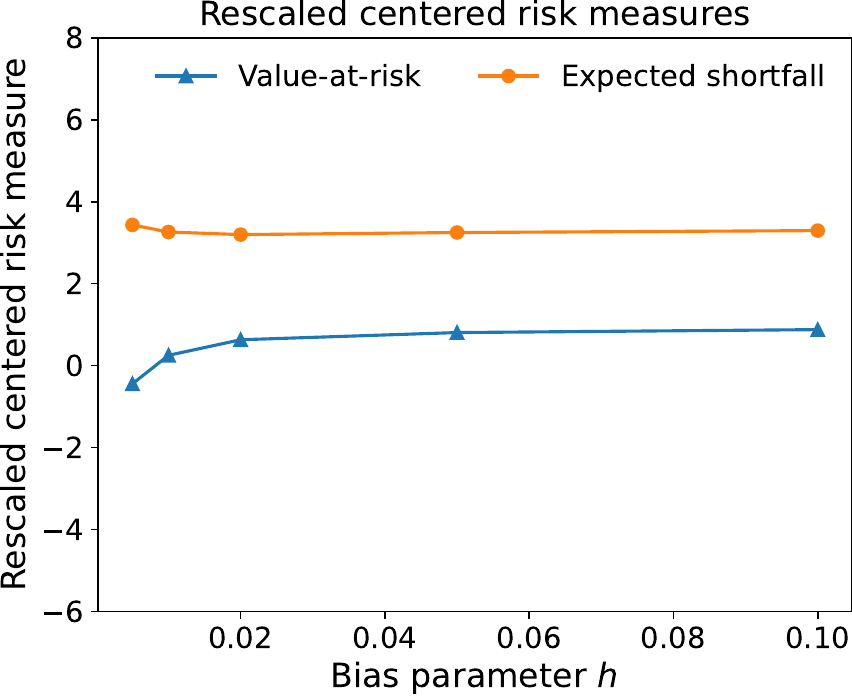}
\end{subfigure}
}
\caption{Centered and rescaled risk measures for a bias parameter $h$ tending to $0$.}
\label{fig:approx-h}
\end{figure}

The left-panel plot suggests that asymptotically as
$\mathcal{H}\ni h\downarrow 0$, the quantities
$\xi ^{h}_{\star}-\xi ^{0}_{\star}$ and
$\chi ^{h}_{\star}-\chi ^{0}_{\star}$ are linear in $h$. The right-panel
plot strengthens this observation, demonstrating that the quantities
$h^{-1}(\xi ^{h}_{\star}-\xi ^{0}_{\star})$ and
$h^{-1}(\chi ^{h}_{\star}-\chi ^{0}_{\star})$ are approximately constant
in a neighborhood of $0$. This illustrates empirically the validity of
Proposition~\ref{prp:bias-cv} on the weak error expansion.\\

We next aim to compare the performances of Algorithms~\ref{alg:robbins-monro},
\ref{alg:nested-sa} and~\ref{alg:mlsa}. For Algorithms~\ref{alg:robbins-monro}
and~\ref{alg:nested-sa}, we set the step size
$\gamma _{n}=1/(100+n)$ for the VaR simulation and
$\gamma _{n}=0.1/(2.5\times 10^{4}+n)$ for the ES simulation. We work under
the scenario~\eqref{assumption:finite:Lp:moment} for Algorithm~\ref{alg:mlsa},
that clearly holds for any $p_{\star}>1$, with the particular value
$p_{\star}=11$. We choose $M=2$. The VaR and ES are computed using their
respective optimal iteration amounts described in Theorem~\ref{thm:cost:mlsa}.
The remaining hyperparameter setup is provided in Table~\ref{tbl:option}.
The $(\gamma _{n})_{n\geq 1}$ given there for the VaR estimation were obtained
through a full grid search on their parameters.

For each algorithm and each risk measure, we plot on a logarithmic scale
the average run times over $200$ runs against the achieved RMSEs, for a
prescribed accuracy $\varepsilon $ that loops through
$\{\frac{1}{32},\frac{1}{64},\frac{1}{128},\frac{1}{256},
\frac{1}{512}\}$. We also plot the average run times against the prescribed
accuracies $\varepsilon $ themselves; see Fig.~\ref{fig:comparison}. Table~\ref{tbl:option:slopes}
reports the slopes fitted on these curves.

\begin{table}[H]
\tabcolsep=1em
\caption{Algorithm~\ref{alg:mlsa} parametrisation by prescribed accuracy}
\label{tbl:option}
\centerline{
\begin{tabular}{c|ccc|ccc}
\hline
\multirow{2}{20mm}{\centering Prescribed accuracy $\varepsilon$}
& \multicolumn{3}{c|}{VaR estimation}
& \multicolumn{3}{c}{ES estimation} \\
\cline{2-7}
& $h_0$ & $L$ & $\gamma_n$
& $h_0$ & $L$ & $\gamma_n$ \\
\specialrule{0.1em}{0em}{0em}
$\frac{1}{32}$
& $\frac{1}{16}$ & $1$ & $\frac{2}{2.5\times10^3+n}$
& $\frac{1}{16}$ & $1$ & $\frac{0.1}{10^4+n}$ \\
\hline
$\frac{1}{64}$
& $\frac{1}{32}$ & $1$ & $\frac{2}{4\times10^3+n}$
& $\frac{1}{32}$ & $1$ & $\frac{0.1}{10^4+n}$ \\
\hline
$\frac{1}{128}$
& $\frac{1}{32}$ & $2$ & $\frac{0.75}{9\times10^3+n}$
& $\frac{1}{32}$ & $2$ & $\frac{0.1}{10^4+n}$ \\
\hline
$\frac{1}{256}$
& $\frac{1}{32}$ & $3$ & $\frac{0.25}{10^4+n}$
& $\frac{1}{32}$ & $3$ & $\frac{0.1}{2\times10^4+n}$ \\
\hline
$\frac{1}{512}$
& $\frac{1}{32}$ & $4$ & $\frac{0.09}{10^4+n}$
& $\frac{1}{32}$ & $4$ & $\frac{0.1}{2.5\times10^4+n}$ \\
\hline
\end{tabular}
}
\end{table}

\begin{figure}[H]
\centerline{
\begin{subfigure}{.5\linewidth}
\flushleft
\includegraphics[width=.975\textwidth]{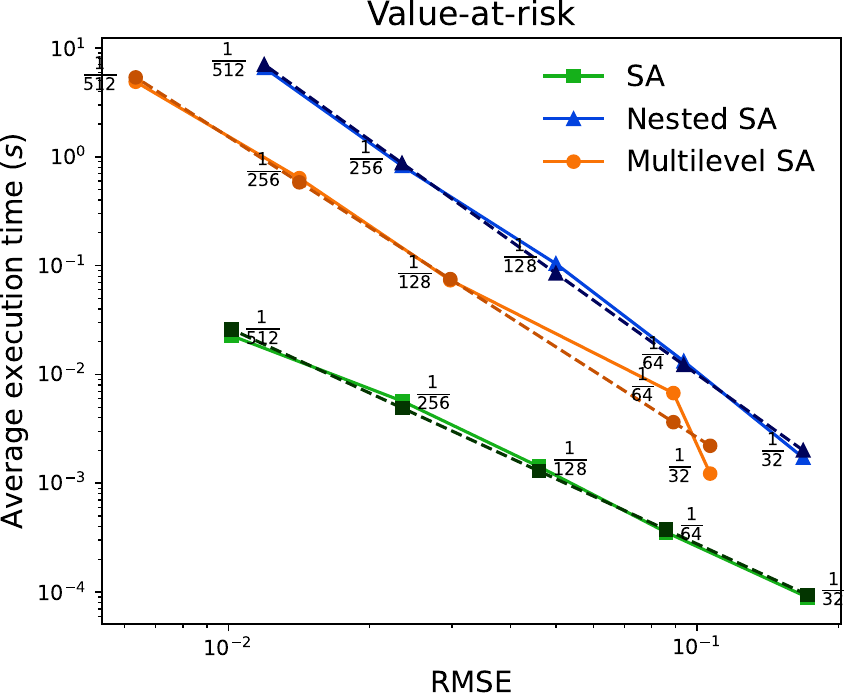}
\end{subfigure}%
\begin{subfigure}{.5\linewidth}
\flushright
\includegraphics[width=.975\textwidth]{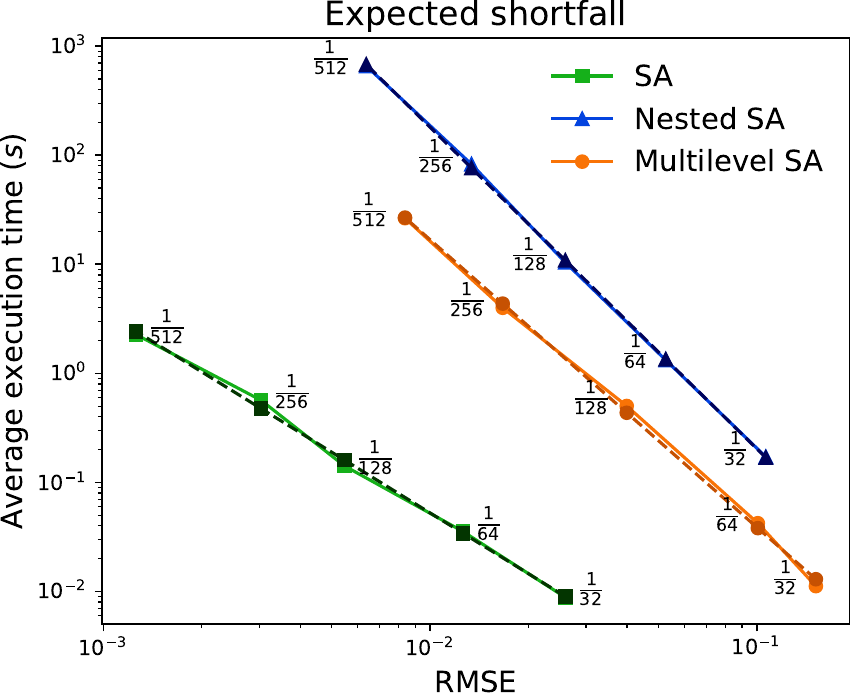}
\end{subfigure}
}
\medskip
\centerline{
\begin{subfigure}{.5\linewidth}
\flushleft
\includegraphics[width=.975\textwidth]{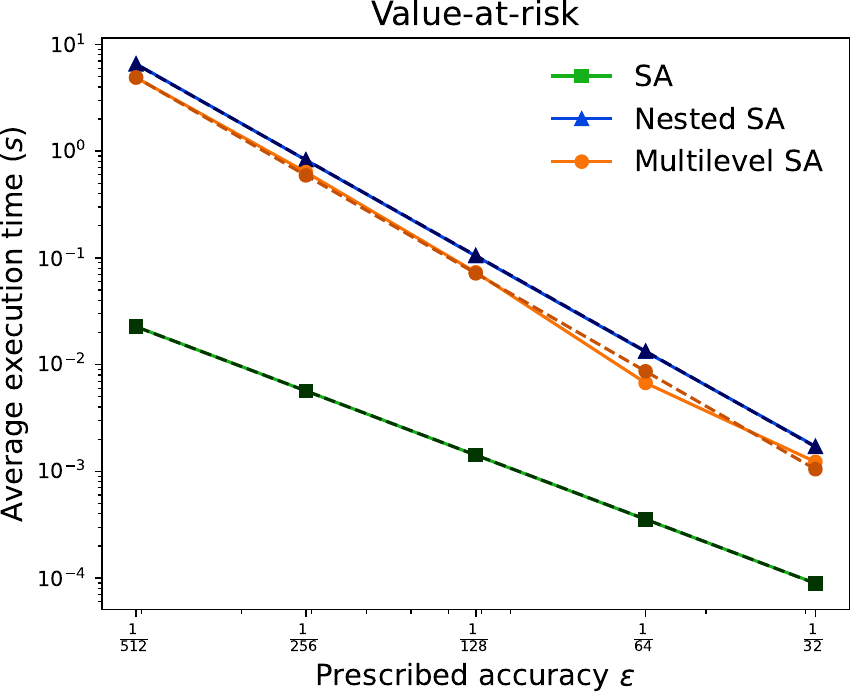}
\end{subfigure}%
\begin{subfigure}{.5\linewidth}
\flushright
\includegraphics[width=.975\textwidth]{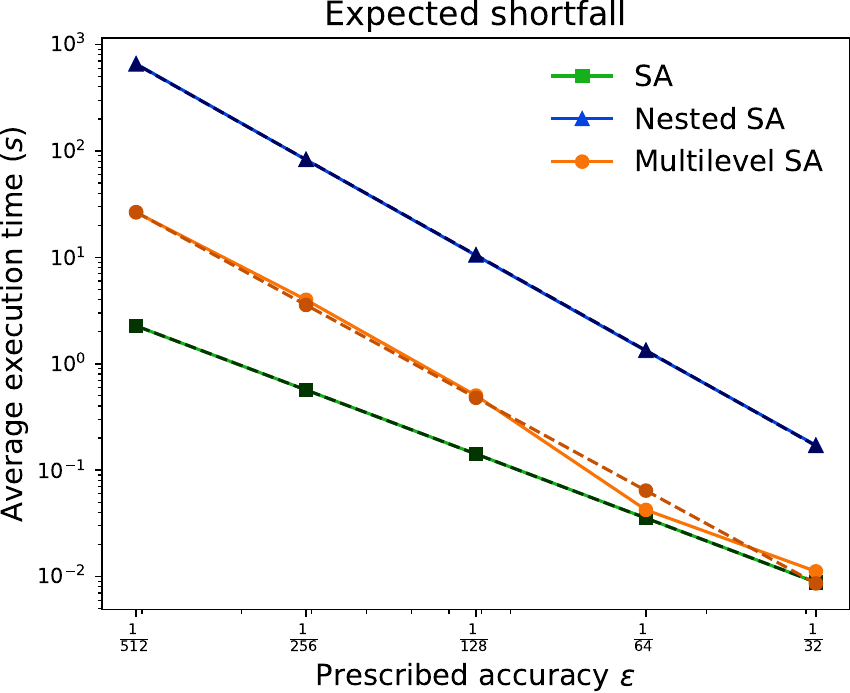}
\end{subfigure}
}
\caption{Performance comparison of Algorithms~\ref{alg:robbins-monro},~\ref{alg:nested-sa} and~\ref{alg:mlsa}.}
\label{fig:comparison}
\end{figure}

\begin{table}[H]
\tabcolsep=1em
\caption{Reported slopes in Fig.~\ref{fig:comparison}}
\label{tbl:option:slopes}
\centerline{
\begin{tabular}{l|cc|cc}
\hline
\multirow{2}{20mm}{SA scheme}
& \multicolumn{2}{c|}{VaR estimation}
& \multicolumn{2}{c}{ES estimation} \\
\cline{2-5}
& RMSE
& Accuracy
& RMSE
& Accuracy \\
\specialrule{0.1em}{0em}{0em}
Nested SA (Alg.~\ref{alg:nested-sa}) & ${-3.08}$ & ${-2.98}$ & ${-2.95}$ & ${-2.98}$ \\\hline
Multilevel SA (Alg.~\ref{alg:mlsa}) & ${-2.76}$ & ${-3.05}$ & ${-2.64}$ & ${-2.90}$ \\\hline
SA (Alg.~\ref{alg:robbins-monro}) & ${-1.99}$ & ${-2.00}$ & ${-1.85}$ & ${-2.00}$ \\
\hline
\end{tabular}
}
\end{table}

For the VaR as for the ES, for any given scored RMSE, Algorithm~\ref{alg:mlsa}
displays an execution time that is orders of magnitude lower than Algorithm~\ref{alg:nested-sa}.
Take an RMSE of the order $5\times 10^{-2}$. MLSA computes the VaR in
$10^{-2}s$, scoring a $10$-fold speedup over NSA which computes it in
$10^{-1}s$. Likewise, to compute the ES, MLSA requires~$2\times 10^{-1}s$,
a $10$-fold speedup over NSA which requires $2s$. This corroborates the
theoretical optimal complexity of Algorithm~\ref{alg:mlsa} that is significantly
lower than that of Algorithm~\ref{alg:nested-sa} (cf.~Theorems~\ref{thm:cost:nsa}
and~\ref{thm:cost:mlsa}). Finally, as one would expect, Algorithm~\ref{alg:robbins-monro}
that simulates $X_{0}$ directly considerably outperforms both Algorithms~\ref{alg:nested-sa}
and~\ref{alg:mlsa}. The spacing between the SA and NSA curves illustrates
the performance gap occurring when using, in a Robbins--Monro instance,
a nested simulation instead of a direct simulation for the loss. The placement
of the MLSA curve in between these two demonstrates the partial yet significant
regain on this performance gap, which is achieved by shifting from the
nested to the multilevel scheme. Note, however, that the multilevel VaR
estimation turned out to be rather unstable numerically. Small changes
to the step-size parametrisation may substantially affect the results.
To counter this, a thorough grid search on the parameters of the step size
was performed, adding a nonnegligible tuning phase to the total execution
time. The multilevel ES estimation, by contrast, is quite robust and requires
only minor hyperparameter tuning. Up to a variation range, the slopes of
the RMSE curves in Table~\ref{tbl:option:slopes} reflect the theoretical
exponents in the complexity theorems. The curves of the average execution
time as a function of the prescribed accuracy feature a log-linear behaviour,
whose slopes in Table~\ref{tbl:option:slopes} roughly match the theoretical
complexity exponents in Theorems~\ref{thm:cost:nsa} and~\ref{thm:cost:mlsa}.

\subsection{Swap on a rate}
\label{s:swap}

We adapt the setup of Albanese et al.~\cite[Sect.~A.1]{AlbaneseArmentiCrepey+2020+25+53}.
Consider a stylised swap of nominal~$\bar{N}$, strike $\bar{S}$ and maturity
$\bar{T}$ on some underlying (FX or interest) rate. This rate follows a
Black--Scholes model $(S_{t})_{t\geq 0}$ of risk neutral drift
$\bar\kappa $ and constant volatility~$\bar\sigma $, or in other words,
$(\widehat{S}_{t}:=\mathrm{e} ^{-\bar\kappa t}S_{t})_{t\geq 0}$ follows a Black
martingale model of volatility~$\bar\sigma $. Given coupon dates
$0<T_{1}<\cdots <T_{d}=\bar{T}$, the swap pays at $ T_{i} $ the cash flow
$\Delta _{i}(S_{T_{i-1 }}-\bar{S})$, where
$\Delta _{i}=T_{i}-T_{i-1}$. For $t\in [0,\bar{T}]$, let
$\rho _{t}=e^{-rt}$ for some constant risk-free rate $r$, and denote by
$i_{t}\in \{1,\dots ,d+1\}$ the integer that satisfies
$T_{i_{t}-1}\leq t<T_{i_{t}}$. The value $P_{t}$ of the swap at time
$t\in [0,\bar{T}]$ is thus expressed as
\begin{align}
P_{t} &=\bar{N}\mathbb E\bigg[\rho _{t}^{-1}\rho _{T_{i_{t}}}\Delta _{i_{t}}(S_{T_{i_{t}-1}}-
\bar{S}) +\sum _{i=i_{t}+1}^{d}\rho _{t}^{-1}\rho _{T_{i}}\Delta _{i}(S_{T_{i-1}}-
\bar{S})\,\bigg|\,S_{t}\bigg]
\notag
\\
&=\bar{N}\mathbb E\bigg[\rho _{t}^{-1}\rho _{T_{i_{t}}}\Delta _{i_{t}}(
\mathrm{e} ^{\bar\kappa T_{i_{t}-1}}\widehat{S}_{T_{i_{t}-1}}-\bar{S})
\nonumber
\\
&
\hphantom{=:\bar{N}\mathbb E\bigg[}
+\sum _{i=i_{t}+1}^{d}\rho _{t}^{-1}\rho _{T_{i}}\Delta _{i}(\mathrm{e} ^{
\bar\kappa T_{i-1}}\widehat{S}_{T_{i-1}}-\bar{S})\,\bigg|\,
\widehat{S}_{t}\bigg]
\label{eq:MtM-cond}
\\
&=\bar{N}\bigg(\rho _{t}^{-1}\rho _{T_{i_{t}}}\Delta _{i_{t}}(\mathrm{e} ^{
\bar\kappa T_{i_{t}-1}}\widehat{S}_{T_{i_{t}-1}}-\bar{S}) +\sum _{i=i_{t}+1}^{d}
\rho _{t}^{-1}\rho _{T_{i}}\Delta _{i}(\mathrm{e} ^{\bar\kappa T_{i-1}}
\widehat{S}_{t}-\bar{S})\bigg).
\label{eq:MtM}
\end{align}
We assume the swap is issued at par at time $0$, i.e., $P_{0}=0$. Hence
we get
\begin{equation}
\label{eq:strike}
\bar{S}=
\frac{\sum _{i=1}^{d}\rho _{T_{i}}\Delta _{i}\mathrm{e} ^{\bar\kappa T_{i-1}}}{\sum _{i=1}^{d}\rho _{T_{i}}\Delta _{i}}S_{0}.
\end{equation}
Given some confidence level $\alpha \in (0,1)$, we are interested in computing
the risk measures
$\xi ^{0}_{\star}:=\VaR _{\alpha}(\rho _{\delta }P_{\delta})$ and
$\chi ^{0}_{\star}:=\ES _{\alpha}(\rho _{\delta }P_{\delta})$ of a short
position on the swap at some time horizon
$\delta <\min _{i}\Delta _{i}$.
\\

In this lognormal setup, the values of $\xi ^{0}_{\star}$ and
$\chi ^{0}_{\star}$ can be obtained analytically. On the one hand, observing
that $i_{\delta}=1$ and using \eqref{eq:MtM} and \eqref{eq:strike},
\begin{equation}
\label{eq:bMtM}
\rho _{\delta }P_{\delta}=\bar{N}A (\widehat{S}_{\delta}-S_{0} ),
\qquad \text{where } A:=\sum _{i=2}^{d}\rho _{T_{i}}\Delta _{i}\mathrm{e} ^{
\bar\kappa T_{i-1}}.
\end{equation}
Hence
\begin{equation*}
1-\alpha =\mathbb{P}[\rho _{\delta }P_{\delta}>\xi ^{0}_{\star}] =
\mathbb{P}\bigg[\widehat{S}_{\delta}>{S_{0}}+
\frac{\xi ^{0}_{\star}}{\bar{N}A}\bigg],
\end{equation*}
and thus
\begin{equation}
\label{eq:swap:xi}
\xi ^{0}_{\star}=\bar{N}AS_{0}\bigg(\exp \Big(F^{-1}(\alpha )
\bar\sigma \sqrt \delta -\frac{\bar\sigma ^{2}}{2}\delta \Big)-1
\bigg),
\end{equation}
where $F$ is the standard Gaussian distribution function. On the other
hand, with
\begin{equation*}
\omega := S_{0}+\frac{\xi ^{0}_{\star}}{\bar{N}A}, \quad \quad \eta _{
\pm}:=\frac{1}{\bar\sigma \sqrt \delta}\bigg(\ln{\frac{\omega}{S_{0}}}
\pm \frac{\bar\sigma ^{2}\delta}{2}\bigg),
\end{equation*}
we obtain
\begin{equation*}
\chi ^{0}_{\star }=\mathbb{E}[\rho _{\delta }P_{\delta }\,|\, \rho _{
\delta }P_{\delta}>\xi ^{0}_{\star}] =\frac{\bar{N}A}{1-\alpha}\big(
\mathbb{E}[\widehat{S}_{\delta}\mathds1_{\{\widehat{S}_{\delta}>
\omega \}}]-(1-\alpha )S_{0}\big),
\end{equation*}
and hence
\begin{equation*}
\mathbb{E}[\widehat{S}_{\delta}\mathds1_{\{\widehat{S}_{\delta}>
\omega \}}] =S_{0}\mathbb{E}\bigg[\exp \bigg(\bar\sigma \sqrt \delta U-
\frac{\bar\sigma ^{2}}{2}\delta \bigg)\mathds1_{\{U>\eta _{+}\}}
\bigg] =S_{0}\big(1-F(\eta _{-})\big),
\end{equation*}
where $U\sim \mathcal{N}(0,1)$. Therefore
\begin{equation}
\label{eq:swap:C}
\chi ^{0}_{\star}=\bar{N}AS_{0}\frac{\alpha -F(\eta _{-})}{1-\alpha}.
\end{equation}

Define $X_{0}:=\rho _{\delta }P_{\delta}$. Algorithm~\ref{alg:robbins-monro}
can be applied to approximate the values of~$\xi ^{0}_{\star}$ and
$\chi ^{0}_{\star}$. Indeed, according to \eqref{eq:bMtM},
\begin{equation*}
X_{0}\stackrel{\mathcal{L}}{=}\bar{N} AS_{0}\bigg(\exp \Big(-
\frac{\bar\sigma ^{2}}{2}\delta +\bar\sigma \sqrt \delta U\Big)-1
\bigg),
\end{equation*}
where $U\sim \mathcal{N}(0,1)$. Algorithms~\ref{alg:nested-sa} and~\ref{alg:mlsa}
are also applicable to approximate the values of~$\xi ^{0}_{\star}$ and
$\chi ^{0}_{\star}$. By \eqref{eq:MtM-cond} and \eqref{eq:strike},
\begin{equation}
\label{eq:swap:nested}
X_{0} =\bar{N}\mathbb{E}\bigg[\sum _{i=2}^{d}\rho _{T_{i}}\Delta _{i}
\mathrm{e} ^{\bar\kappa T_{i-1}}(\widehat{S}_{T_{i-1}}-S_{0})\,\bigg|\,
\widehat{S}_{\delta}\bigg] \stackrel{\mathcal{L}}{=}\mathbb{E}[
\varphi (Y,Z)\,|\,Y],
\end{equation}
where $Y\in \mathbb{R}$ and
$Z=(Z_{1},\dots ,Z_{d-1})\in \mathbb{R}^{d-1}$ are independent, with
\begin{equation*}
\begin{aligned}
Y &:=\exp \bigg(-\frac{\bar\sigma ^{2}}{2}\delta +\bar\sigma \sqrt
\delta U_{0}\bigg) \stackrel{\mathcal{L}}{=}
\frac{\widehat{S}_{\delta}}{S_{0}},
\\
Z_{1} &:=\exp \bigg(-\frac{\bar\sigma ^{2}}{2}(T_{1}-\delta )+
\bar\sigma \sqrt{T_{1}-\delta}U_{1}\bigg) \stackrel{\mathcal{L}}{=}
\frac{\widehat{S}_{T_{1}}}{\widehat{S}_{\delta}},
\\
Z_{i} &:=\exp \bigg(-\frac{\bar\sigma ^{2}}{2}\Delta _{i}+\bar\sigma
\sqrt{\Delta _{i}}U_{i}\bigg) \stackrel{\mathcal{L}}{=}
\frac{\widehat{S}_{T_{i}}}{\widehat{S}_{T_{i-1}}}, \qquad 2\leq i
\leq d-1,
\end{aligned}
\end{equation*}
and
\begin{equation*}
\varphi (y,z) :=\bar{N}S_{0}\sum _{i=2}^{d}\rho _{T_{i}}\Delta _{i}
\mathrm{e} ^{\bar\kappa T_{i-1}}\bigg(y\prod _{j=1}^{i-1}z_{j}-1\bigg),
\end{equation*}
for $y\in \mathbb{R}$,
$z=(z_{1},\dots ,z_{d-1})\in \mathbb{R}^{d-1}$ and
$(U_{i})_{0\leq i\leq d-1}\text{ i.i.d.} \sim \mathcal{N}(0,1)$. For any
bias parameter $h=\frac{1}{K}\in \mathcal{H}$, $X_{0}$ can be approximated
by using \eqref{eq:swap:nested}.

\subsubsection{Numerical results}
\label{sec4.2.1}

We set $r=2\%$, $S_{0}=1\%$, $\bar\kappa =12\%$, $\bar\sigma =20\%$,
$\Delta _{i}=3\,\mathrm{months}$, \text{${\bar{T}=1\,\mathrm{year}}$,}
\text{${\delta =1\,\mathrm{week}}$} and $\alpha =85\%$. We use the
$30/360$ day count convention, meaning that
$1\,\mathrm{month}=30\,\mathrm{days}$ and
$1\,\mathrm{year}=360\,\mathrm{days}$. Finally, we set the nominal
$\bar{N}$ such that each leg of the swap is valued at $1$ at time
$0$, that is,
\begin{equation*}
\bar{N}=
\frac{1}{S_{0}\sum _{i=1}^{d}\rho _{T_{i}}\Delta _{i}\mathrm{e} ^{\bar\kappa T_{i-1}}}.
\end{equation*}
Given the above parameter set, the theoretical values of the VaR and ES
obtained with \eqref{eq:swap:xi} and \eqref{eq:swap:C} are
$\xi ^{0}_{\star}\approx 219.64$ and
$\chi ^{0}_{\star}\approx 333.91$.

For the stochastic approximation of these quantities, we use the step sizes
\text{${\gamma _{n}=100/n}$} for the SA scheme and $\gamma _{n}=50/n$ for
the nested SA scheme. We run these schemes with their respective optimal
iteration amounts. For the multilevel SA scheme, we work under the scenario
\eqref{assumption:finite:Lp:moment} with $p_{\star}=4$, which is tuned
by using a grid search. The VaR and ES are simulated by using their respective
optimal amounts $N_{0},\dots ,N_{L}$ of iterations. For each prescribed
accuracy
$\varepsilon \in \{\frac{1}{32},\frac{1}{64},\frac{1}{128},
\frac{1}{256},\frac{1}{512}\}$, we perform a grid search to tune the initial
bias parameter $h_{0}$ (giving the number~$L$ of levels) and the step sizes
$(\gamma _{n})_{n\geq 1}$. Table~\ref{tbl:swap:mlsa} lists the ensuing
parametrisations by prescribed accuracy. Figure~\ref{fig:swap:comparison}
plots on a logarithmic scale the joint evolution of the realised RMSE and
the average execution time over $200$ runs of each SA scheme, for an accuracy
$\varepsilon $ varying in
$\{\frac{1}{32},\frac{1}{64},\frac{1}{128},\frac{1}{256},
\frac{1}{512}\}$. The second row of Fig.~\ref{fig:swap:comparison} displays
the realised RMSE as a function of the prescribed accuracy. Table~\ref{tbl:swap:mlsa:slopes}
reports the regressed slopes on these curves as depicted in dashed lines
in Fig.~\ref{fig:swap:comparison}.

\begin{table}[H]
\tabcolsep=1em
\caption{Algorithm~\ref{alg:mlsa} parametrisation by prescribed accuracy}
\label{tbl:swap:mlsa}
\centerline{
\begin{tabular}{c|ccc|ccc}
\hline
\multirow{2}{20mm}{\centering Prescibed accuracy $\varepsilon$}
& \multicolumn{3}{c|}{VaR estimation}
& \multicolumn{3}{c}{ES estimation} \\
\cline{2-7}
& $h_0$ & $L$ & $\gamma_n$ 
& $h_0$ & $L$ & $\gamma_n$ \\
\specialrule{0.1em}{0em}{0em}
$\frac{1}{32}$
& $\frac{1}{8}$ & $2$ & $\frac{6}{10+n}$
& $\frac{1}{8}$ & $2$ & $\frac{5}{10+n}$ \\
\hline
$\frac{1}{64}$
& $\frac{1}{16}$ & $2$ & $\frac{20}{500+n}$
& $\frac{1}{16}$ & $2$ & $\frac{20}{500+n}$ \\
\hline
$\frac{1}{128}$
& $\frac{1}{16}$ & $3$ & $\frac{21}{10^3+n}$
 & $\frac{1}{16}$ & $3$ & $\frac{20}{500+n}$ \\
\hline
$\frac{1}{256}$
& $\frac{1}{16}$ & $4$ & $\frac{20}{2\times10^3+n}$
& $\frac{1}{16}$ & $4$ & $\frac{20}{750+n}$ \\
\hline
$\frac{1}{512}$
& $\frac{1}{16}$ & $5$ & $\frac{21}{3\times10^3+n}$
& $\frac{1}{32}$ & $4$ & $\frac{50}{2\times10^3+n}$ \\
\hline
\end{tabular}
}
\end{table}

\begin{figure}[H]
\centerline{
\begin{subfigure}{.5\linewidth}
\flushleft
\includegraphics[width=.975\textwidth]{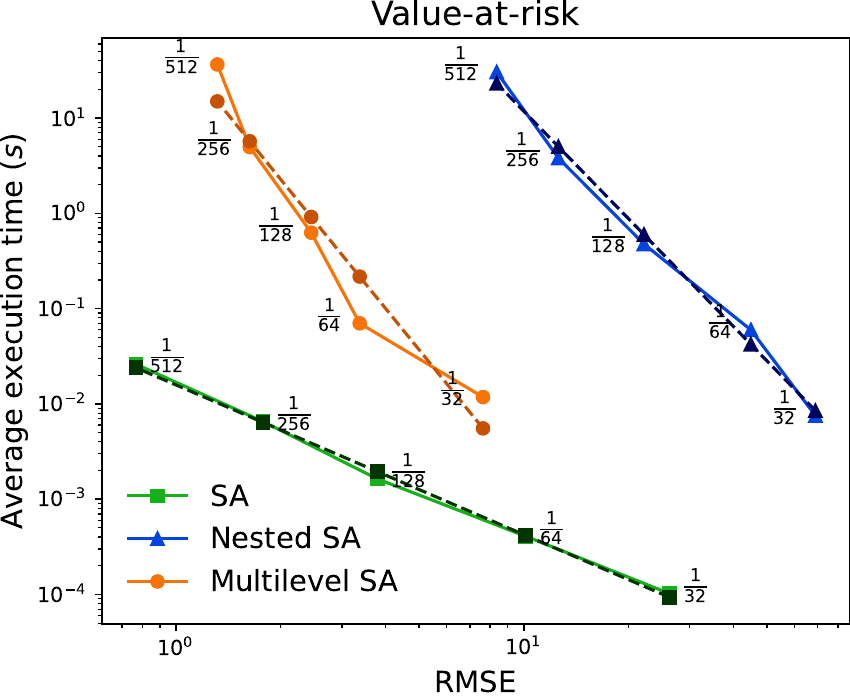}
\end{subfigure}%
\begin{subfigure}{.5\linewidth}
\flushright
\includegraphics[width=.975\textwidth]{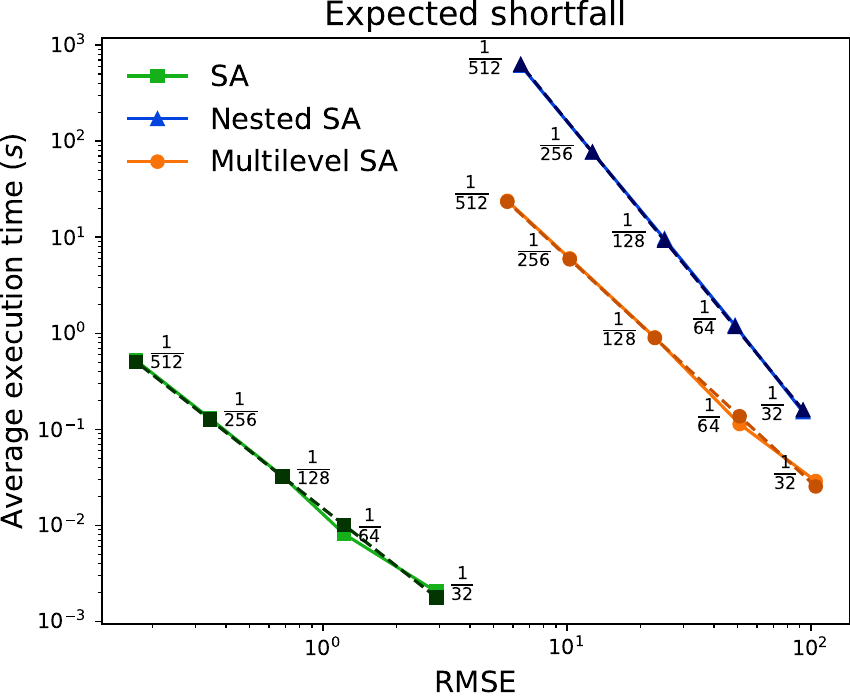}
\end{subfigure}
}
\medskip
\centerline{
\begin{subfigure}{.5\linewidth}
\flushleft
\includegraphics[width=.975\textwidth]{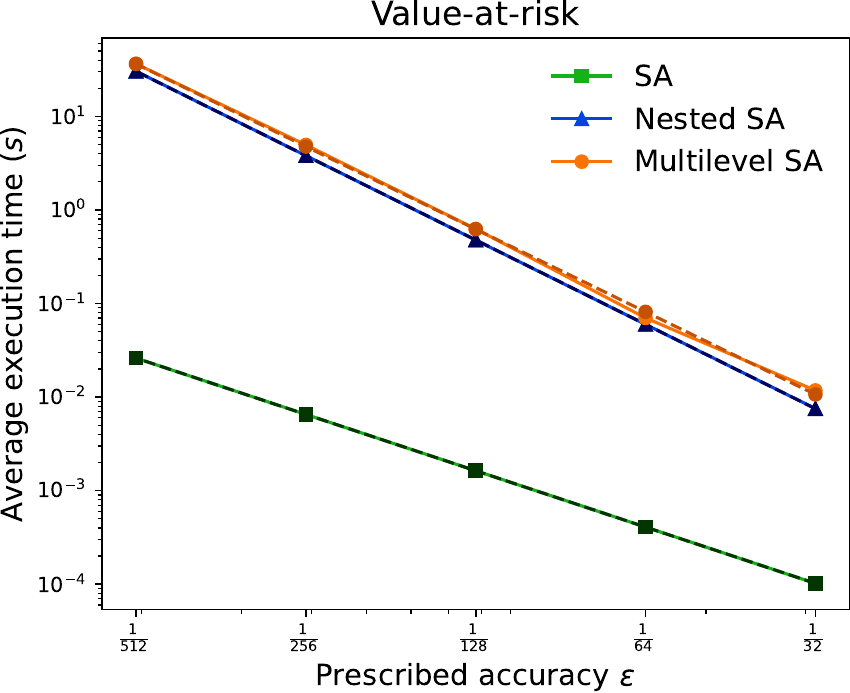}
\end{subfigure}%
\begin{subfigure}{.5\linewidth}
\flushright
\includegraphics[width=.975\textwidth]{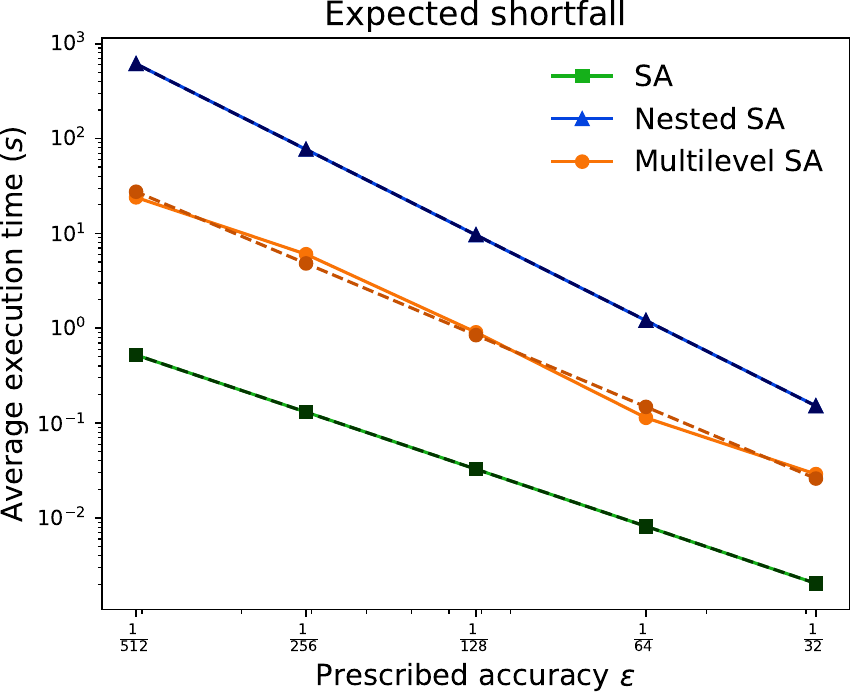}
\end{subfigure}
}
\caption{Performance comparison of Algorithms~\ref{alg:robbins-monro},~\ref{alg:nested-sa} and~\ref{alg:mlsa}.}
\label{fig:swap:comparison}
\end{figure}

\begin{table}[H]
\tabcolsep=1em
\caption{Reported slopes in Fig.~\ref{fig:swap:comparison}}
\label{tbl:swap:mlsa:slopes}
\centerline{
\begin{tabular}{l|cc|cc}
\hline
\multirow{2}{20mm}{SA scheme}
& \multicolumn{2}{c|}{VaR estimation}
& \multicolumn{2}{c}{ES estimation} \\
\cline{2-5}
& RMSE
& Accuracy
& RMSE
& Accuracy \\
\specialrule{0.1em}{0em}{0em}
Nested SA (Alg.~\ref{alg:nested-sa}) & ${-3.75}$ & ${-3.00}$ & ${-3.11}$ & ${-3.00}$ \\
\hline
Multilevel SA (Alg.~\ref{alg:mlsa}) & ${-4.50}$ & ${-2.93}$ & ${-2.35}$ & ${-2.51}$ \\\hline
SA (Alg.~\ref{alg:robbins-monro}) & ${-1.57}$ & ${-2.00}$ & ${-1.99}$ & ${-2.00}$ \\
\hline
\end{tabular}
}
\end{table}

We observe in the top panels of Fig.~\ref{fig:swap:comparison} that the
multilevel SA scheme significantly outperforms the nested SA scheme in
terms of computational time and achieved error rate. Indeed, for a target
RMSE of the order $10$ for the VaR, MLSA is $10^{3}$ times faster than
NSA as the former executes on average in approximately $10^{-2}s$, while
the latter executes in $10s$. Similarly, for an RMSE of the order
$10$ for the ES, MLSA is $10$~times faster than NSA as the former executes
on average in~$10s$ while the latter executes in~$10^{2}s$. These results
are very promising as they apply to a more realistic scenario than in Sect.~\ref{ssec:numerical}.
Finally, the standard Robbins--Monro algorithm, which is only applicable
if the loss $X_{0}$ is directly simulatable, outperforms both Algorithms~\ref{alg:nested-sa}
and~\ref{alg:mlsa} as expected. As indicated in the case study of Sect.~\ref{ssec:numerical},
the in-between positioning of the MLSA curve relatively to the two other
ones illustrates the partial regain by the multilevel scheme of the performance
lost due to the nested simulation of $X_{0}$. Once again, we point out
that MLSA exhibits some numerical instability when simulating the VaR,
while it remains very robust when simulating the ES. The VaR multilevel
scheme is non-linearly sensitive to the parametrisation of the step sizes
$(\gamma _{n})_{n\geq 1}$. In contrast, the ES multilevel scheme is significantly
more stable with respect to potential mis-tuning of the step sizes
$(\gamma _{n})_{n\geq 1}$.

Within a variation range of the order $\pm 0.2$, the RMSE slopes reported
in Table~\ref{tbl:swap:mlsa:slopes} roughly match the exponents of the
complexities in Theorems~\ref{thm:cost:nsa} and~\ref{thm:cost:mlsa}. The
VaR-focused MLSA slope may be improved with deeper hyperparameter tuning.
This costly hyperparameter tuning phase is in fact an effect of the constraint
$\bar\lambda _{2}\gamma _{1}>2$ on the optimal choice $\beta =1$ as stressed
in the VaR-focused Theorem~\ref{thm:cost:mlsa}(\ref{thm:cost:mlsa:var}).
Although~$\bar\lambda _{2}$ is explicit in \eqref{eq:lambda}, it is tedious
to compute; hence the necessary upstream tuning phase.

The bottom panels of Fig.~\ref{fig:swap:comparison} display a quasi-log-linear
dependency of the average execution time with respect to the prescribed
accuracy. The corresponding slopes in Table~\ref{tbl:swap:mlsa:slopes}
match quite accurately the theoretical exponents from the complexity Theorems~\ref{thm:cost:nsa}
and~\ref{thm:cost:mlsa}.

\section{Conclusion}
\label{sec5}

In this work, we presented a stochastic approximation problem whose solution
retrieves the VaR and ES of the loss of a financial derivatives portfolio.
This loss, however, can only be simulated in a nested Monte Carlo fashion.
A naively nested algorithm performs one inner simulation layer within each
outer update of a stochastic approximation scheme. A single (bias) parameter
allows to control simultaneously the bias, the amounts of iterations and
the complexity of the nested algorithm. But considering a prescribed accuracy
$\varepsilon $, this algorithm can only optimally achieve a complexity
of the order $\varepsilon ^{-3}$. Our multilevel algorithm combines multiple
paired estimates obtained by a geometric progression of biased nested schemes
in a manner that reduces the overall simulation complexity.
\\

According to user intent, whether it is to simulate the VaR or the ES,
our theoretical study identifies two corresponding optimal parametrisations.
The VaR-focused parameter setup results optimally in a complexity of the
order $\varepsilon ^{-2-\delta}$, where $\delta \in (0,1)$ is a specific
parameter that depends on the integrability degree of the loss, while the
ES-focused parameter setup yields a complexity of the order
$\varepsilon ^{-2}|\ln{\varepsilon}|^{2}$. These complexities are attained
for a step size that is proportional to $\frac{1}{n}$, albeit at the cost
of a nontrivial constraint on the proportionality factor. The empirical
studies highlight the clear outperformance of the multilevel algorithm
over the nested one. They demonstrate how in situations where the inner
simulations cannot be cut short by explicit formulas, utilising the multilevel
scheme results in a partial yet significant regain on the performance gap
between the nested SA scheme and the classical (non-nested) Robbins--Monro
scheme. This finding is of great interest when the latter scheme does not
apply. The numerics have also underlined some degree of instability of
MLSA when estimating the VaR, while they highlighted high robustness when
estimating the ES. The former behaviour is in line with the aforementioned
proportionality factor constraint on the step sizes. As the ES is increasingly
replacing the VaR as a fundamental financial risk measure, our ES-focused
multilevel algorithm, running in
$\OO (\varepsilon ^{-2}|\ln{\varepsilon}|^{2})$ for a prescribed accuracy
$\varepsilon $, should be of practical value for risk management.
\\

As a follow-up to this paper, we have looked into complementing our non-asymptotic
error controls with asymptotic error distributions via central limit theorems;
see Cr\'epey et al.~\cite{amlsa}. We have also studied there the Polyak--Ruppert
variants of the NSA and MLSA methods as a means to rid the computationally
optimal case of the nontrivial constraint on the learning rate. Antithetic
multilevel schemes are known to reduce the variances of the paired estimators
within each independent level of MLSA; see~Giles et al.~\cite{giles2024efficient},
Giles and Szpruch~\cite{10.1007/978-3-642-41095-6_16}, Bourgey et al.~\cite{BourgeyDeMarcoGobetZhou+2020+131+161}
and Xu et al.~\cite{XU2024115745}. This incentivises future research in
this direction. A natural question that arises from our study is whether
in the case of the VaR estimation, one could close in a simple way the
performance gap between the multilevel scheme and the classical Robbins--Monro
scheme. A possible line of research in this direction would be to explore
the added benefit of an adaptive selection of inner samples at each level
of MLSA, as proposed by Giles and Haji-Ali~\cite{doi:10.1137/18M1173186}.
Finally, in order to build stronger estimators, one could investigate other
aggregation methods such as the multi-step Richardson--Romberg extrapolation
of Frikha and Huang~\cite{FRIKHA20154066}.

\paragraph{Acknowledgements.}
The research of S. Cr\'{e}pey has benefited from the support of \textit{Chair Capital Markets Tomorrow: Modeling and Computational Issues} under the aegis of the Institut Europlace de Finance, a joint initiative of Laboratoire de Probabilit\'{e}s, Statistique et Mod\'{e}lisation (LPSM) / Universit\'{e} Paris Cit\'{e} and Cr\'{e}dit Agricole CIB; \textit{Chair Futures of Quantitative Finance}, a partnership between LPSM at Universit\'{e} Paris Cit\'{e}, CERMICS at \'{E}cole Nationale des Ponts et Chauss\'{e}es, and BNP Paribas Global Markets.

The research of N. Frikha has benefited from the support of the Institut Europlace de Finance.

\appendix

\section{Proof of Lemma~\ref{lmm:thetah->theta0}}
\label{prf:thetah->theta0}

\noindent
\textbf{\emph{Step~1. Accumulation points of $(\xi ^{h}_{\star})_{h\in \mathcal{H}}$.}}\
Let $h\in \mathcal{H}$. By Lemma~\ref{lmm:stochopt},
the set~$\Theta _{h}$ is a closed non-empty bounded interval that coincides
with the set of roots of $V_{h}'$.

Because $(X_{h})_{h\in \mathcal{H}}$ converges in distribution to
$X_{0}$ as $\mathcal{H}\ni h\downarrow 0$, the sequence
$(F_{X_{h}})_{h\in \mathcal{H}}$ of distribution functions converges pointwise
on $\mathbb{R}$ towards~$F_{X_{0}}$ as~\text{${\mathcal{H}\ni h
\downarrow 0}$.} The distribution function $F_{X_{0}}$ being continuous,
the second Dini theorem implies that
$(F_{X_{h}})_{h\in \mathcal{H}}$ converges uniformly on $\mathbb{R}$ towards
$F_{X_{0}}$ as $\mathcal{H}\ni h\downarrow 0$. According to the definition
of the functions $(V_{h}')_{h\in \overline{\mathcal{H}}}$, it follows that
$(V_{h}')_{h\in \mathcal{H}}$ converges uniformly on $\mathbb{R}$ towards
$V_{0}'$. For any $\xi ^{h}_{\star}\in \Theta _{h}$,
$h\in \mathcal{H}$, one has
\begin{equation*}
|V_{0}'(\xi ^{h}_{\star})|=|V_{0}'(\xi ^{h}_{\star})-V_{h}'(\xi ^{h}_{
\star})|\leq \sup _{\xi \in \mathbb{R}}{|V_{0}'(\xi )-V_{h}'(\xi )|}.
\end{equation*}
Let
$\bar\xi ^{h}_{\star}\in\Argmax _{\xi \in \Theta _{h}}\dist (\xi ,
\Theta _{0})$.
$\Theta _{h}$ being a closed non-empty bounded interval, it is convex and $\bar\xi ^{h}_{\star}$ exists.
As
$\lim _{\mathcal{H}\ni h\downarrow 0}{\sup _{\xi \in \mathbb{R}}{|V_{0}'(
\xi )-V_{h}'(\xi )|}}=0$, we get
\begin{equation}
\label{eq:lim-k-thetah}
\lim _{\mathcal{H}\ni h\downarrow 0}{V_{0}'(\bar\xi _{\star}^{h})}=0.
\end{equation}
With $\{V_{0}'=0\}$ being bounded according to Lemma~\ref{lmm:stochopt},
we deduce that the sequence
$(\bar\xi ^{h}_{\star})_{h\in \mathcal{H}}$ is bounded. Using
\eqref{eq:lim-k-thetah} and the continuity of $V'_{0}$, any accumulation
point~$\bar\xi ^{0}_{\star}$ of the sequence
$(\bar\xi ^{h}_{\star})_{h\in \mathcal{H}}$ satisfies
$V'_{0}(\bar\xi ^{0}_{\star})=0$, so that
$\bar\xi ^{0}_{\star}\in \Theta _{0}$ and
\begin{equation*}
\lim _{\mathcal{H}\ni h\downarrow 0}{\dist (\bar\xi ^{h}_{\star},
\Theta _{0})}=0.
\end{equation*}

\noindent
\textbf{\emph{Step~2. Limit of $(\chi ^{h}_{\star})_{h\in \mathcal{H}}$.}}\
We now assume that
$(X_{h})_{h\in \mathcal{H}}$ converge to $X_{0}$ in~$L^{1}(
\mathbb{P})$. By Lemma~\ref{lmm:stochopt}, for any
$h\in \overline{\mathcal{H}}$,
\begin{equation*}
|\chi ^{h}_{\star}-\chi ^{0}_{\star}| =\Big|\min _{\xi}{V_{h}(\xi )}-
\min _{\xi}{V_{0}(\xi )}\Big| \leq \max _{\xi}{|V_{h}(\xi)-V_{0}(\xi )|}
\leq \frac{1}{1-\alpha}\mathbb{E}[|X_{h}-X_{0}|]
\end{equation*}
so that
$\lim _{\mathcal{H}\ni h\downarrow 0}\chi ^{h}_{\star}=\chi ^{0}_{
\star}$. \qed

\section{Proof of Proposition~\ref{prp:bias-cv}}
\label{prf:bias-cv}

\noindent
\textbf{\emph{Step~1. Asymptotic expansion of $\xi ^{h}_{\star}$ as $\mathcal{H}\ni h\downarrow 0$.}}\
Since the density function $f_{X_{0}}$
of $X_{0}$ is strictly positive, the set
$\Theta _{0}=\Argmin{V_{0}}=\{\xi ^{0}_{\star}\}$ is a singleton so that
by Lemma~\ref{lmm:thetah->theta0},
$\xi ^{h}_{\star}\rightarrow \xi ^{0}_{\star}$ as
$\mathcal{H}\ni h\downarrow 0$. Also, note that
\begin{equation}
\label{eq:theta-bias-1}
F_{X_{h}}(\xi ^{h}_{\star})-F_{X_{h}}(\xi ^{0}_{\star})=\alpha -F_{X_{h}}(
\xi ^{0}_{\star})=F_{X_{0}}(\xi ^{0}_{\star})-F_{X_{h}}(\xi ^{0}_{
\star}).
\end{equation}
On the one hand, as $\mathcal{H}\ni h\downarrow 0$, Assumption~\ref{asp:Xh->X0}(\ref{asp:Xh->X0-ii})
gives
\begin{equation}
\label{eq:theta-bias-2}
F_{X_{0}}(\xi ^{0}_{\star})-F_{X_{h}}(\xi ^{0}_{\star})=-v(\xi ^{0}_{
\star})h+\oo (h) \qquad \text{as } \mathcal{H}\ni h\downarrow 0.
\end{equation}
On the other hand, a first-order Taylor expansion with integral remainder
gives
\begin{equation}
\label{eq:theta-bias-3}
F_{X_{h}}(\xi ^{h}_{\star})-F_{X_{h}}(\xi ^{0}_{\star}) =(\xi ^{h}_{
\star}-\xi ^{0}_{\star})\int _{0}^{1}f_{X_{h}}\big(t\xi ^{0}_{\star}+(1-t)
\xi ^{h}_{\star}\big)\mathrm{d}t.
\end{equation}
Since
$\lim _{\mathcal{H}\ni h\downarrow 0}{\xi ^{h}_{\star}}=\xi ^{0}_{
\star}$ and $(f_{X_{h}})_{h\in \mathcal{H}}$ converges by Assumption~\ref{asp:Xh->X0}(\ref{asp:Xh->X0-iii})
locally uniformly to the strictly positive density function
$f_{X_{0}}$, we have for small $h\in \mathcal{H}$ that
$\int _{0}^{1}f_{X_{h}}(t\xi ^{0}_{\star}+(1-t)\xi ^{h}_{\star})
\mathrm{d}t>0$ and \text{${\lim _{h\downarrow 0}\int _{0}^{1}f_{X_{h}}(t
\xi ^{0}_{\star}+(1-t)\xi ^{h}_{\star})\mathrm{d}t=f_{X_{0}}(\xi ^{0}_{
\star})}$.} Combining \eqref{eq:theta-bias-1}--\eqref{eq:theta-bias-3}, we deduce
\begin{equation*}
\begin{aligned}
h^{-1}(\xi ^{h}_{\star}-\xi ^{0}_{\star}) &=-
\frac{v(\xi ^{0}_{\star})}{\int _{0}^{1}f_{X_{h}}(t\xi ^{0}_{\star}+(1-t)\xi ^{h}_{\star})\mathrm{d}t}+
\oo (1)
\\
&=-\frac{v(\xi ^{0}_{\star})}{f_{X_{0}}(\xi ^{0}_{\star})}+\oo (1)
\qquad \text{as } \mathcal{H}\ni h\downarrow 0.
\end{aligned}
\end{equation*}

\noindent
\textbf{\emph{Step~2. Asymptotic expansion of $\chi ^{h}_{\star}$ as $\mathcal{H}\ni h\downarrow 0$.}}\
Going back to the definitions of the
functions $(V_{h})_{h\in \overline{\mathcal{H}}}$ in \eqref{eq:V0} and
\eqref{eq:Vh}, one has
\begin{align}
\label{eq:dev-ch-c}
\chi ^{h}_{\star}-\chi ^{0}_{\star }&=V_{h}(\xi ^{h}_{\star})-V_{0}(
\xi ^{0}_{\star})
\nonumber
\\
&=\mathbb{E}[X_{h}\,|\,X_{h}>\xi ^{h}_{\star}]-\mathbb{E}[X_{0}\,|\,X_{0}>
\xi ^{0}_{\star}]
\nonumber
\\
&=\frac{1}{1-\alpha}(\mathbb{E}[X_{h}\mathds1_{\{X_{h}>\xi ^{h}_{
\star}\}}]-\mathbb{E}[X_{0}\mathds1_{\{X_{0}>\xi ^{0}_{\star}\}}])
\nonumber
\\
&=\frac{1}{1-\alpha}\bigg(\int _{\xi ^{h}_{\star}}^{\infty}\xi
\mathrm{d}F_{X_{h}}(\xi )-\int _{\xi ^{0}_{\star}}^{\infty}\xi
\mathrm{d}F_{X_{0}}(\xi )\bigg)
\nonumber
\\
&=\frac{1}{1-\alpha}\bigg(\int _{\xi ^{h}_{\star}}^{\xi ^{0}_{\star}}
\xi f_{X_{h}}(\xi )\mathrm{d}\xi +\int _{\xi ^{0}_{\star}}^{\infty}
\xi \mathrm{d}(F_{X_{h}}-F_{X_{0}})(\xi )\bigg).
\end{align}
On the one hand, a change of variable and Proposition~\ref{prp:bias-cv}
give
\begin{align}
\label{expansion:term:1}
\int _{\xi ^{h}_{\star}}^{\xi ^{0}_{\star}}\xi f_{X_{h}}(\xi )
\mathrm{d}\xi &=(\xi ^{0}_{\star}-\xi ^{h}_{\star})\int _{0}^{1}\big(t
\xi ^{h}_{\star}+(1-t)\xi ^{0}_{\star}\big)f_{X_{h}}\big(t\xi ^{h}_{
\star}+(1-t)\xi ^{0}_{\star}\big)\mathrm{d}t
\nonumber
\\
&=v(\xi ^{0}_{\star})\xi ^{0}_{\star }h+\oo (h).
\end{align}
On the other hand, for $h\in \overline{\mathcal{H}}$, integrating by parts
yields
\begin{equation*}
\begin{aligned}
\int _{\xi ^{0}_{\star}}^{\infty}\xi \mathrm{d}F_{X_{h}}(\xi ) &=-
\xi \big(1-F_{X_{h}}(\xi )\big)\big|_{\xi ^{0}_{\star}}^{\infty}+
\int _{\xi ^{0}_{\star}}^{\infty}\big(1-F_{X_{h}}(\xi )\big)
\mathrm{d}\xi
\\
&=\xi ^{0}_{\star}\big(1-F_{X_{h}}(\xi ^{0}_{\star})\big)+\int _{\xi ^{0}_{
\star}}^{\infty}\big(1-F_{X_{h}}(\xi )\big)\mathrm{d}\xi
\end{aligned}
\end{equation*}
and hence
\begin{equation*}
\int _{\xi ^{0}_{\star}}^{\infty}\xi \mathrm{d}(F_{X_{h}}-F_{X_{0}})(
\xi ) =\xi ^{0}_{\star}\big(F_{X_{0}}(\xi ^{0}_{\star})-F_{X_{h}}(
\xi ^{0}_{\star})\big)+\int _{\xi ^{0}_{\star}}^{\infty}\big(F_{X_{0}}(
\xi )-F_{X_{h}}(\xi )\big) \mathrm{d}\xi .
\end{equation*}
By plugging the first-order Taylor expansion of
$F_{X_{h}}-F_{X_{0}}$ in Assumption~\ref{asp:Xh->X0}(\ref{asp:Xh->X0-ii})
into the right-hand side of the above identity, we obtain
\begin{equation}
\label{eq:int-asymp}
\int _{\xi ^{0}_{\star}}^{\infty}\xi \mathrm{d}(F_{X_{h}}-F_{X_{0}})(
\xi ) =-\bigg(\xi ^{0}_{\star }v(\xi ^{0}_{\star})+\int _{\xi ^{0}_{
\star}}^{\infty }v(\xi ) \mathrm{d}\xi \bigg)h+\oo (h).
\end{equation}
Finally, by combining \eqref{eq:dev-ch-c}--\eqref{eq:int-asymp}, we get
\begin{equation*}
\chi ^{h}_{\star}-\chi ^{0}_{\star}=-h\int _{\xi ^{0}_{\star}}^{
\infty }\frac{v(\xi )}{1-\alpha} \mathrm{d}\xi +\oo (h), \qquad
\text{as } \mathcal{H}\ni h\downarrow 0.\qed
\end{equation*}

\section{Proof of Theorem~\ref{thm:variance-cv}}
\label{prf:variance-cv}

We follow a strategy similar to the one developed in Costa and Gadat~\cite[Sect.~2.1]{10.1214/21-EJS1908}
and define, for $h\in \overline{\mathcal{H}}$, $\mu\geq0$, and
$q \in \mathbb{N}_0$, the Lyapunov function
$\mathcal{L}_{h,q}:\mathbb{R}\to \mathbb{R}_{+}$ by
\begin{equation}
\label{eq:Lq}
\mathcal{L}^{\mu}_{h,q}(\xi )=\big(V_{h}(\xi )-V_{h}(\xi ^{h}_{\star})
\big)^{q}\exp \Big(\mu\big(V_{h}(\xi )-V_{h}(\xi ^{h}_{\star})\big)\Big),
\qquad \xi \in \mathbb{R}.
\end{equation}
Hereafter, we set $\mu_{h,q}=q\|V_h''\|_\infty$ and $\bar{\mathcal{L}}_{h,q}=\mathcal{L}^{\mu_{h,q}}_{h,q}$, $h\in\overline{\mathcal{H}}$, for $q\in\mathbb{N}$.
\begin{lemma}
\label{lmm:lyapunov}
Under Assumption~\ref{asp:misc}, for any
$h\in \overline{\mathcal{H}}$, $\mu\geq0$ and $q\in \mathbb{N}$, we have:
\begin{enumerate}[\rm(i)]
\item
\label{lmm:lyapunov-i}
$\mathcal{L}^{\mu}_{h,q}$ is twice continuously differentiable on
$\mathbb{R}$.
\item
\label{lmm:lyapunov-i-bis}
For any $\xi \in \mathbb{R}$,
\begin{equation*}
\bar{\mathcal{L}}_{h,q}(\xi )
\leq k_{\alpha}^{q} |\xi -\xi ^{h}_{\star}|^{q}
\exp \bigg(\frac{q}{1-\alpha}k_{\alpha}\sup_{h'\in\mathcal{H}}\|f_{X_{h'}}\|_\infty|\xi-\xi^{h}_{\star}|\bigg).
\end{equation*}
\item
\label{lmm:lyapunov-ii}
For any $\xi \in \mathbb{R}$,
\begin{equation*}
V_{h}'(\xi )(\mathcal{L}^{\mu}_{h,q})'(\xi )\geq \lambda^{\mu} _{h,q}\mathcal{L}^{\mu}_{h,q}(\xi ),
\qquad \text{where }
\lambda^{\mu}_{h,q}:=\frac{3}{8}qV_{h}''(\xi ^{h}_{\star})
\wedge\mu\frac{V_{h}''(\xi ^{h}_{\star})^{4}}{4[V_{h}'']_{{\mathrm{Lip}}}^{2}}.
\end{equation*}
Furthermore, for any $q'\geq q$,
$\inf _{h\in \overline{\mathcal{H}}}{\lambda^{\mu_{h,q'}}_{h,q}}>0$.
In particular, denoting $\bar\lambda_{h,q}:=\lambda^{\mu_{h,q}}_{h,q}$, then $\inf_{h\in\overline{\mathcal{H}}}\bar{\lambda}_{h,q}>0$.
\item
\label{lmm:lyapunov-iii}
For any $\xi \in \mathbb{R}$,
\begin{equation*}
|(\mathcal{L}^{\mu}_{h,q})''(\xi )| \leq \eta^{\mu}_{h,q}\big(
\mathcal{L}^{\mu}_{h,q}(\xi )+\mathcal{L}^{\mu}_{h,q-1}(\xi )\big),
\end{equation*}
where
\begin{equation*}
\eta^{\mu}_{h,q}
:= (q\vee\mu)\|V_{h}''\|_{\infty}
+ k_{\alpha}^{2}\mu(\mu\vee2)
+ q\big(2\mu \vee (q-1)\big)
\bigg(\frac{3k_{\alpha}^{2}[V_{h}'']_{{\mathrm{Lip}}}^{2}}{V_{h}''(\xi ^{h}_{\star})^{3}}
\vee \frac{3\|V_{h}''\|_{\infty}^{2}}{V_{h}''(\xi ^{h}_{\star})}\bigg).
\end{equation*}
Besides, for any $q'\geq q$,
$|\lambda^{\mu_{h,q'}}_{h,q}|^{2}\leq\eta^{\mu_{h,q'}}_{h,q}$ and
$\sup _{h\in \overline{\mathcal{H}}}{\eta^{\mu_{h,q'}}_{h,q}}<\infty$.
In particular, denoting $\bar{\eta}_{h,q}:=\eta^{\mu_{h,q}}_{h,q}$, then $|\bar{\lambda}_{h,q}|^2\leq\bar{\eta}_{h,q}$ and $\sup _{h\in \overline{\mathcal{H}}}{\bar{\eta}_{h,q}}<\infty$.
\item
\label{lmm:lyapunov-iv}
For any $\mu'\geq0$ and any $\xi \in \mathbb{R}$,
\begin{equation*}
(\xi -\xi ^{h}_{\star})^{2q} \leq \kappa _{h,q}\big(\mathcal{L}^{\mu}_{h,q}(
\xi )+\mathcal{L}^{\mu'}_{h,2q}(\xi )\big), \qquad \text{where }
\kappa _{h,q}:=\frac{3^{q}}{V_{h}''(\xi ^{h}_{\star})^{q}}\vee
\frac{3^{2q}[V_{h}'']_{{\mathrm{Lip}}}^{2q}}{V_{h}''(\xi ^{h}_{\star})^{4q}}.
\end{equation*}
Moreover,
$\sup _{h\in \overline{\mathcal{H}}}{\kappa _{h,q}}<\infty $.
\end{enumerate}
\end{lemma}
\begin{proof}
See Appendix~\ref{prf:lyapunov}.
\end{proof}

The next result concerns our special choice of learning sequence.

\begin{lemma}
\label{lmm:gamma}
Let
\begin{equation}
\label{eq:phi}
\varphi _{\eta}(t):=
\begin{cases}
\eta ^{-1}(t^{\eta}-1), &\quad \eta \neq 0,
\\
\ln{t}+\gamma ^{\mathcal{E}}, &\quad \eta =0,
\end{cases}
\qquad t>0,
\end{equation}
where $\gamma ^{\mathcal{E}}$ is the Euler--Mascheroni constant. Let
$\gamma _{n}=\gamma _{1}n^{-\beta}$, $\gamma _{1}>0$,
$\beta \in (0,1]$, and for $0\leq k\leq n$,
\begin{equation*}
\Pi _{k+1:n}:=\prod _{j=k+1}^{n}(1-\lambda \gamma _{j}+\zeta \gamma _{j}^{2}),
\end{equation*}
where $\lambda $, $\zeta $ are some positive parameters, with the convention
$\prod _{\varnothing}=1$. Assume that
$1-\lambda \gamma _{n}+\zeta \gamma _{n}^{2}>0$ for any positive integer
$n$. Then for $n\geq 2$, we have:
\begin{enumerate}[\rm(i)]
\item For any $0\leq k\leq n$,
\begin{equation*}
\Pi _{k+1:n}\leq
\begin{cases}
\exp (\zeta \frac{\pi ^{2}}{6}\gamma _{1}^{2}+
\frac{\lambda\gamma _{1}}{2})
\frac{(k+1)^{\lambda \gamma _{1}}}{(n+1)^{\lambda \gamma _{1}}}, &
\quad \beta =1,
\\
\exp (-\lambda \gamma _{1}(\varphi _{1-\beta}(n+1)-\varphi _{1-\beta}(k+1)))
\\
\quad
\times \exp (2^{2\beta}\zeta \gamma _{1}^{2}(\varphi _{1-2\beta}(n+1)-
\varphi _{1-2\beta}(k+1))), &\quad \beta \in (0,1).
\end{cases}
\end{equation*}
\item For any $p\geq 2$,
\begin{equation*}
\begin{aligned}
&\sum _{k=1}^{n}\gamma _{k}^{p}\Pi _{k+1:n}
\\
&\leq
\begin{cases}
\gamma _{1}^{p}\exp (\zeta \frac{\pi ^{2}}{6}\gamma _{1}^{2}+
\frac{\lambda\gamma _{1}}{2}+\ln (2)(\lambda\gamma _{1} \vee p))
\frac{\varphi _{\lambda \gamma _{1}-p+1}(n+1)}{(n+1)^{\lambda \gamma _{1}}},
&\quad \beta =1,
\\
\zeta^{-1}\gamma _{1}^{p-2}
\exp (2^{2\beta +1}\zeta
\gamma _{1}^{2}\varphi _{1-2\beta}(n+1)-
\frac{\lambda\gamma _{1}}{2}\varphi _{1-\beta}(n+1))\\
+2^{2\beta}\gamma _{1}^{p}\exp (-2^{-(\beta +2)}\lambda \gamma _{1}n^{1-
\beta})\varphi _{1-2\beta}(n+1)+
\frac{2^{1+(p-1)\beta}\gamma _{1}^{p-1}}{\lambda n^{(p-1)\beta}}, &
\quad \beta \in (0,1).
\end{cases}
\end{aligned}
\end{equation*}
\end{enumerate}
\end{lemma}

\begin{proof}
See Appendix~\ref{prf:gamma}.
\end{proof}

As in the proof of Lemma~\ref{lmm:lyapunov} in Appendix~\ref{prf:lyapunov},
we drop the superscript $\mu $ from our notation and write
$\mathcal{L}_{h,q}(\xi )$ for $\mathcal{L}^{\mu}_{h,q}(\xi )$. For any
$h\in \mathcal{H}$, we introduce the filtration
$\mathbb{F}^{h}=(\mathcal{F}^{h}_{n})_{n\geq 0}$, where
$\mathcal{F}^{h}_{n}=\sigma (\xi ^{h}_{0},\chi ^{h}_{0},X_{h}^{(1)},
\dots ,X_{h}^{(n)})$, $n\geq 0$.
Throughout, we fix a bias parameter
$h\in \mathcal{H}$, a strictly positive integer $q$ and $\mu >0$.

\begin{proof}[Proof of Theorem~\ref{thm:variance-cv}]
\noindent\textbf{\emph{Step~1. General inequality on
$\mathbb{E}[\mathcal{L}_{h,q}(\xi ^{h}_{n})]$.}}\
We first prove a general
inequality on $\mathbb{E}[\mathcal{L}_{h,q}(\xi ^{h}_{n})]$,
$n\geq 0$. We decompose the dynamics of $(\xi ^{h}_{n})_{n\geq 0}$, given
by the first component of \eqref{approximate:sgd:algorithm}, into
\begin{equation}
\label{eq:scheme-bis}
\xi ^{h}_{n+1}=\xi ^{h}_{n}-\gamma _{n+1}V_{h}'(\xi ^{h}_{n})-\gamma _{n+1}e^{h}_{n+1},
\end{equation}
where
\begin{equation*}
e^{h}_{n+1}:= H_{1}(\xi ^{h}_{n},X_{h}^{(n+1)})-V_{h}'(\xi ^{h}_{n}),
\qquad n\geq 0,
\end{equation*}
is an $(\mathbb{F}^{h},\mathbb{P})$-martingale increment. We test the Lyapunov
function $\mathcal{L}_{h,q}$ along the dynamics~\eqref{eq:scheme-bis}.
Via a second-order Taylor expansion, we get
\begin{equation*}
\begin{aligned}
\mathcal{L}_{h,q}(\xi ^{h}_{n+1}) &=\mathcal{L}_{h,q}\big(\xi ^{h}_{n}-
\gamma _{n+1}V_{h}'(\xi ^{h}_{n})-\gamma _{n+1}e^{h}_{n+1}\big)
\\
&=\mathcal{L}_{h,q}(\xi ^{h}_{n})-\gamma _{n+1}\mathcal{L}_{h,q}'(
\xi ^{h}_{n})\big(V_{h}'(\xi ^{h}_{n})+e^{h}_{n+1}\big)
\\
&
\hphantom{=:}
+\gamma _{n+1}^{2} \big(H_{1}(\xi ^{h}_{n},X_{h}^{(n+1)})\big)^{2}
\int _{0}^{1}(1-t)\mathcal{L}_{h,q}''\big(t\xi ^{h}_{n+1}+(1-t)\xi ^{h}_{n}
\big) \mathrm{d}t.
\end{aligned}
\end{equation*}
It follows from Lemma~\ref{lmm:lyapunov}(\ref{lmm:lyapunov-ii})-(\ref{lmm:lyapunov-iii})
that
\begin{eqnarray}
\label{eq:dynamics}
&&\mathcal{L}_{h,q}(\xi ^{h}_{n+1})
\nonumber
\\
&&\leq \mathcal{L}_{h,q}(\xi ^{h}_{n})(1-\lambda ^{\mu}_{h,q}\gamma _{n+1})-
\gamma _{n+1}\mathcal{L}_{h,q}'(\xi ^{h}_{n})e^{h}_{n+1}+\eta ^{\mu}_{h,q}
\gamma _{n+1}^{2} \big(H_{1}(\xi ^{h}_{n},X_{h}^{(n+1)})\big)^{2}
\nonumber
\\
&&
\hphantom{=:}
\times \int _{0}^{1}(1-t)\Big(\mathcal{L}_{h,q}\big(t\xi ^{h}_{n+1}+(1-t)
\xi ^{h}_{n}\big)+\mathcal{L}_{h,q-1}\big(t\xi ^{h}_{n+1}+(1-t)\xi ^{h}_{n}
\big)\Big)\mathrm{d}t.
\end{eqnarray}
Let $t\in [0,1]$. The mean value theorem guarantees that there exists
$\widetilde{\xi}^{h}_{n}(t)\in \mathbb{R}$ with
\begin{equation}
\label{eq:mean-value}
V_{h}\big(t\xi ^{h}_{n+1}+(1-t)\xi ^{h}_{n}\big)=V_{h}(\xi ^{h}_{n})+tV_{h}'
\big(\widetilde{\xi}^{h}_{n}(t)\big)(\xi ^{h}_{n+1}-\xi ^{h}_{n}).
\end{equation}
From \eqref{eq:H1} and \eqref{approximate:sgd:algorithm},
\begin{equation}
\label{eq:|thetap-p+1|<}
|\xi ^{h}_{n+1}-\xi ^{h}_{n}| =\gamma _{n+1}|H_{1}(\xi ^{h}_{n},X_{h}^{(n+1)})|
\leq k_{\alpha}\gamma _{n+1}
\end{equation}
with $k_{\alpha}=\frac{\alpha}{1-\alpha}\vee 1$. Hence applying the triangle
inequality to \eqref{eq:mean-value} and recalling that $V_{h}$~is minimal
at $\xi ^{h}_{\star}$, we obtain
\begin{equation}
\label{eq:Vht-Vh<}
0\leq V_{h}\big(t \xi ^{h}_{n+1}+(1-t)\xi ^{h}_{n}\big)-V_{h}(\xi ^{h}_{
\star})\leq V_{h}(\xi ^{h}_{n})-V_{h}(\xi ^{h}_{\star})+k_{\alpha}^{2}
\gamma _{n+1}.
\end{equation}
Using the inequality
$\mathrm{e} ^{x}\leq \mathrm{e} \mathds1_{\{x\leq 1\}}+x^{q}\mathrm{e} ^{x}\mathds1_{\{x>1\}}
\leq \mathrm{e} (1+x^{q}\mathrm{e} ^{x})$, $x\in \mathbb{R}$, and \eqref{eq:Lq}, we get
for any $\xi \in \mathbb{R}$ that
\begin{equation}
\label{eq:L0<Lq}
\mathcal{L}_{h,0}(\xi )\leq \mathrm{e} \big(1+\mu ^{q}\mathcal{L}_{h,q}(\xi )
\big).
\end{equation}
Thus using \eqref{eq:Vht-Vh<} and then \eqref{eq:L0<Lq} gives
\begin{align}
\label{eq:Lt<}
&\mathcal{L}_{h,q}\big(t\xi ^{h}_{n+1}+(1-t)\xi ^{h}_{n}\big)
\nonumber
\\
&\leq \big(V_{h}(\xi ^{h}_{n})-V_{h}(\xi ^{h}_{\star})+k_{\alpha}^{2}
\gamma _{n+1}\big)^{q} \exp \Big(\mu \big(V_{h}(\xi ^{h}_{n})-V_{h}(
\xi ^{h}_{\star})+k_{\alpha}^{2}\gamma _{n+1}\big)\Big)
\nonumber
\\
&\leq 2^{q-1}\exp (\mu k_{\alpha}^{2}\gamma _{n+1} ) \big(\mathcal{L}_{h,q}(
\xi ^{h}_{n})+k_{\alpha}^{2q}\gamma _{n+1}^{q}\mathcal{L}_{h,0}(\xi ^{h}_{n})
\big)
\nonumber
\\
&\leq \sigma ^{\mu}_{q}\big(\mathcal{L}_{h,q}(\xi ^{h}_{n})+\gamma _{n+1}^{q}
\big),
\end{align}
where
\begin{equation*}
\sigma ^{\mu}_{q}:=2^{q-1}\exp (\mu k_{\alpha}^{2}\gamma _{1})\big((1+
\mathrm{e} \mu ^{q}k_{\alpha}^{2q}\gamma _{1}^{q})\vee \mathrm{e} k_{\alpha}^{2q}
\big)\geq 2^{q-1}.
\end{equation*}
Besides, by \eqref{eq:|thetap-p+1|<}, we have
\begin{equation}
\label{eq:update2<}
\gamma _{n+1}^{2}\big(H_{1}(\xi ^{h}_{n}, X_{h}^{(n+1)})\big)^{2}
\leq k_{\alpha}^{2}\gamma _{n+1}^{2}.
\end{equation}
Plugging the upper bounds \eqref{eq:Lt<} and \eqref{eq:update2<} into
\eqref{eq:dynamics} yields
\begin{align}
\label{eq:Lhq(xih{n+1})<}
\mathcal{L}_{h,q}(\xi ^{h}_{n+1}) &\leq \mathcal{L}_{h,q}(\xi ^{h}_{n})(1-
\lambda ^{\mu}_{h,q}\gamma _{n+1}+\zeta ^{\mu}_{h,q}\gamma ^{2}_{n+1})
-\gamma _{n+1}\mathcal{L}_{h,q}'(\xi ^{h}_{n})e^{h}_{n+1}
\nonumber
\\
&
\hphantom{=:}
+\zeta ^{\mu}_{h,q}\gamma _{n+1}^{2}\mathcal{L}_{h,q-1}(\xi ^{h}_{n})+
\zeta ^{\mu}_{h,q}\gamma _{n+1}^{q+1}
\end{align}
with
\begin{equation}
\label{eq:xihq}
\zeta ^{\mu}_{h,q} :=\frac{1}{2}\eta ^{\mu}_{h,q}k_{\alpha}^{2}\big((
\gamma _{1}\sigma ^{\mu}_{q}+\sigma ^{\mu}_{q-1})\vee \sigma ^{\mu}_{q}
\big),
\end{equation}
which according to the second part of Lemma~\ref{lmm:lyapunov}(\ref{lmm:lyapunov-iii})
satisfies, for any $q'\geq q$,
\begin{equation*}
\sup _{h\in \overline{\mathcal{H}}} \zeta ^{\mu _{h,q'}}_{h,q}<\infty .
\end{equation*}
In particular,
\begin{equation*}
\bar\zeta _{q}
:= \sup _{h\in \overline{\mathcal{H}}}\bar\zeta _{h,q}
:= \sup _{h\in \overline{\mathcal{H}}} \zeta ^{\mu _{h,q}}_{h,q}<\infty.
\end{equation*}
Since $(e^{h}_{n})_{n\geq 1}$ are $(\mathbb{F}^{h},\mathbb{P})$-martingale
increments, the tower law gives
\begin{equation*}
\mathbb{E}[\mathcal{L}_{h,q}'(\xi ^{h}_{n})e^{h}_{n+1}] =\mathbb{E}
\big[\mathcal{L}_{h,q}'(\xi ^{h}_{n})\mathbb{E}[e^{h}_{n+1}\,|\,
\mathcal{F}^{h}_{n}]\big] =0.
\end{equation*}
Hence taking expectations on both sides of \eqref{eq:Lhq(xih{n+1})<} leads
to
\begin{align}
\label{eq:ELq<}
\mathbb{E}[\mathcal{L}_{h,q}(\xi ^{h}_{n+1})] &\leq \mathbb{E}[
\mathcal{L}_{h,q}(\xi ^{h}_{n})](1- \lambda ^{\mu}_{h,q}\gamma _{n+1}+
\zeta ^{\mu}_{h,q}\gamma _{n+1}^{2})
\nonumber
\\
&
\hphantom{=:}
+\zeta ^{\mu}_{h,q}\gamma _{n+1}^{2}\mathbb{E}[\mathcal{L}_{h,q-1}(
\xi ^{h}_{n})] +\zeta ^{\mu}_{h,q}\gamma _{n+1}^{q+1}.
\end{align}

\noindent\textbf{\emph{Step~2. Inequality on
$\mathbb{E}[\bar{\mathcal{L}}_{h,1}(\xi ^{h}_{n})]$}}\
We prove here a
sharper upper estimate on
$\mathbb{E}[\bar{\mathcal{L}}_{h,1}(\xi ^{h}_{n})]$. Taking $q=1$ in
\eqref{eq:ELq<} and \eqref{eq:L0<Lq}, we get
\begin{align}
\label{inequality:lyapunov:funct:iterate}
\mathbb{E}[\mathcal{L}_{h,1}(\xi ^{h}_{n+1})] &\leq \mathbb{E}[
\mathcal{L}_{h,1}(\xi ^{h}_{n})]\big(1-\lambda ^{\mu}_{h,1}\gamma _{n+1}+(
\mathrm{e} \mu +1)\zeta ^{\mu}_{h,1}\gamma _{n+1}^{2}\big)
\nonumber
\\
&
\hphantom{=:}
+(\mathrm{e} +1)\zeta ^{\mu}_{h,1}\gamma _{n+1}^{2}.
\end{align}
Using \eqref{eq:xihq}, that
$|\bar\lambda _{h,q}|^{2}\leq \bar\eta _{h,q}$ by Lemma~\ref{lmm:lyapunov}(\ref{lmm:lyapunov-ii})
and that $x(1-cx)\leq (4c)^{-1}$ for $x\in \mathbb R$, $c>0$, we obtain
\begin{align}
\label{eq:lambda_gamma-zeta_gamma^2<1/8}
\bar\lambda _{h,q}\gamma _{n+1}-(\mathrm{e} \mu _{h,q}^{q}+1)\bar\zeta _{h,q}
\gamma _{n+1}^{2} & \leq \sqrt{\bar\eta _{h,q}}\,\gamma _{n+1}-
\bar\eta _{h,q}\bar{c}_{h,q}\gamma ^{2}_{n+1}
\nonumber
\\
&=\sqrt{\bar\eta _{h,q}}\,\gamma _{n+1}(1-\bar{c}_{h,q}\sqrt{
\bar\eta _{h,q}}\,\gamma _{n+1})
\nonumber
\\
&\leq \frac{1}{4\bar{c}_{h,q}} \leq \frac{1}{2},
\end{align}
where, recalling that $\mu >0$, $k_{\alpha}\geq 1$ and
$\sigma ^{\mu}_{h,q}\geq 2^{q-1}$,
\begin{align*}
c^{\mu}_{h,q} &:=(\mathrm{e} \mu ^{q} +1)
\frac{\zeta ^{\mu}_{h,q}}{\eta ^{\mu}_{h,q}} =\frac{1}{2}(\mathrm{e} \mu ^{q}+1)k_{
\alpha}^{2}\big((\gamma _{1}\sigma ^{\mu}_{q}+\sigma ^{\mu}_{q-1})
\vee \sigma ^{\mu}_{q}\big)>\frac{1}{2},
\\
\bar{c}_{h,q}&:= c^{\mu _{h,q}}_{h,q}.
\end{align*}
Hence for $q=1$,
\begin{equation*}
1-\bar\lambda _{h,1}\gamma _{n+1}+(\mathrm{e} \mu _{h,1}+1)\bar\zeta _{h,1}
\gamma _{n+1}^{2}\geq \frac{1}{2}.
\end{equation*}
Evaluating \eqref{inequality:lyapunov:funct:iterate} at
$\mu =\mu _{h,1}$ and iterating $n$ times the inequality yields
\begin{equation*}
\mathbb{E}[\bar{\mathcal{L}}_{h,1}(\xi ^{h}_{n})] \leq \mathbb{E}[
\bar{\mathcal{L}}_{h,1}(\xi ^{h}_{0})]\Pi _{1:n}^{h,1,1} +(\mathrm{e} +1)
\bar\zeta _{h,1}\sum _{k=1}^{n}\gamma _{k}^{2}\Pi _{k+1:n}^{h,1,1},
\end{equation*}
where
\begin{equation*}
\Pi _{k:n}^{h,q,q'}:=\prod _{j=k}^{n}
\big(1-\lambda _{h,q}^{\mu_{h,q'}}\gamma _{j}
+(\mathrm{e} \mu _{h,q'}^{q}+1)\zeta _{h,q}^{\mu_{h,q'}}\gamma _{j}^{2}\big),
\end{equation*}
with the convention $\prod _{\varnothing}=1$.

If $\beta =1$, invoking Lemma~\ref{lmm:gamma} with $p=2$ gives
\begin{equation*}
\mathbb{E}[\bar{\mathcal{L}}_{h,1}(\xi ^{h}_{n})]\leq \widehat{K}_{h,1}
\frac{\mathbb{E}[\bar{\mathcal{L}}_{h,1}(\xi ^{h}_{0})]}{(n+1)^{\lambda _{h,1}\gamma _{1}}}+
\bar{K}_{h,1}
\frac{\varphi _{\lambda _{h,1}\gamma _{1}-1}(n+1)}{(n+1)^{\lambda _{h,1}\gamma _{1}}},
\end{equation*}
where
\begin{align*}
\widehat{K}_{h,1}&:=\exp \bigg((\mathrm{e} \mu _{h,1}+1)\zeta _{h,1}
\frac{\pi ^{2}}{6}\gamma _{1}^{2}+
\frac{\gamma _{1}\bar\lambda _{h,1}}{2}\bigg),
\\
\bar{K}_{h,1}&:=\widehat{K}_{h,1}\gamma _{1}^{2}(\mathrm{e} +1)\bar\zeta _{h,1}
2^{\gamma _{1}\bar\lambda _{h,1}\vee 2}.
\end{align*}
Hence if $\bar\lambda _{h,1}\gamma _{1}>1$, recalling \eqref{eq:phi}, we
have
\begin{equation}
\label{estimate:Lypanuov:1:gamma:1/n}
\mathbb{E}[\bar{\mathcal{L}}_{h,1}(\xi ^{h}_{n})]\leq K^{1}_{h,1}
\gamma _{n}
\end{equation}
with
$K^{1}_{h,1}:=\gamma _{1}^{-1}(\widehat{K}_{h,1}\mathbb{E}[
\bar{\mathcal{L}}_{h,1}(\xi ^{h}_{0})]+\bar{K}_{h,1})$.

Otherwise, if $\beta \in (0,1)$, applying Lemma~\ref{lmm:gamma} with
$p=2$ yields
\begin{align*}
&\mathbb{E}[\bar{\mathcal{L}}_{h,1}(\xi ^{h}_{n})]
\\
&\leq \bigg(\mathbb{E}[\bar{\mathcal{L}}_{h,1}(\xi ^{h}_{0})]+
\frac{\mathrm{e} +1}{\mathrm{e} \mu _{h,1}+1}\bigg)
\\
&
\hphantom{=:}
\times \exp \bigg(2^{2\beta +1} (\mathrm{e} \mu _{h,1}+1)\bar\zeta _{h,1}
\gamma _{1}^{2}\varphi _{1-2\beta}(n+1)
-\frac{\bar\lambda_{h,1}\gamma _{1}}{2}\varphi _{1-\beta}(n+1)\bigg)
\\
&
\hphantom{=:}
+(\mathrm{e} +1)\bar\zeta _{h,1}\bigg(2^{2\beta}\gamma _{1}^{2}\exp (-2^{-(
\beta +2)}\bar\lambda _{h,1}\gamma _{1}n^{1-\beta})\varphi _{1-2\beta}(n+1)+
\frac{2^{\beta +1}\gamma _{1}}{\bar\lambda _{h,1}n^{\beta}}\bigg).
\end{align*}
Hence for any $\beta \in (0,1)$ and any positive integer $n$, we have
\begin{equation}
\label{estimate:Lypanuov:1:gamma:1/nbeta}
\mathbb{E}[\bar{\mathcal{L}}_{h,1}(\xi ^{h}_{n})]\leq K_{h,1}^{\beta}
\gamma _{n}
\end{equation}
with
\begin{align*}
K_{h,1}^{\beta }&:=\gamma _{1}^{-1}\bigg(\mathbb{E}[\mathcal{L}_{h,1}(
\xi ^{h}_{0})]+\frac{\mathrm{e} +1}{\mathrm{e} \mu _{h,1}+1}\bigg)
\\
&
\hphantom{=:}
\times \sup_{n\geq1}\bigg(n^\beta\exp \big(2^{2\beta +1} (\mathrm{e} \mu _{h,1}+1)\bar\zeta _{h,1}
\gamma _{1}^{2}\varphi _{1-2\beta}(n+1)\big)
\\
&\hphantom{=:\gamma_1^{-1}\sup_{n\geq1}\bigg(}
\times
\exp \bigg(-\frac{\bar\lambda_{h,1}\gamma _{1}}{2}\varphi _{1-\beta}(n+1)\bigg)\bigg)
\\
&
\hphantom{=:}
+(\mathrm{e} +1)\bar\zeta _{h,1}\\
&
\hphantom{=:}
\times
\bigg(2^{2\beta}\gamma _{1}\sup _{n\geq 1}
\big(n^{\beta}\exp (-2^{-(\beta +2)}\bar\lambda _{h,1}\gamma _{1}n^{1-
\beta})\varphi _{1-2\beta}(n+1)\big)
+\frac{2^{\beta +1}}{\bar\lambda _{h,1}}\bigg).
\end{align*}

\noindent\textbf{\emph{Step~3. Inequality on
$\mathbb{E}[\bar{\mathcal{L}}_{h,2}(\xi ^{h}_{n})]$.}}\
The same reasoning in Step~2 with $q=1$ and $\mu=\mu_{h,2}$ yields, assuming that $\lambda^{\mu_{h,2}}_{h,1}\gamma_1
\geq\bar{\lambda}_{h,1}\gamma_1>1$ if $\beta=1$,
\begin{equation}
\label{estimate:Lypanuov:1:2}
\mathbb{E}[\mathcal{L}_{h,1}(\xi ^{h}_{n})]\leq \widetilde{K}_{h,1}^{\beta}
\gamma _{n},
\end{equation}
for some positive constants $(\widetilde{K}_{h,1}^{\beta})_{h\in\mathcal{H}}$ independent of $n$.

Take now $q=2$ and $\mu =\mu _{h,2}$ in \eqref{eq:ELq<}.
Using \eqref{estimate:Lypanuov:1:2} gives
\begin{align}
\label{inequality:lyapunov:funct:iterate:bis}
&\mathbb{E}[\bar{\mathcal{L}}_{h,2}(\xi ^{h}_{n+1})]
\nonumber
\\
&\leq \mathbb{E}[\bar{\mathcal{L}}_{h,2}(\xi ^{h}_{n})](1-
\bar\lambda _{h,2}\gamma _{n+1}+\bar\zeta _{h,2}\gamma _{n+1}^{2}) +
\bar\zeta _{h,2}\gamma _{n+1}^{2}\mathbb{E}[\mathcal{L}_{h,1}(\xi ^{h}_{n})]
+\bar\zeta _{h,2}\gamma _{n+1}^{3}
\nonumber
\\
&\leq \mathbb{E}[\bar{\mathcal{L}}_{h,2}(\xi ^{h}_{n})]\big(1-
\bar\lambda _{h,2}\gamma _{n+1}+(\mathrm{e} \mu _{h,2}^{2}+1)\bar\zeta _{h,2}
\gamma _{n+1}^{2}\big)
+(2^{\beta}\widetilde{K}_{h,1}^{\beta}+1)\bar\zeta _{h,2}\gamma _{n+1}^{3}.
\end{align}
From \eqref{eq:lambda_gamma-zeta_gamma^2<1/8}, we get
\begin{equation*}
1-\bar\lambda _{h,2}\gamma _{n+1}-(\mathrm{e} \mu _{h,2}^{2}+1)\bar\zeta _{h,2}
\gamma _{n+1}^{2}\geq \frac{1}{2}.
\end{equation*}
Hence iterating $n$ times the inequality
\eqref{inequality:lyapunov:funct:iterate:bis} yields
\begin{equation*}
\mathbb{E}[\bar{\mathcal{L}}_{h,2}(\xi ^{h}_{n})]\leq \mathbb{E}[
\bar{\mathcal{L}}_{h,2}(\xi ^{h}_{0})]\Pi ^{h,2,2}_{1:n}+(2^{\beta }\widetilde{K}_{h,1}^{
\beta}+1)\bar\zeta _{h,2}\sum _{k=1}^{n}\gamma _{k}^{3}\Pi ^{h,2,2}_{k+1:n}.
\end{equation*}

If $\gamma _{n}=\gamma _{1}n^{-1}$, by Lemma~\ref{lmm:gamma} for
$p=3$, we have
\begin{equation}
\label{sum:square:gamma:times:prod:final:case:1/n}
\mathbb{E}[\bar{\mathcal{L}}_{h,2}(\xi ^{h}_{n})] \leq \widehat{K}_{h,2}
\frac{\mathbb{E}[\bar{\mathcal{L}}_{h,2}(\xi ^{h}_{0})]}{(n+1)^{\bar\lambda _{h,2}\gamma _{1}}}
+\bar{K}_{h,2}
\frac{\varphi _{\bar\lambda _{h,2}\gamma _{1} -2}(n+1)}{(n+1)^{\bar\lambda _{h,2}\gamma _{1}}},
\end{equation}
where
\begin{align*}
\widehat{K}_{h,2}& :=\exp \bigg((\mathrm{e} \mu _{h,2}^{2}+1)\bar\zeta _{h,2}
\frac{\pi ^{2}}{6}\gamma _{1}^{2}+
\frac{\gamma _{1}\bar\lambda _{h,2}}{2}\bigg),
\\
\bar{K}_{h,2}&:=\widehat{K}_{h,2}\gamma _{1}^{3}(2\widetilde{K}_{h,1}^{1}+1)
\bar\zeta _{h,2}2^{\gamma _{1}\bar\lambda _{h,2}\vee 3}.
\end{align*}
Since $\bar\lambda _{h,2}\geq\bar\lambda _{h,1}>1$, one has
$\bar\lambda _{h,2}\gamma _{1}>1$ so that for any positive integer
$n$,
\begin{equation}
\label{estimate:Lypanuov:2:gamma:1/n}
\mathbb{E}[\bar{\mathcal{L}}_{h,2}(\xi ^{h}_{n})]\leq K^{1}_{h,2}
\gamma _{n}
\end{equation}
with
\begin{equation*}
K^{1}_{h,2}:=\gamma _{1}^{-1}\bigg(\widehat{K}_{h,2}\mathbb{E}[
\bar{\mathcal{L}}_{h,2}(\xi ^{h}_{0})]+\bar{K}_{h,2} \sup _{n\geq 1}
\frac{\varphi _{\bar\lambda _{h,2}\gamma _{1}-2}(n+1)}{(n+1)^{\bar\lambda _{h,2}\gamma _{1}-1}}
\bigg).
\end{equation*}

Otherwise, if $\gamma _{n}=\gamma _{1}n^{-\beta}$ with
$\beta \in (0,1)$, recalling Lemma~\ref{lmm:gamma} with $p=3$ gives
\begin{align}
\label{sum:square:gamma:times:prod:final:bis}
&\mathbb{E}[\bar{\mathcal{L}}_{h,2}(\xi ^{h}_{n})]
\nonumber
\\
&\leq \bigg(\mathbb{E}[\bar{\mathcal{L}}_{h,2}(\xi ^{h}_{0})]+
\frac{\gamma _{1}(2^{\beta }K^{\beta}_{h,1}+1)}{\mathrm{e} \mu _{h,2}^{2}+1}
\bigg)
\nonumber
\\
&
\hphantom{=:}
\times \exp \bigg(2^{2\beta +1}\bar\zeta _{h,2}(\mathrm{e} \mu _{h,2}^{2}+1)
\gamma _{1}^{2}\varphi _{1-2\beta}(n+1)-\frac{\bar\lambda _{h,2}}{2}
\gamma _{1}\varphi _{1-\beta}(n+1)\bigg)
\nonumber
\\
&
\hphantom{=:}
+(2^{\beta }\widetilde{K}_{h,1}^{\beta}+1)\bar\zeta _{h,2}\bigg(2^{\beta}\gamma _{1}^{3}
\exp (-2^{-(\beta +2)}\bar\lambda _{h,2}\gamma _{1}n^{1-\beta})
\varphi _{1-2\beta}(n+1)
\nonumber
\\
&
\hphantom{=:+(2^{\beta }\widetilde{K}_{h,1}^{\beta}+1)\bar\zeta _{h,2}\bigg(}
+\frac{2^{2\beta +1}\gamma _{1}^{2}}{\bar\lambda _{h,2}n^{2\beta}}
\bigg).
\end{align}
Thus for any $\beta \in (0,1)$ and any positive integer $n$, we have
\begin{equation}
\label{estimate:Lypanuov:1:gamma:1/nbeta:bis}
\mathbb{E}[\bar{\mathcal{L}}_{h,2}(\xi ^{h}_{n})]\leq K_{h,2}^{\beta}
\gamma _{n}
\end{equation}
with
\begin{align*}
K_{h,2}^{\beta }&:=\bigg(\gamma _{1}^{-1}\mathbb{E}[\bar{\mathcal{L}}_{h,2}(
\xi ^{h}_{0})]+
\frac{2^{\beta }\widetilde{K}^{\beta}_{h,1}+1}{\mathrm{e} \mu _{h,2}^{2}+1}\bigg)
\nonumber
\\
&
\hphantom{=::}
\times \sup _{n\geq 1}\bigg(n^{\beta}\exp \big(2^{2\beta +1}
\bar\zeta _{h,2}(\mathrm{e} \mu _{h,2}^{2}+1)\gamma _{1}^{2}\varphi _{1-2
\beta}(n+1)\big)
\\
&
\hphantom{=::\times \sup _{n\geq 1}\bigg(}
\times\exp\bigg(
-\frac{\bar\lambda _{h,2}}{2}\gamma _{1}\varphi _{1-\beta}(n+1)\bigg)
\bigg)
\\
&
\hphantom{=::}
+(2^{\beta }\widetilde{K}_{h,1}^{\beta}+1)\bar\zeta _{h,2}
\\
\nonumber
&
\hphantom{=::+}
\times \bigg(\gamma _{1}^{2}2^{2\beta}\sup _{n\geq 1} \big(n^{\beta}
\exp (-2^{-(\beta +2)}\bar\lambda _{h,2}\gamma _{1}n^{1-\beta})
\varphi _{1-2\beta}(n+1) \big)
\nonumber
\\
&
\hphantom{=::+\times \bigg(}
+\gamma _{1}\frac{2^{2\beta +1}}{\bar\lambda _{h,2}}\bigg).
\end{align*}

\noindent\textbf{\emph{Step~4. Upper bound on
$\mathbb{E}[(\xi ^{h}_{n}-\xi ^{h}_{\star})^{2}]$.}}\ Combining either
\eqref{estimate:Lypanuov:1:gamma:1/n} with
\eqref{estimate:Lypanuov:2:gamma:1/n} or
\eqref{estimate:Lypanuov:1:gamma:1/nbeta} with
\eqref{estimate:Lypanuov:1:gamma:1/nbeta:bis} and then invoking Lemma~\ref{lmm:lyapunov}(\ref{lmm:lyapunov-iv})
with $q=1$, we conclude that
\begin{equation*}
\mathbb{E}[(\xi ^{h}_{n}-\xi ^{h}_{\star})^{2}]\leq \bar{K}^{\beta}_{h,2}
\gamma _{n},
\end{equation*}
where
$\bar{K}^{\beta}_{h,2}:=\kappa _{h,1}(K_{h,2}^{\beta}+K_{h,1}^{\beta})$
satisfies $\sup _{h\in \mathcal{H}}\bar{K}^{\beta}_{h,2}<\infty $ for any
$\beta \in (0,1]$. This follows because for $q=1,2$,
\begin{equation*}
\sup _{h \in \overline{\mathcal{H}}}\mathbb{E}[\bar{\mathcal{L}}_{h,q}(
\xi ^{h}_{0})]\leq \sup _{h\in \overline{\mathcal{H}}}{\mathbb{E}
\bigg[(1+|\xi ^{h}_{0}|^{2})\exp \bigg(\frac{2}{1-\alpha}k_{\alpha }
\sup _{h\in \overline{\mathcal{H}}}\|f_{X_{h}}\|_{\infty}|\xi _{0}^{h}|
\bigg)\bigg]}<\infty ,
\end{equation*}
recalling Lemma~\ref{lmm:lyapunov}(\ref{lmm:lyapunov-i-bis}).
\\

\noindent\textbf{\emph{Step~5. Upper bound on
$\mathbb{E}[|\chi ^{h}_{n}-\chi ^{h}_{\star}|]$.}}\
We now prove an
$L^{1}(\mathbb{P})$-upper estimate for the difference
$\chi ^{h}_{n}-\chi ^{h}_{\star}$, $n\geq 1$, with
$(\chi ^{h}_{n})_{n\geq 0}$ given by
\eqref{approximate:sgd:algorithm}. Without loss of generality, we assume
$\chi _{0}^{h}=0$; the general case $\chi _{0}^{h}\neq 0$ is handled similarly.
Observe~that
\begin{align}
\label{decomposition:stat:error:cvar:sa:algorithm}
\chi ^{h}_{n}-\chi ^{h}_{\star }&=\frac{1}{n}\sum _{k=1}^{n}\bigg(
\xi ^{h}_{k-1}+\frac{1}{1-\alpha}(X^{(k)}_{h}-\xi ^{h}_{k-1})^{+}
\bigg)-V_{h}(\xi ^{h}_{\star})
\nonumber
\\
&=\frac{1}{n}\sum _{k=1}^{n}\varepsilon _{k}^{h}+\frac{1}{n}\sum _{k=1}^{n}V_{h}(
\xi _{k-1}^{h})-V_{h}(\xi ^{h}_{\star}),
\end{align}
where
\begin{equation}
\label{epsilon}
\varepsilon _{k}^{h}:=\xi ^{h}_{k-1}+\frac{1}{1-\alpha}(X^{(k)}_{h}-
\xi ^{h}_{k-1})^{+}-V_{h}(\xi _{k-1}^{h}),\qquad k\geq 1,
\end{equation}
is a sequence of $(\mathbb{F}^{h},\mathbb{P})$-martingale increments, that
is, $\mathbb{E}[\varepsilon _{k}^{h}\,|\,\mathcal{F}^{h}_{k-1}]=0$. Note
that
\begin{align*}
&\mathbb{E}\big[|\varepsilon _{k}^{h}|^{2}\,\big|\,\mathcal{F}^{h}_{k-1}
\big]
\nonumber
\\
&\leq \frac{1}{(1-\alpha )^{2}}\mathbb{E}\big[\big((X^{(k)}_{h}-\xi ^{h}_{k-1})^{+}-
\mathbb{E}[(X^{(k)}_{h}-\xi ^{h}_{k-1})^{+}\,|\,\mathcal{F}^{h}_{k-1}]
\big)^{2}\,\big|\,\mathcal{F}^{h}_{k-1}\big]
\nonumber
\\
&\leq \frac{1}{(1-\alpha )^{2}}\mathbb{E}\big[\big((X^{(k)}_{h}-\xi ^{h}_{k-1})^{+}
\big)^{2}\,\big|\,\mathcal{F}^{h}_{k-1}\big]
\nonumber
\\
&\leq \frac{3}{(1-\alpha )^{2}}\big(\mathbb{E}[|X_{h}|^{2}]+(\xi _{k-1}^{h}-
\xi ^{h}_{\star})^{2}+|\xi ^{h}_{\star}|^{2}\big).
\end{align*}
Using that estimate, the Cauchy--Schwarz inequality and
\eqref{uniform:L2:bound:var:alg} gives
\begin{align*}
&\frac{1}{n}\mathbb{E}\bigg[\bigg|\sum _{k=1}^{n}\varepsilon _{k}^{h}
\bigg|\bigg]
\\
&\leq \frac{1}{n}\bigg(\sum _{k=1}^{n}\mathbb{E}[|\varepsilon _{k}^{h}|^{2}]
\bigg)^{\frac{1}{2}}
\\
&\leq \frac{\sqrt{3}}{(1-\alpha )n}\bigg(n\Big(\sup _{h\in
\mathcal{H}}{\mathbb{E}[|X_{h}|^{2}]}+\sup _{h\in \mathcal{H}}{|\xi ^{h}_{
\star}|^{2}}\Big)+\bar{K}_{h}^{\beta}\sum _{k=2}^{n}\gamma _{k-1}+
\mathbb{E}[(\xi ^{h}_{0}-\xi ^{h}_{\star})^{2}]\bigg)^{\frac{1}{2}}
\\
&\leq \frac{\sqrt{3}}{(1-\alpha )n^{\frac{1}{2}}}\bigg( \Big(\sup _{h
\in \mathcal{H}}\mathbb{E}[|X_{h}|^{2}]^{\frac{1}{2}}+\sup _{h\in
\mathcal{H}}|\xi ^{h}_{\star}|\Big) +(\gamma _{1}\bar{K}_{h}^{\beta})^{
\frac{1}{2}}\frac{1}{(1-\beta )^{\frac{1}{2}}n^{\frac{\beta}{2}}}
\mathds1_{\{\beta \in (0,1)\}}
\\
&
\hphantom{=:\frac{\sqrt{3}}{(1-\alpha )n^{\frac{1}{2}}}\bigg(}
+(\gamma _{1}\bar{K}_{h}^{\beta})^{\frac{1}{2}}
\frac{(\ln{n})^{\frac{1}{2}}}{n^{\frac{1}{2}}}\mathds1_{\{\beta =1\}} +
\frac{\mathbb{E}[(\xi ^{h}_{0}-\xi ^{h}_{\star})^{2}]^{\frac{1}{2}}}{n^{\frac{1}{2}}}
\bigg).
\end{align*}
For the second term appearing on the right-hand side of
\eqref{decomposition:stat:error:cvar:sa:algorithm}, we use a second-order
Taylor expansion,
$\sup _{h\in \mathcal{H}}{\|V_{h}''\|_{\infty}}<\infty $ by Assumption~\ref{asp:misc}(\ref{asp:misc-iv})
and \eqref{uniform:L2:bound:var:alg}. Thus we get
\begin{align*}
&\frac{1}{n}\mathbb{E}\bigg[\bigg|\sum _{k=1}^{n}V_{h}(\xi _{k-1}^{h})-V_{h}(
\xi ^{h}_{\star})\bigg|\bigg]
\\
&\leq \frac{\|V_{h}''\|_{\infty}}{2n}\sum _{k=1}^{n}\mathbb{E}[(\xi ^{h}_{k-1}-
\xi ^{h}_{\star})^{2}]
\\
&\leq \frac{\|V_{h}''\|_{\infty}}{2}\bigg(
\frac{\mathbb{E}[(\xi ^{h}_{0}-\xi ^{h}_{\star})^{2}]}{n}+\gamma _{1}K_{h}^{
\beta}\Big(\frac{1}{(1-\beta )n^{\beta}}\mathds1_{\{\beta \in (0,1)\}}+
\frac{\ln{n}}{n}\mathds1_{\{\beta =1\}}\Big)\bigg).
\end{align*}
Inserting the two previous estimates into
\eqref{decomposition:stat:error:cvar:sa:algorithm} yields an upper bound
on $\mathbb{E}[|\chi ^{h}_{n}-\chi ^{h}_{\star}|]$.
\\

\noindent\textbf{\emph{Step~6. Second inequality on
$\mathbb{E}[\bar{\mathcal{L}}_{h,2}(\xi ^{h}_{n})]$.}}\
The proof of
\eqref{uniform:L4:bound:var:alg} relies on similar arguments to those employed
for \eqref{uniform:L2:bound:var:alg}. For the sake of brevity, we omit
some technical details. Using either
\eqref{sum:square:gamma:times:prod:final:case:1/n}, recalling that
$\bar\lambda _{h,2}\gamma _{1}>2$ and $\lambda^{\mu_{h,2}}_{h,1}\gamma_1\geq\bar\lambda_{h,1}\gamma_1=\frac{\bar\lambda_{h,2}\gamma_1}2>1$, or
\eqref{sum:square:gamma:times:prod:final:bis}, we get
\begin{equation}
\label{eq:E[L2(theta)]<final:bound}
\mathbb{E}[\bar{\mathcal{L}}_{h,2}(\xi ^{h}_{n})]\leq \widetilde{K}_{h,2}^{
\beta}\gamma _{n}^{2},
\end{equation}
for some constants
$(\widetilde{K}_{h,2}^{\beta})_{h\in \mathcal{H}}$ satisfying
$\sup _{h\in \mathcal{H}}\widetilde{K}_{h,2}^{\beta}<\infty $.

Similarly, using $q=2$ and $\mu=\mu_{h,4}$ in \eqref{eq:ELq<}, given that $\lambda^{\mu_{h,4}}_{h,2}\gamma_1\geq\bar\lambda_{h,2}\gamma_1>2$ and $\lambda^{\mu_{h,4}}_{h,1}\gamma_1\geq\bar\lambda_{h,1}\gamma_1>1$ if $\beta=1$, we obtain
\begin{equation}
\label{eq:E[L2(theta)]<final:bound:check}
\mathbb{E}[\mathcal{L}_{h,2}(\xi ^{h}_{n})]\leq \check{K}_{h,2}^{
\beta}\gamma _{n}^{2},
\end{equation}
for some constants
$(\check{K}_{h,2}^{\beta})_{h\in \mathcal{H}}$ satisfying
$\sup _{h\in \mathcal{H}}\check{K}_{h,2}^{\beta}<\infty $.
\\

\noindent\textbf{\emph{Step~7. Inequality on
$\mathbb{E}[\mathcal{L}_{h,3}(\xi ^{h}_{n})]$.}}\
Setting $q=3$ and
$\mu =\mu _{h,4}$ in \eqref{eq:ELq<} and utilising
\eqref{eq:E[L2(theta)]<final:bound} gives
\begin{equation*}
\begin{aligned}
&\mathbb{E}[\mathcal{L}_{h,3}(\xi ^{h}_{n+1})]
\\
&\leq \mathbb{E}[\mathcal{L}_{h,3}(\xi ^{h}_{n})](1-
\lambda _{h,3}^{\mu_{h,4}}\gamma _{n+1}
+\zeta _{h,3}^{\mu_{h,4}}\gamma _{n+1}^{2})
+\zeta _{h,3}^{\mu_{h,4}}\gamma _{n+1}^{2}
\mathbb{E}[\mathcal{L}_{h,2}(\xi ^{h}_{n})]
+\zeta _{h,3}^{\mu_{h,4}}\gamma _{n+1}^{4}
\\
&\leq \mathbb{E}[\mathcal{L}_{h,3}(\xi ^{h}_{n})]
\big(1-\lambda _{h,3}^{\mu_{h,4}}\gamma _{n+1}
+(\mathrm{e} \mu _{h,4}^{3}+1)
\zeta _{h,3}^{\mu_{h,4}}\gamma _{n+1}^{2}\big)
+(2^{2\beta}\check{K}_{h,2}^{\beta}+1)
\zeta _{h,3}^{\mu_{h,4}}\gamma _{n+1}^{4},
\end{aligned}
\end{equation*}
so that
\begin{equation*}
\mathbb{E}[\bar{\mathcal{L}}_{h,3}(\xi ^{h}_{n})]
\leq \mathbb{E}[\bar{\mathcal{L}}_{h,3}(\xi ^{h}_{0})]
\Pi ^{h,3,4}_{1:n}
+(2^{2\beta}\check{K}_{h,2}^{\beta}+1)
\bar\zeta _{h,3}
\sum _{k=1}^{n}\gamma _{k}^{4}\Pi ^{h,3,4}_{k+1:n}.
\end{equation*}
Following similar lines of reasoning as those used in Step~2, we conclude
that if $\gamma _{n}=\gamma _{1}n^{-\beta}$, $\beta \in (0,1]$, with
$\lambda _{h,3}^{\mu_{h,4}}\gamma _{1}
\geq\bar\lambda _{h,3}\gamma _{1}
\geq\bar\lambda _{h,2}\gamma _{1}>2$ if $\beta =1$, then
\begin{equation}
\label{eq:E[L3(theta)]<final:bound}
\mathbb{E}[\bar{\mathcal{L}}_{h,3}(\xi ^{h}_{n})]\leq K_{h,3}^{\beta}
\gamma _{n}^{2}
\end{equation}
for some constants $(K^{\beta}_{h,3})_{h\in \mathcal{H}}$ satisfying
$\sup _{h\in \mathcal{H}}{K^{\beta}_{h,3}}<\infty $.
\\

\noindent\textbf{\emph{Step~8. Inequality on
$\mathbb{E}[\bar{\mathcal{L}}_{h,4}(\xi ^{h}_{n})]$.}}\
Finally, we take
$q=4$ and $\mu =\mu _{h,4}$ in~\eqref{eq:ELq<} and use
\eqref{eq:E[L3(theta)]<final:bound} to obtain
\begin{equation*}
\begin{aligned}
&\mathbb{E}[\bar{\mathcal{L}}_{h,4}(\xi ^{h}_{n+1})]
\\
&\leq \mathbb{E}[\bar{\mathcal{L}}_{h,4}(\xi ^{h}_{n})](1-
\bar\lambda _{h,4}\gamma _{n+1}+\zeta _{h,4}\gamma _{n+1}^{2}) +
\bar\zeta _{h,4}\gamma _{n+1}^{2}\mathbb{E}[\mathcal{L}_{h,3}(\xi ^{h}_{n})]
+\bar\zeta _{h,3}\gamma _{n+1}^{5}
\\
&\leq \mathbb{E}[\bar{\mathcal{L}}_{h,4}(\xi ^{h}_{n})]\big(1-
\bar\lambda _{h,4}\gamma _{n+1}+(\mathrm{e} \mu _{h,4}^{4}+1)\bar\zeta _{h,4}
\gamma _{n+1}^{2}\big)
+(2^{2\beta}K_{h,3}^{\beta}+\gamma _{1})\bar\zeta _{h,4}
\gamma _{n+1}^{4},
\end{aligned}
\end{equation*}
so that
\begin{equation}
\label{neugl4}
\mathbb{E}[\bar{\mathcal{L}}_{h,4,4}(\xi ^{h}_{n})]\leq \mathbb{E}[
\bar{\mathcal{L}}_{h,4}(\xi ^{h}_{0})]\Pi ^{h,4}_{1:n}+(2^{2\beta}K_{h,3}^{\beta}+\gamma _{1})\bar\zeta _{h,4}\sum _{k=1}^{n}
\gamma _{k}^{4}\Pi ^{h,4,4}_{k+1:n}.
\end{equation}
Analogously, we deduce from \eqref{neugl4} that if
$\gamma _{n}=\gamma _{1}n^{-\beta}$, $\beta \in (0,1]$, with
$\bar\lambda _{h,4}\gamma _{1}\geq\bar\lambda _{h,2}\gamma _{1}>2$ if $\beta =1$, then for any
$h\in \mathcal{H}$ and any positive integer $n$, we have
\begin{equation}
\label{eq:E[L4(theta)]<final:bound}
\mathbb{E}[\bar{\mathcal{L}}_{h,4}(\xi ^{h}_{n})]\leq K_{h,4}^{\beta}
\gamma _{n}^{2},
\end{equation}
where $(K_{h,4}^{\beta})_{h\in \mathcal{H}}$ are constants that satisfy
$\sup _{h\in \mathcal{H}}{K_{h,4}^{\beta}}<\infty $.
\\

\noindent\textbf{\emph{Step~9. Upper bound on
$\mathbb{E}[(\xi ^{h}_{n}-\xi ^{h}_{\star})^{4}]$.}}\
By
\eqref{eq:E[L2(theta)]<final:bound},
\eqref{eq:E[L4(theta)]<final:bound} and Lemma~\ref{lmm:lyapunov}(\ref{lmm:lyapunov-iv})
with $q=2$, recalling that $\bar\lambda _{h,4}>\bar\lambda _{h,2}$, we
conclude that if $\gamma _{n}=\gamma _{1}n^{-\beta}$,
$\beta \in (0,1]$, with $\bar\lambda _{h,2}\gamma _{1}>2$ if
$\beta =1$, then for any $h\in \mathcal{H}$ and any positive integer
$n$, we have
\begin{equation*}
\mathbb{E}[(\xi ^{h}_{n}-\xi ^{h}_{\star})^{4}]\leq \bar{K}^{\beta}_{h,4}
\gamma ^{2}_{n},
\end{equation*}
where
$\bar{K}^{\beta}_{h,4}:=\kappa _{h,2}(K_{h,4}^{\beta}+\widetilde{K}^{
\beta}_{h,2})$ satisfies
$\sup _{h\in \mathcal{H}}\bar{K}^{\beta}_{h,4}<\infty $,
$\beta \in (0,1]$, as Lemma~\ref{lmm:lyapunov}(\ref{lmm:lyapunov-i-bis})
implies that for $q=1,2,3,4$ and $\mu=\mu_{h,4}$,
\begin{equation*}
\sup _{h \in \overline{\mathcal{H}}}\mathbb{E}[\mathcal{L}_{h,q}(
\xi ^{h}_{0})]\leq \sup _{h\in \overline{\mathcal{H}}}{\mathbb{E}
\bigg[(1+|\xi ^{h}_{0}|^{4})\exp \bigg(\frac{4}{1-\alpha}k_{\alpha }
\sup _{h\in \overline{\mathcal{H}}}\|f_{X_{h}}\|_{\infty}|\xi _{0}^{h}|
\bigg)\bigg]}<\infty .
\end{equation*}

\noindent\textbf{\emph{Step~10. Upper bound on
$\mathbb{E}[(\chi ^{h}_{n}-\chi ^{h}_{\star})^{2}]$.}}\
Assuming here again
that $\chi _{0}^{h}=0$, we get from the decomposition
\eqref{decomposition:stat:error:cvar:sa:algorithm} and similar computations
to the ones performed in Step~5 that for some constants
$(\bar{C}^{\beta}_{h})_{h\in \mathcal{H}}$ that may change from line to
line, we have
\begin{align*}
\mathbb{E}[(\chi ^{h}_{n}-\chi ^{h}_{\star})^{2}] &\leq \bar{C}^{
\beta}_{h}\bigg(\frac{1}{n^{2}}\sum _{k=1}^{n}\mathbb{E}[|
\varepsilon _{k}^{h}|^{2}]+\frac{1}{n^{2}}\Big(\sum _{k=1}^{n}
\mathbb{E}\big[\big(V_{h}(\xi _{k-1}^{h})-V_{h}(\xi ^{h}_{\star})
\big)^{2}\big]^{\frac{1}{2}}\Big)^{2}\bigg)
\\
&\leq \bar{C}^{\beta}_{h}\bigg(\frac{1}{n}+\frac{1}{n^{2}}\Big(\sum _{k=1}^{n}
\mathbb{E}[(\xi _{k-1}^{h}-\xi ^{h}_{\star})^{4}]^{\frac{1}{2}}\Big)^{2}
\bigg)
\\
&\leq \bar{C}^{\beta}_{h}\bigg(\frac{1}{n}+\frac{1}{n^{2}}\Big(\sum _{k=1}^{n}
\gamma _{k}\Big)^{2}\bigg) \leq
\frac{\bar{C}^{\beta}_{h}}{n^{1\wedge 2\beta}},
\end{align*}
where the last inequality follows by a comparison between series and integrals
for $\gamma _{n}=\gamma _{1}n^{-\beta}$, $\beta \in (0,1]$. This concludes
the proof.
\end{proof}

\section{Proof of Lemma~\ref{lmm:lyapunov}}
\label{prf:lyapunov}

Throughout, we consider $q\in \mathbb{N}$, $h\in \overline{\mathcal{H}}$ and $\mu\geq0$.
We drop the superscript $\mu$ from our notation and write $\mathcal{L}_{h,q}$ for $\mathcal{L}^{\mu}_{h,q}$.
\\

\noindent(\ref{lmm:lyapunov-i})\
From the definition \eqref{eq:Lq},
$\mathcal{L}_{h,q}$ is continuously differentiable and satisfies
\begin{align}
\label{eq:Lq'-0}
\mathcal{L}_{h,q}'(\xi )
&=qV_{h}'(\xi )\mathcal{L}_{h,q-1}(\xi )+ \mu V_{h}'(\xi )\mathcal{L}_{h,q}(\xi )
\nonumber
\\
&=\mathcal{L}_{h,q-1}(\xi )V_{h}'(\xi )\Big(q+ \mu\big(V_{h}(\xi )-V_{h}(\xi ^{h}_{\star})\big)\Big).
\end{align}
One may differentiate again the above identity. If $q\geq 2$,
\begin{align}
\label{eq:Lq''-0}
&\mathcal{L}_{h,q}''(\xi )
\nonumber
\\
&=\mathcal{L}_{h,q-2}(\xi )\big(V_{h}'(\xi )\big)^{2}
\Big(q-1+ \mu\big(V_{h}(\xi )-V_{h}(\xi ^{h}_{\star})\big)\Big)
\Big(q+ \mu\big(V_{h}(\xi )-V_{h}(\xi ^{h}_{\star})\big)\Big)
\nonumber
\\
&
\hphantom{=:}
+\mathcal{L}_{h,q-1}(\xi )V_{h}''(\xi )\Big(q+ \mu\big(V_{h}(\xi )-V_{h}(\xi ^{h}_{\star})\big)\Big)
+ \mathcal{L}_{h,q-1}(\xi ) \mu \big(V_{h}'(\xi )\big)^{2}.
\end{align}
If $q=1$,
\begin{equation}
\label{eq:L1''}
\mathcal{L}_{h,1}''(\xi )
=\Big(V_{h}''(\xi ) + 2 \mu \big(V_{h}'(\xi )\big)^{2}\Big)\mathcal{L}_{h,0}(\xi )
+\Big(\mu^2 \big(V_{h}'(\xi )\big)^{2}+ \mu V_{h}''(\xi )\Big)\mathcal{L}_{h,1}(\xi ).
\end{equation}
All in all, for all $q\geq 1$, $\mathcal{L}_{h,q}$ is twice continuously
differentiable.
\\

\noindent(\ref{lmm:lyapunov-i-bis})\
The upper estimate follows directly from the fact
that for any $\xi \in \mathbb{R}$,
\begin{equation*}
V_{h}(\xi )-V_{h}(\xi ^{h}_{\star}) =(\xi -\xi ^{h}_{\star})\int _{0}^{1}V_{h}'
\big(\xi ^{h}_{\star}+t(\xi -\xi ^{h}_{\star})\big)\mathrm{d}t.
\end{equation*}
Hence $\|V_{h}'\|_{\infty}\leq k_{\alpha}$ implies
$V_{h}(\xi ) - V_{h}(\xi ^{h}_{\star})\leq k_{\alpha }|\xi -\xi ^{h}_{\star}|$.

The remaining properties of Lemma~\ref{lmm:lyapunov} are trivially satisfied
for $\xi =\xi ^{h}_{\star}$. We thus assume without loss of generality
that $\xi \neq \xi ^{h}_{\star}$. We let
\begin{equation}
\label{eq:epsh}
\varepsilon _{h}:=
\frac{V_{h}''(\xi ^{h}_{\star})}{[V_{h}'']_{{\mathrm{Lip}}}}.
\end{equation}
Finally, set
$\mathcal{I}^{h}:=[\xi ^{h}_{\star}-\varepsilon _{h},\xi ^{h}_{\star}+
\varepsilon _{h}]$.
\\

\noindent(\ref{lmm:lyapunov-ii})\
\noindent\textbf{\emph{Step~1. Preliminaries.}}\
From
\eqref{eq:Lq'-0} and the relation
\begin{equation*}
\mathcal{L}_{h,q-1}(\xi )=\big(V_{h}(\xi )-V_{h}(\xi ^{h}_{\star})
\big)^{-1}\mathcal{L}_{h,q}(\xi ),
\end{equation*}
we have
\begin{equation*}
\mathcal{L}_{h,q}'(\xi )
=\mathcal{L}_{h,q}(\xi )V_{h}'(\xi )\bigg(\frac{q}{V_{h}(\xi )-V_{h}(\xi ^{h}_{\star})}+\mu \bigg)
\end{equation*}
so that
\begin{equation}
\label{eq:0}
\mathcal{L}_{h,q}'(\xi )V_{h}'(\xi )
\geq \bigg(\frac{q(V_{h}'(\xi ))^{2}}{V_{h}(\xi )-V_{h}(\xi ^{h}_{\star})}
\vee\mu\big(V_{h}'(\xi )\big)^{2}\bigg)\mathcal{L}_{h,q}(\xi ).
\end{equation}
Let us establish a lower bound for the parenthesised factor above. Since
$V_{h}'(\xi ^{h}_{\star})=0$, a first-order Taylor expansion gives
\begin{eqnarray}
\label{eq:Vh':taylor}
V_{h}'(\xi ) &=&(\xi -\xi ^{h}_{\star})\int _{0}^{1}V_{h}''\big(\xi ^{h}_{
\star}+t(\xi -\xi ^{h}_{\star})\big)\mathrm{d}t
\nonumber
\\
&=&V_{h}''(\xi ^{h}_{\star})(\xi -\xi ^{h}_{\star}) +(\xi -\xi ^{h}_{
\star})\int _{0}^{1}\!\!\Big(V_{h}''\big(\xi ^{h}_{\star}+t(\xi -\xi ^{h}_{
\star})\big)-V_{h}''(\xi ^{h}_{\star})\Big)\mathrm{d}t.
\end{eqnarray}
Combined with the fact that
$V_{h}''=(1-\alpha )^{-1}f_{X_{h}}\geq 0$ and the triangle inequality,
this yields
\begin{equation}
\label{eq:Vh'lowerbound}
|V_{h}'(\xi )| \geq V_{h}''(\xi ^{h}_{\star})|\xi -\xi ^{h}_{\star}|-
\frac{1}{2}[V_{h}'']_{{\mathrm{Lip}}}(\xi -\xi ^{h}_{\star})^{2}.
\end{equation}

\noindent\textbf{\emph{Step~2. Case where $\xi \in \mathcal{I}^{h}$.}}\
If
$\xi \in \mathcal{I}^{h}$, the above inequality yields
\begin{equation}
\label{eq:1}
\big(V_{h}'(\xi )\big)^{2}\geq \frac{1}{4}\big(V_{h}''(\xi ^{h}_{
\star})\big)^{2}(\xi -\xi ^{h}_{\star})^{2}.
\end{equation}
Now, recalling that $V_{h}'(\xi ^{h}_{\star})=0$, a second-order Taylor
expansion gives
\begin{eqnarray}
\label{eq:Vhpseudotaylor}
V_{h}(\xi )-V_{h}(\xi ^{h}_{\star}) &=&\frac{1}{2}V_{h}''(\xi ^{h}_{
\star})(\xi -\xi ^{h}_{\star})^{2}
\nonumber
\\
&&+(\xi -\xi ^{h}_{\star})^{2}\int _{0}^{1}\!\!(1-t)\Big(V_{h}''\big(
\xi ^{h}_{\star}+t(\xi -\xi ^{h}_{\star})\big)-V_{h}''(\xi ^{h}_{
\star})\Big)\mathrm{d}t
\end{eqnarray}
so that
\begin{equation*}
V_{h}(\xi )-V_{h}(\xi ^{h}_{\star}) \leq \frac{1}{2}V_{h}''(\xi ^{h}_{
\star})(\xi -\xi ^{h}_{\star})^{2} +\frac{1}{6}[V_{h}'']_{
{\mathrm{Lip}}}|\xi - \xi ^{h}_{\star}|^{3}.
\end{equation*}
Hence if $\xi \in \mathcal{I}^{h}$, the above inequality implies
\begin{equation}
\label{eq:2}
V_{h}(\xi )-V_{h}(\xi ^{h}_{\star})\leq \frac{2}{3}V_{h}''(\xi ^{h}_{
\star})(\xi -\xi ^{h}_{\star})^{2}.
\end{equation}
Plug \eqref{eq:1} and \eqref{eq:2} into \eqref{eq:0} and recall that
$V_{h}''(\xi ^{h}_{\star})=(1-\alpha )^{-1}f_{X_{h}}(\xi ^{h}_{\star})>0$
by Assumption~\ref{asp:misc}(\ref{asp:misc-iii}) to get
\begin{equation}
\label{eq:Lq'Vh'>insideIh}
\mathcal{L}_{h,q}'(\xi )V_{h}'(\xi ) \geq
\frac{q(V_{h}'(\xi ))^{2}}{V_{h}(\xi )-V_{h}(\xi ^{h}_{\star})}
\mathcal{L}_{h,q}(\xi ) \geq \frac{3}{8}qV_{h}''(\xi ^{h}_{\star})
\mathcal{L}_{h,q}(\xi ).
\end{equation}

\noindent\textbf{\emph{Step~3. Case where $\xi \notin \mathcal{I}^{h}$.}}\
Assume now
$\xi \notin \mathcal{I}^{h}$, say
$\xi >\xi ^{h}_{\star}+\varepsilon _{h}$. The case
$\xi <\xi ^{h}_{\star}-\varepsilon _{h}$ is similar and is omitted. Since
$V_{h}'$ is nondecreasing, \text{${V_{h}'(\xi ^{h}_{\star}+\varepsilon _{h})
\geq V_{h}'(\xi ^{h}_{\star})=0}$,} and evaluating
\eqref{eq:Vh'lowerbound} at $\xi ^{h}_{\star}+\varepsilon _{h}$ and recalling
the definition~\eqref{eq:epsh} gives
\begin{equation*}
V_{h}'(\xi )\geq V_{h}'(\xi ^{h}_{\star}+\varepsilon _{h}) \geq V_{h}''(
\xi ^{h}_{\star})\varepsilon _{h}-\frac{1}{2}[V_{h}'']_{
{\mathrm{Lip}}}\varepsilon _{h}^{2} =
\frac{(V_{h}''(\xi ^{h}_{\star}))^{2}}{2[V_{h}'']_{{\mathrm{Lip}}}}.
\end{equation*}
Plugging this into \eqref{eq:0}, we obtain for any
$\xi >\xi ^{h}_{\star}+\varepsilon _{h}$ that
\begin{equation}
\label{eq:Lq'Vh'>outsideIh}
\mathcal{L}_{h,q}'(\xi )V_{h}'(\xi )
\geq \mu \mathcal{L}_{h,q}(\xi )\big(V_{h}'(\xi )\big)^{2}
\geq \mu \frac{(V_{h}''(\xi ^{h}_{\star}))^{4}}{4[V_{h}'']_{{\mathrm{Lip}}}^{2}} \mathcal{L}_{h,q}(\xi ).
\end{equation}
The above inequality is valid also if
$\xi <\xi ^{h}_{\star}-\varepsilon _{h}$.
\\

\noindent\textbf{\emph{Step~4. Conclusion.}}\
Combining
\eqref{eq:Lq'Vh'>insideIh} and \eqref{eq:Lq'Vh'>outsideIh}, we obtain
\begin{equation*}
\mathcal{L}_{h,q}'(\xi )V_{h}'(\xi )
\geq \bigg(\frac{3}{8}qV_{h}''(\xi ^{h}_{\star})
\wedge \mu \frac{(V_{h}''(\xi ^{h}_{\star}))^{4}}{4[V_{h}'']_{{\mathrm{Lip}}}^{2}}\bigg)
\mathcal{L}_{h,q}(\xi )
=\lambda^{\mu}_{h,q}\mathcal{L}_{h,q}(\xi ),
\qquad \xi \in \mathbb{R}.
\end{equation*}
By Lemma~\ref{lmm:thetah->theta0}, there exists $R>0$ with
$\xi ^{h}_{\star}\in B(\xi ^{0}_{\star},R)$, $h\in \mathcal{H}$. By Assumption~\ref{asp:misc},
\begin{align*}
\inf _{h\in \overline{\mathcal{H}}}{V_{h}''(\xi ^{h}_{\star})} &\geq
\frac{1}{1-\alpha}\inf _{
\substack{h\in \overline{\mathcal{H}}\\\xi \in B(\xi ^{0}_{\star},R)}}{f_{X_{h}}(
\xi )} >0,
\\
\inf _{h\in \overline{\mathcal{H}}}{
\frac{1}{[V_{h}'']_{{\mathrm{Lip}}}}} &=
\frac{1-\alpha}{\sup _{h\in \overline{\mathcal{H}}}{[f_{X_{h}}]_{{\mathrm{Lip}}}}}>0,
\end{align*}
and by Assumption~\ref{asp:misc}(\ref{asp:misc-iii}),
$\inf _{h\in \overline{\mathcal{H}}}\|V_{h}''\|_{\infty}
=(1-\alpha )^{-1}\inf _{h\in \overline{\mathcal{H}}}\|f_{X_{h}}\|_{\infty}>0$.
Thus, for any $q'\geq q$, $\inf _{h\in \overline{\mathcal{H}}}\lambda^{h,q'}_{h,q}>0$, and in particular, $\inf _{h\in \overline{\mathcal{H}}}\bar{\lambda}_{h,q}>0$.
\\

\noindent(\ref{lmm:lyapunov-iii})\
\textbf{\emph{Step~1. Case where $q\geq 2$.}}\
Assume first that $q\geq 2$. Through \eqref{eq:Lq''-0} and the relation
\begin{equation*}
\mathcal{L}_{h,q-1}(\xi )=\big(V_{h}(\xi )-V_{h}(\xi ^{h}_{\star})
\big)^{-1}\mathcal{L}_{h,q}(\xi ),
\end{equation*}
one has
\begin{align}
\label{eq:Lq''}
\mathcal{L}_{h,q}''(\xi )
&=\mathcal{L}_{h,q-1}(\xi )\bigg(qV_{h}''(\xi )+\frac{q(q-1)(V_{h}'(\xi ))^{2}}{V_{h}(\xi )-V_{h}(\xi ^{h}_{\star})}\bigg)
\nonumber
\\
&
\hphantom{=:}
+\mathcal{L}_{h,q}(\xi )\bigg( \mu V_{h}''(\xi )+ \mu^2 \big(V_{h}'(\xi )\big)^{2}
+2q\mu\frac{(V_{h}'(\xi ))^{2}}{V_{h}(\xi )-V_{h}(\xi ^{h}_{\star})}\bigg).
\end{align}

\noindent\textbf{\emph{Step~1.1. Sub-case where $\xi \in \mathcal{I}^{h}$.}}\
Applying
the triangle inequality to \eqref{eq:Vhpseudotaylor} yields
\begin{equation}
\label{13util}
V_{h}(\xi )-V_{h}(\xi ^{h}_{\star}) \geq \frac{1}{2}V_{h}''(\xi ^{h}_{
\star})(\xi -\xi ^{h}_{\star})^{2}-\frac{1}{6}[V_{h}'']_{
{\mathrm{Lip}}}|\xi -\xi ^{h}_{\star}|^{3}.
\end{equation}
Hence if $\xi \in \mathcal{I}^{h}$, we have
$|\xi -\xi ^{h}_{\star}|\leq \varepsilon _{h}$ so that the above inequality
and \eqref{eq:epsh} yield
\begin{equation}
\label{eq:Vh-Vh>onIh}
V_{h}(\xi )-V_{h}(\xi ^{h}_{\star})\geq \frac{1}{3}V_{h}''(\xi ^{h}_{
\star})(\xi -\xi ^{h}_{\star})^{2}.
\end{equation}
Moreover, using \eqref{eq:Vh':taylor} and
$\|V_{h}''\|_{\infty}=(1-\alpha )^{-1}\|f_{X_{h}}\|_{\infty}<\infty $ by
Assumption~\ref{asp:misc}(\ref{asp:misc-iv}), we get
\begin{equation*}
\big(V_{h}'(\xi )\big)^{2}\leq \|V_{h}''\|_{\infty}^{2}(\xi -\xi ^{h}_{
\star})^{2}.
\end{equation*}
Combining the previous two inequalities, if
$\xi \in \mathcal{I}^{h}$, we have
\begin{equation}
\label{eq:frac<1}
\frac{(V_{h}'(\xi ))^{2}}{V_{h}(\xi )-V_{h}(\xi ^{h}_{\star})} \leq
\frac{3\|V_{h}''\|_{\infty}^{2}}{V_{h}''(\xi ^{h}_{\star})}.
\end{equation}

\noindent\textbf{\emph{Step~1.2. Sub-case where $\xi \notin \mathcal{I}^{h}$.}}\
Assume
now $\xi \notin \mathcal{I}^{h}$, say
$\xi >\xi ^{h}_{\star}+\varepsilon _{h}$. The case
$\xi <\xi ^{h}_{\star}-\varepsilon _{h}$ is treated similarly and is thus
omitted. Using that $V_{h}$ is increasing on~$[\xi ^{h}_{\star},
\infty )$, evaluating \eqref{13util} at
$\xi ^{h}_{\star}+\varepsilon _{h}$ and recalling \eqref{eq:epsh}, we obtain
\begin{equation*}
V_{h}(\xi )- V_{h}(\xi ^{h}_{\star}) \geq V_{h}(\xi ^{h}_{\star}+
\varepsilon _{h})-V_{h}(\xi ^{h}_{\star}) \geq
\frac{(V_{h}''(\xi ^{h}_{\star}))^{3}}{3[V_{h}'']_{{\mathrm{Lip}}}^{2}}.
\end{equation*}
This inequality together with the fact that
$\|V_{h}'\|_{\infty}\leq k_{\alpha}$ gives for any
$\xi >\xi ^{h}_{\star}+\varepsilon _{h}$ that
\begin{equation}
\label{eq:frac<2}
\frac{(V_{h}'(\xi ))^{2}}{V_{h}(\xi )-V_{h}(\xi ^{h}_{\star})} \leq
\frac{3k_{\alpha}^{2}[V_{h}'']_{{\mathrm{Lip}}}^{2}}{(V_{h}''(\xi ^{h}_{\star}))^{3}}.
\end{equation}
The above estimate holds, too, if
$\xi <\xi ^{h}_{\star}+\varepsilon _{h}$.
\\

\noindent\textbf{\emph{Step~1.3. Sub-conclusion.}}\
\eqref{eq:frac<1} and
\eqref{eq:frac<2} show that for any $\xi \neq \xi ^{h}_{\star}$,
\begin{equation}
\label{eq:nuh}
\frac{(V_{h}'(\xi ))^{2}}{V_{h}(\xi )-V_{h}(\xi ^{h}_{\star})}\leq
\frac{3k_{\alpha}^{2}[V_{h}'']_{{\mathrm{Lip}}}^{2}}{(V_{h}''(\xi ^{h}_{\star}))^{3}}
\vee \frac{3\|V_{h}''\|_{\infty}^{2}}{V_{h}''(\xi ^{h}_{\star})}=:
\nu _{h}.
\end{equation}
Combining \eqref{eq:Lq''} and \eqref{eq:nuh} and recalling that
$\|V_{h}'\|_{\infty}\leq k_{\alpha}$, we get
\begin{equation}
\label{neugl5}
|\mathcal{L}_{h,q}''(\xi )|
\leq \Big((q\vee \mu)\|V_{h}''\|_{\infty}
+\mu^2k_{\alpha}^{2}
+q\big(2\mu\vee (q-1)\big)\nu _{h}\Big)
\big(\mathcal{L}_{h,q}(\xi )+\mathcal{L}_{h,q-1}(\xi )\big).
\end{equation}

\noindent\textbf{\emph{Step~2. Case where $q=1$.}}\
If $q=1$, since
$\|V_{h}'\|_{\infty}\leq k_{\alpha}$, \eqref{eq:L1''} directly gives
\begin{equation}
\label{neugl6}
|\mathcal{L}_{h,1}''(\xi )|
\leq \big((\mu\vee 1)\|V_{h}''\|_{\infty}
+\mu(\mu\vee 2) k_{\alpha}^{2}\big)
\big(\mathcal{L}_{h,0}(\xi )
+\mathcal{L}_{h,1}(\xi )\big).
\end{equation}

\noindent\textbf{\emph{Step~3. Conclusion.}}\
Combining \eqref{neugl5} and
\eqref{neugl6}, we conclude that for all $q\in \mathbb{N}$ and all
$\xi \neq \xi ^{h}_{\star}$, we have
\begin{equation*}
|\mathcal{L}_{h,q}''(\xi )|
\leq \Big((q\vee \mu) \|V_{h}''\|_{\infty}+
\mu(\mu\vee 2)k_{\alpha}^{2}+q\big(2\mu \vee (q-1)\big)\nu _{h}
\Big)
\big(\mathcal{L}_{h,q}(\xi )+\mathcal{L}_{h,q-1}(\xi )\big).
\end{equation*}
For $q'\geq q$,
\begin{equation*}
|\lambda^{\mu_{h,q'}}_{h,q}|^{2}
\leq \frac{9}{64}q^{2} \big(V_h''(\xi^{h}_{\star})\big)^2
\leq 6qq'\frac{\|V_h''\|_{\infty}^3}{V''(\xi^h_\star)}
\leq \eta^{\mu_{h,q'}}_{h,q}.
\end{equation*}
The property $|\bar{\lambda}_{h,q}|^{2}\leq \bar{\eta}_{h,q}$ then follows.

By Assumption~\ref{asp:misc}(\ref{asp:misc-iv}),
\begin{align}
\label{eq:supVh''-Lip}
\sup _{h\in \overline{\mathcal{H}}}{\|V_{h}''\|_{\infty}}&=
\frac{\sup _{h\in \overline{\mathcal{H}}}{\|f_{X_{h}}\|_{\infty}}}{1-\alpha}<
\infty ,
\nonumber
\\
\sup _{h\in \overline{\mathcal{H}}}{[V_{h}'']_{{\mathrm{Lip}}}}&=
\frac{\sup _{h\in \mathcal{H}}{[f_{X_{h}}]_{{\mathrm{Lip}}}}}{1-\alpha}<
\infty .
\end{align}
Moreover, since $(\xi ^{h}_{\star})_{h\in \mathcal{H}}$ converges to
$\xi ^{0}_{\star}$ as $\mathcal{H}\ni h\downarrow 0$, there exists
$R>0$ such that $\xi ^{h}_{\star}\in B(\xi ^{0}_{\star},R)$,
$h\in \mathcal{H}$. Thus by Assumption~\ref{asp:misc}(\ref{asp:misc-iii}),
\begin{equation}
\label{eq:sup-1/Vh''<infty}
\sup _{h\in \overline{\mathcal{H}}}{
\frac{1}{V_{h}''(\xi ^{h}_{\star})}} \leq
\frac{1-\alpha}{\inf _{\substack{h\in \overline{\mathcal{H}}\\\xi \in B(\xi ^{0}_{\star},R)}}{f_{X_{h}}(\xi )}}
<\infty .
\end{equation}
Coming back to \eqref{eq:nuh}, \eqref{eq:supVh''-Lip} and
\eqref{eq:sup-1/Vh''<infty} show that
$\sup _{h\in \overline{\mathcal{H}}}{\nu _{h}}<\infty $.
Finally,
\begin{equation*}
\begin{aligned}
\sup _{h\in \overline{\mathcal{H}}}\eta _{h,q}^{\mu_{h,q'}}
&=\sup _{h\in \overline{\mathcal{H}}}
\Big((q\vee \mu_{h,q'}) \|V_{h}''\|_{\infty}\\
&\phantom{=\sup _{h\in \overline{\mathcal{H}}}
\Big(}
+\mu_{h,q'}(\mu_{h,q'}\vee 2)k_{\alpha}^{2}
+q\big(2\mu_{h,q'}\vee (q-1)\big)
\nu _{h}\Big)
<\infty .
\end{aligned}
\end{equation*}
In particular, $\sup_{h\in\overline{\mathcal{H}}}\bar{\eta}_{h,q}<\infty$.
\\

\noindent(\ref{lmm:lyapunov-iv})\
On $\mathcal{I}^{h}$, \eqref{eq:Vh-Vh>onIh} shows that every
$\xi \neq \xi ^{h}_{\star}$ satisfies
\begin{equation}
\label{eq:theta-<1}
(\xi -\xi ^{h}_{\star})^{2q} \leq
\frac{3^{q}}{(V_{h}''(\xi ^{h}_{\star}))^{q}}\big(V_{h}(\xi )-V_{h}(
\xi ^{h}_{\star})\big)^{q} \leq
\frac{3^{q}}{(V_{h}''(\xi ^{h}_{\star}))^{q}}\mathcal{L}^{\mu}_{h,q}(\xi ).
\end{equation}
Outside of $\mathcal{I}^{h}$, say if
$\xi >\xi ^{h}_{\star}+\varepsilon _{h}$ (the case
$\xi <\xi ^{h}_{\star}-\varepsilon _{h}$ is similar), given that
$V_{h}$ is increasing on $[\xi ^{h}_{\star},\infty )$, \eqref{13util} and
\eqref{eq:epsh} yield
\begin{equation*}
\begin{aligned}
V_{h}(\xi )-V_{h}(\xi ^{h}_{\star}) &\geq V_{h}(\xi ^{h}_{\star}+
\varepsilon _{h})-V_{h}(\xi ^{h}_{\star})
\\
&\geq (\xi -\xi ^{h}_{\star})\bigg(\frac{1}{2}V_{h}''(\xi ^{h}_{\star})
\varepsilon _{h} -\frac{1}{6}[V_{h}'']_{{\mathrm{Lip}}}\varepsilon _{h}^{2}
\bigg)
\\
&=(\xi -\xi ^{h}_{\star})
\frac{(V_{h}''(\xi ^{h}_{\star}))^{2}}{3[V_{h}'']_{{\mathrm{Lip}}}},
\end{aligned}
\end{equation*}
hence, for $\mu'\geq0$,
\begin{equation}
\label{eq:theta-<2}
(\xi -\xi ^{h}_{\star})^{2q} \leq
\frac{3^{2q}[V_{h}'']_{{\mathrm{Lip}}}^{2q}}{(V_{h}''(\xi ^{h}_{\star}))^{4q}}
\big(V_{h}(\xi )-V_{h}(\xi ^{h}_{\star})\big)^{2q} \leq
\frac{3^{2q}[V_{h}'']_{{\mathrm{Lip}}}^{2q}}{(V_{h}''(\xi ^{h}_{\star}))^{4q}}
\mathcal{L}_{h,2q}^{\mu'}(\xi ).
\end{equation}
The above upper bound also holds if
$\xi <\xi ^{h}_{\star}-\varepsilon _{h}$. Putting together
\eqref{eq:theta-<1} and \eqref{eq:theta-<2} gives for any
$\xi \neq \xi ^{h}_{\star}$ that
\begin{equation*}
\begin{aligned}
(\xi -\xi ^{h}_{\star})^{2q} &\leq \bigg(
\frac{3^{q}}{(V_{h}''(\xi ^{h}_{\star}))^{q}}\vee
\frac{3^{2q}[V_{h}'']_{{\mathrm{Lip}}}^{2q}}{(V_{h}''(\xi ^{h}_{\star}))^{4q}}
\bigg) \big(\mathcal{L}^{\mu}_{h,q}(\xi )+\mathcal{L}^{\mu'}_{h,2q}(\xi )\big)
\\
&=\kappa _{h,q}\big(\mathcal{L}^{\mu}_{h,q}(\xi )+\mathcal{L}^{\mu'}_{h,2q}(\xi )
\big).
\end{aligned}
\end{equation*}
The relations \eqref{eq:supVh''-Lip} and \eqref{eq:sup-1/Vh''<infty} yield
$\sup _{h\in \overline{\mathcal{H}}}\kappa _{h,q}<\infty $.\qed

\section{Proof of Lemma~\ref{lmm:gamma}}
\label{prf:gamma}

Let $n\geq 2$, $0\leq k\leq n$ and $p\geq 2$. Using the bound
$1+x\leq \mathrm{e} ^{x}$, $x\in \mathbb{R}$, gives
\begin{equation*}
\Pi _{k+1:n}\leq \exp \bigg(-\lambda \sum _{j=k+1}^{n}\gamma _{j}
\bigg)\exp \bigg(\zeta \sum _{j=k+1}^{n}\gamma _{j}^{2}\bigg),
\end{equation*}
with the convention $\sum _{\varnothing}=0$.
\\

\noindent\textbf{\emph{Step~1. Case where $\beta =1$.}}\

\noindent\textbf{\emph{Step~1.1. Inequality on $\Pi _{k+1:n}$.}}\
Note that
\begin{equation*}
\sum _{j=k+1}^{n}\gamma _{j}=\gamma _{1}\big(\psi (n+1)-\psi (k+1)
\big),
\end{equation*}
where $\psi $ is the digamma function that satisfies
\begin{equation*}
\ln{x}-\frac{1}{x}\leq \psi (x)\leq \ln{x}-\frac{1}{2x},\qquad x>0.
\end{equation*}
We refer for instance to Abramowitz et al.~\cite[Section~6.3]{ASR88} and Alzer \cite{Alz97} for elaborations on the digamma function.
Hence
\begin{equation*}
\sum _{j=k+1}^{n}\gamma _{j} \geq \gamma _{1}\ln \frac{n+1}{k+1}-
\frac{\gamma _{1}}{n+1}+\frac{\gamma _{1}}{2(k+1)} \geq \gamma _{1}
\ln \frac{n+1}{k+1}-\frac{\gamma _{1}}{2},
\end{equation*}
yielding
\begin{equation*}
\Pi _{k+1:n}\leq \exp \bigg(\zeta \frac{\pi ^{2}}{6}\gamma _{1}^{2}+
\frac{\lambda\gamma _{1}}{2}\bigg)
\frac{(k+1)^{\lambda \gamma _{1}}}{(n+1)^{\lambda \gamma _{1}}}.
\end{equation*}

\noindent\textbf{\emph{Step~1.2. Inequality on
$\sum _{k=1}^{n}\gamma _{k}^{p}\Pi _{k+1:n}$.}}\
The above estimate together
with a comparison between series and integrals gives
\begin{equation*}
\sum _{k=1}^{n}\gamma _{k}^{p}\Pi _{k+1:n} \leq \gamma _{1}^{p}\exp
\bigg(\zeta \frac{\pi ^{2}}{6}\gamma _{1}^{2}+
\frac{\lambda\gamma _{1}}{2}+\ln (2)(\lambda\gamma _{1} \vee p)\bigg)
\frac{\varphi _{\lambda \gamma _{1}-p+1}(n+1)}{(n+1)^{\lambda \gamma _{1}}}.
\end{equation*}

\noindent\textbf{\emph{Step~2. Case where $\beta \in (0,1)$.}}\

\noindent\textbf{\emph{Step~2.1. Inequality on $\Pi _{k+1:n}$.}}\
A comparison between
series and integrals gives
\begin{equation*}
\begin{aligned}
\Pi _{k+1:n} &\leq \exp \Big(-\lambda \gamma _{1}\big(\varphi _{1-
\beta}(n+1)-\varphi _{1-\beta}(k+1)\big)\Big)
\\
&
\hphantom{=:}
\times \exp \Big(2^{2\beta}\zeta \gamma _{1}^{2}\big(\varphi _{1-2
\beta}(n+1)-\varphi _{1-2\beta}(k+1)\big)\Big).
\end{aligned}
\end{equation*}

\noindent\textbf{\emph{Step~2.2. Inequality on
$\sum _{k=1}^{n}\gamma _{k}^{p}\Pi _{k+1:n}$.}}\
Introduce
\begin{equation*}
n_{0}:=\inf \{n\geq2:\gamma _{n}\leq \lambda /(2\zeta )\}-1.
\end{equation*}
Observe that
\begin{equation*}
1-\lambda \gamma _{n}+\zeta \gamma _{n}^{2}\leq 1-\frac{\lambda}{2}
\gamma _{n},\qquad n\geq n_{0}+1,
\end{equation*}
so that
\begin{align}
\label{sum:square:gamma:times:prod}
\sum _{k=1}^{n} \gamma _{k}^{p}\Pi _{k+1:n} &=\sum _{k=1}^{n_{0}
\wedge n}\gamma _{k}^{p}\Pi _{k+1:n_{0}\wedge n}\Pi _{n_{0}\wedge n+1:n}+
\sum _{k=n_{0}\wedge n+1}^{n}\gamma _{k}^{p}\Pi _{k+1:n}
\nonumber
\\
&\leq \bigg(\sum _{k=1}^{n_{0}\wedge n}\gamma _{k}^{p}\prod _{j=k+1}^{n_{0}
\wedge n}(1+\zeta \gamma _{j}^{2})\bigg)\prod _{j=n_{0}\wedge n+1}^{n}
\bigg(1-\frac{\lambda}{2}\gamma _{j}\bigg)
\nonumber
\\
&
\hphantom{=:}
+\sum _{k=1}^{n}\gamma _{k}^{p}\prod _{j=k+1}^{n}\bigg(1-
\frac{\lambda}{2}\gamma _{j}\bigg).
\end{align}
We simplify the first term on the right-hand side of
\eqref{sum:square:gamma:times:prod}. We have
\begin{equation*}
\begin{aligned}
\sum _{k=1}^{n_{0}\wedge n}\gamma _{k}^{p} \prod _{j=k+1}^{n_{0}
\wedge n}(1+\zeta \gamma _{j}^{2})
&\leq \gamma _{1}^{p-2} \sum _{k=1}^{n_{0}
\wedge n}\gamma _{k}^{2}\prod _{j=k+1}^{n_{0}\wedge n}(1+\zeta \gamma _{j}^{2})\\
&\leq \frac{\gamma _{1}^{p-2}}\zeta
\sum _{k=1}^{n_{0}\wedge n}\bigg(
\prod _{j=k}^{n_{0}\wedge n}(1+\zeta \gamma _{j}^{2})-\prod _{j=k+1}^{n_{0}
\wedge n}(1+\zeta \gamma _{j}^{2})\bigg)
\\
&\leq \frac{\gamma _{1}^{p-2}}\zeta
\prod _{j=1}^{n_{0}\wedge n}(1+
\zeta \gamma _{j}^{2})
\\
&\leq \frac{\gamma _{1}^{p-2}}\zeta
\exp \bigg(\zeta \sum _{j=1}^{n_{0}
\wedge n}\gamma _{j}^{2}\bigg).
\end{aligned}
\end{equation*}
Besides,
\begin{equation*}
\begin{aligned}
\prod _{j=n_{0}\wedge n+1}^{n}\bigg(1-\frac{\lambda}{2}\gamma _{j}
\bigg) &\leq \exp \bigg(-\frac{\lambda}{2}\sum _{j=1}^{n}\gamma _{j}
\bigg)\exp \bigg(\frac{\lambda}{2}\sum _{j=1}^{n_{0}\wedge n}\gamma _{j}
\bigg)
\\
&\leq \exp \bigg(-\frac{\lambda}{2}\sum _{j=1}^{n}\gamma _{j}\bigg)
\exp \bigg(\zeta \sum _{j=1}^{n_{0}\wedge n}\gamma ^{2}_{j}\bigg),
\end{aligned}
\end{equation*}
where we used that $\gamma _{j}\geq \lambda /(2\zeta )$ for
$j\leq n_{0}$. Combining the two preceding estimates and using a comparison
between series and integrals, we obtain
\begin{equation*}
\begin{aligned}
&\bigg(\sum _{k=1}^{n_{0}\wedge n}\gamma _{k}^{p}\prod _{j=k+1}^{n_{0}
\wedge n}(1+\zeta \gamma _{j}^{2})\bigg)\prod _{j=n_{0}\wedge n+1}^{n}
\bigg(1-\frac{\lambda}{2}\gamma _{j}\bigg)
\\
& \leq  \frac{\gamma _{1}^{p-2}}\zeta
\exp \bigg(2\zeta \sum _{j=1}^{n_{0}
\wedge n}\gamma _{j}^{2}\bigg)\exp \bigg(-\frac{\lambda}{2}\sum _{j=1}^{n}
\gamma _{j}\bigg)
\\
& \leq  \frac{\gamma _{1}^{p-2}}\zeta
\exp \big(2^{2\beta +1}\zeta
\gamma _{1}^{2}\varphi _{1-2\beta}(n+1)\big)
\exp\bigg(-\frac{\lambda\gamma _{1}}{2}\varphi _{1-\beta}(n+1)\bigg).
\end{aligned}
\end{equation*}
Next, we deal with the second term on the right-hand side of \eqref{sum:square:gamma:times:prod}. Since $(\gamma _{n})_{n\geq 1}$ is
a strictly positive nonincreasing sequence, it holds for any integer
$1\leq m\leq n$ that
\begin{equation*}
\begin{aligned}
&\sum _{k=1}^{n}\gamma _{k}^{p}\prod _{j=k+1}^{n}\bigg(1-
\frac{\lambda}{2}\gamma _{j}\bigg)
\\
&=\sum _{k=1}^{m}\gamma _{k}^{p}\prod _{j=k+1}^{n}\bigg(1-
\frac{\lambda}{2}\gamma _{j}\bigg)+\sum _{k=m+1}^{n}\gamma _{k}^{p}
\prod _{j=k+1}^{n}\bigg(1-\frac{\lambda}{2}\gamma _{j}\bigg)
\\
&\leq \prod _{j=m+1}^{n}\bigg(1-\frac{\lambda}{2}\gamma _{j}\bigg)
\sum _{k=1}^{m}\gamma _{k}^{p}\\
&\qquad\qquad\qquad
+\frac{2\gamma _{m+1}^{p-1}}{\lambda}\sum _{k=m+1}^{n}
\bigg(\prod _{j=k+1}^{n}\bigg(1-\frac{\lambda}{2}\gamma _{j}
\bigg)
-\prod _{j=k}^{n}\bigg(1-\frac{\lambda}{2}\gamma _{j}
\bigg)\bigg)\\
&\leq \exp \bigg(-\frac{\lambda}{2}\sum _{j=m+1}^{n}\gamma _{j}\bigg)
\gamma _{1}^{p-2}\sum _{k=1}^{n}\gamma^2_{k}+
\frac{2\gamma _{m+1}^{p-1}}{\lambda}\bigg(1-\prod _{j=m+1}^{n}\Big(1-
\frac{\lambda}{2}\gamma _{j}\Big)\bigg)
\\
&\leq 2^{2\beta}\gamma _{1}^{p}\exp \Big(-2^{-(\beta +1)}\lambda
\gamma _{1}\big(\varphi _{1-\beta}(n)-\varphi _{1-\beta}(m)\big)\Big)
\varphi _{1-2\beta}(n+1)+
\frac{2\gamma _{1}^{p-1}}{\lambda (m+1)^{(p-1)\beta}}.
\end{aligned}
\end{equation*}
Select $m=\lfloor n/2\rfloor $. By the concavity and nondecreasing monotonicity of $\varphi _{1-\beta}$, we get for $n\geq 2$ that
\begin{equation*}
\varphi _{1-\beta}(n) - \varphi _{1-\beta}(\lfloor n/2\rfloor)\geq n^{1-\beta}/2.
\end{equation*}
Hence for any integer $n\geq2$,
\begin{equation*}
\begin{aligned}
&\sum _{k=1}^{n}\gamma _{k}^{p}\prod _{j=k+1}^{n}\bigg(1-
\frac{\lambda}{2}\gamma _{j}\bigg)
\\
&\leq 2^{2\beta}\gamma _{1}^{p}\exp (-2^{-(\beta +2)}\lambda \gamma _{1}n^{1-
\beta})\varphi _{1-2\beta}(n+1) +
\frac{2^{1+(p-1)\beta}\gamma _{1}^{p-1}}{\lambda n^{(p-1)\beta}}.\qed
\end{aligned}
\end{equation*}

\section{Proof of Proposition~\ref{prop:local:strong:error:indicator:func}}
\label{prf:local:strong:error:indicator:func}

The next result draws inspiration from Giorgi et al.~\cite[Lemma~5.1 and Proposition~5.2]{Giorgi2020}
and Barrera et al.~\cite[Lemma~A.4]{barrera:hal-01710394}.

\begin{lemma}
\label{lmm:Xh1-Xh2}
Let $X$, $Y$ be real-valued random variables with respective bounded density
functions $f_{X}$, $f_{Y}$.
\begin{enumerate}[\rm(i)]
\item
\label{lmm:Xh1-Xh2-i}
If $X-Y$ is in $L^{p_{\star}}(\mathbb{P})$ for some $p_{\star}>1$, then
for every $\xi \in \mathbb{R}$,
\begin{equation*}
\mathbb{E}[|\mathds1_{\{X>\xi \}}\!-\!\mathds1_{\{Y>\xi \}}|] \leq (p_{
\star}^{\frac{p_{\star}}{p_{\star}+1}}+p_{\star}^{
\frac{1}{p_{\star}+1}} )(\|f_{X}\|_{\infty}+\|f_{Y}\|_{\infty})^{
\frac{p_{\star}}{p_{\star}+1}}\mathbb{E}[|X-Y|^{p_{\star}}]^{
\frac{1}{p_{\star}+1}}.
\end{equation*}
\item
\label{lmm:Xh1-Xh2-ii}
Assume there exists a positive constant $d_{X,Y}$ such that for any
$u\in \mathbb{R}$,
\begin{equation}
\label{Gaussian:concentration}
\mathbb{E}\big[\exp \big(u(X-Y)\big)\big]\leq \exp (d_{X,Y}u^{2}).
\end{equation}
Then for any $\xi \in \mathbb{R}$,
\begin{equation*}
\mathbb{E}[|\mathds1_{\{X>\xi \}}-\mathds1_{\{Y>\xi \}}|] \leq 2\sqrt{d_{X,Y}}+(
\|f_{X}\|_{\infty}+\|f_{Y}\|_{\infty})\sqrt{2d_{X,Y}\ln (d_{X,Y}^{-1}
\vee 1)}.
\end{equation*}
\end{enumerate}
\end{lemma}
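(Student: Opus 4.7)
}
The plan is to exploit the observation that $|\mathbf{1}_{X>\xi}-\mathbf{1}_{Y>\xi}|$ equals one only when $\xi$ separates $X$ and $Y$, in which case each of $|X-\xi|$ and $|Y-\xi|$ is dominated by $|X-Y|$. Fixing $\eta>0$, I would first establish the pointwise bound
\begin{equation*}
|\mathbf{1}_{X>\xi}-\mathbf{1}_{Y>\xi}|
\leq\mathbf{1}_{\{|X-Y|>\eta\}}+\tfrac12\mathbf{1}_{\{|X-\xi|\leq\eta\}}+\tfrac12\mathbf{1}_{\{|Y-\xi|\leq\eta\}}
\end{equation*}
by a case analysis on $X>\xi\geq Y$ versus $Y>\xi\geq X$. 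Taking expectations and using the small-ball estimate $\mathbb{P}(|X-\xi|\leq\eta)\leq 2\eta\|f_X\|_\infty$ (and the analogous one for $Y$) then yields the master inequality
\begin{equation*}
\mathbb{E}\bigl[|\mathbf{1}_{X>\xi}-\mathbf{1}_{Y>\xi}|\bigr]\leq\mathbb{P}(|X-Y|>\eta)+\eta\bigl(\|f_X\|_\infty+\|f_Y\|_\infty\bigr).
\end{equation*}
Both parts of the lemma then reduce to controlling $\mathbb{P}(|X-Y|>\eta)$ and optimizing over $\eta>0$.

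For part~(\ref{lmm:Xh1-Xh2-i}), I would insert Markov's inequality $\mathbb{P}(|X-Y|>\eta)\leq\eta^{-p_\star}\mathbb{E}[|X-Y|^{p_\star}]$ into the master inequality and minimize the resulting function $\eta\mapsto a\eta^{-p_\star}+b\eta$ with $a\coloneqq\mathbb{E}[|X-Y|^{p_\star}]$ and $b\coloneqq\|f_X\|_\infty+\|f_Y\|_\infty$. The optimum is attained at $\eta^\star=(p_\star a/b)^{1/(p_\star+1)}$, and substituting back produces a bound of the form $C(p_\star)\,b^{p_\star/(p_\star+1)}a^{1/(p_\star+1)}$ with $C(p_\star)=p_\star^{1/(p_\star+1)}+p_\star^{-p_\star/(p_\star+1)}$, which does not exceed the constant in the statement since $p_\star>1$.

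For part~(\ref{lmm:Xh1-Xh2-ii}), I would rely on a Chernoff argument powered by~\eqref{Gaussian:concentration}: applying Markov to $\e^{u(X-Y)}$ and to $\e^{-u(X-Y)}$ and minimizing over $u>0$ gives $\mathbb{P}(|X-Y|>\eta)\leq 2\exp(-\eta^2/(4d_{X,Y}))$. Plugging this into the master inequality and choosing the cut-off $\eta=\sqrt{2d_{X,Y}\ln(d_{X,Y}^{-1}\vee 1)}$ turns the tail term into $2(\sqrt{d_{X,Y}}\wedge 1)\leq 2\sqrt{d_{X,Y}}$ while the linear term becomes $\sqrt{2d_{X,Y}\ln(d_{X,Y}^{-1}\vee 1)}(\|f_X\|_\infty+\|f_Y\|_\infty)$, exactly the inequality to prove. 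There is no real conceptual obstacle: the whole argument is a Markov/Chernoff plus optimal-radius balance, and the only subtlety is selecting $\eta$ in~(\ref{lmm:Xh1-Xh2-ii}) of order $\sqrt{d_{X,Y}\ln(1/d_{X,Y})}$ rather than $\sqrt{d_{X,Y}}$, which is precisely what produces the announced logarithmic factor.
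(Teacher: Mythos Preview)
Your proposal is correct and follows essentially the same approach as the paper: both derive the master inequality $\mathbb{E}[|\mathbf1_{X>\xi}-\mathbf1_{Y>\xi}|]\leq\mathbb{P}(|X-Y|>\eta)+\eta(\|f_X\|_\infty+\|f_Y\|_\infty)$, then apply Markov and optimize for~(\ref{lmm:Xh1-Xh2-i}), and Chernoff with the cutoff $\eta=\sqrt{2d_{X,Y}\ln(d_{X,Y}^{-1}\vee1)}$ for~(\ref{lmm:Xh1-Xh2-ii}). The only cosmetic difference is that the paper obtains the master inequality via the one-sided decomposition $\mathbb{P}(Y\leq\xi<X)=\mathbb{P}(Y\leq\xi,\,\xi+\eta<X)+\mathbb{P}(Y\leq\xi<X\leq\xi+\eta)$ (and symmetrically), whereas you use the neat pointwise bound with the $\tfrac12$ factors; both routes arrive at the same inequality.
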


\begin{proof}
Let $\xi \in \mathbb{R}$. For any $\delta >0$,
\begin{align}
\label{eq:Xh-X0<}
\mathbb{E}[|\mathds1_{\{X>\xi \}}-\mathds1_{\{Y>\xi \}}|] &=
\mathbb{E}[\mathds1_{\{X>\xi \}}\mathds1_{\{Y\leq \xi \}}+\mathds1_{
\{Y>\xi \}}\mathds1_{\{X\leq \xi \}}]
\nonumber
\\
&=\mathbb{P}[Y\leq \xi <X]+\mathbb{P}[X\leq \xi <Y]
\nonumber
\\
&=\mathbb{P}[Y\leq \xi ,\xi +\delta <X]+\mathbb{P}[Y\leq \xi <X\leq
\xi +\delta ]
\nonumber
\\
&
\hphantom{=:}
+\mathbb{P}[X\leq \xi ,\xi +\delta <Y]+\mathbb{P}[X\leq \xi <Y\leq
\xi +\delta ]
\nonumber
\\
&\leq \mathbb{P}[X-Y>\delta ]+\mathbb{P}[X-Y<-\delta ]
\nonumber
\\
&
\hphantom{=:}
+\mathbb{P}[\xi <X\leq \xi +\delta ]+\mathbb{P}[\xi <Y\leq \xi +
\delta ]
\nonumber
\\
&\leq \mathbb{P}[X-Y>\delta ]+\mathbb{P}[X-Y<-\delta ]
\nonumber
\\
&
\hphantom{=:}
+\delta (\|f_{X}\|_{\infty}+\|f_{Y}\|_{\infty}).
\end{align}

\noindent(\ref{lmm:Xh1-Xh2-i})\
By \eqref{eq:Xh-X0<},
\begin{equation*}
\begin{aligned}
\mathbb{E}[|\mathds1_{\{X>\xi \}}-\mathds1_{\{Y>\xi \}}|] &\leq
\mathbb{P}[|X-Y|>\delta ]+\delta (\|f_{X}\|_{\infty}+\|f_{Y}\|_{
\infty})
\\
&\leq \frac{\mathbb{E}[|X-Y|^{p_{\star}}]}{\delta ^{p_{\star}}}+
\delta (\|f_{X}\|_{\infty}+\|f_{Y}\|_{\infty}).
\end{aligned}
\end{equation*}
A straightforward optimisation yields the choice
\begin{equation*}
\delta =
\frac{{p_{\star}}^{\frac{1}{p_{\star}+1}}\mathbb{E}[|X-Y|^{p_{\star}}]^{\frac{1}{p_{\star}+1}}}{(\|f_{X}\|_{\infty}+\|f_{Y}\|_{\infty})^{\frac{1}{p_{\star}+1}}},
\end{equation*}
giving the result.
\\

\noindent(\ref{lmm:Xh1-Xh2-ii})\
By \eqref{eq:Xh-X0<}, the exponential Markov inequality
and \eqref{Gaussian:concentration}, for any $u\geq 0$,
\begin{align*}
&\mathbb{E}[|\mathds1_{\{X>\xi \}}-\mathds1_{\{Y>\xi \}}|]
\\
&\leq \mathbb{P}[X-Y>\delta ]+\mathbb{P}[-(X-Y)>\delta ]+\delta (\|f_{X}
\|_{\infty}+\|f_{Y}\|_{\infty})
\\
&\leq \exp (-u\delta )\Big(\mathbb{E}\big[\exp \big(u(X-Y)\big)\big]+
\mathbb{E}\big[\exp \big(-u(X-Y)\big)\big]\Big)
\\
&
\hphantom{=:}
+\delta (\|f_{X}\|_{\infty}+\|f_{Y}\|_{\infty})
\\
&\leq 2\exp (d_{X,Y}u^{2}-u\delta )+\delta (\|f_{X}\|_{\infty}+\|f_{Y}
\|_{\infty}).
\end{align*}
The right-hand side above is minimised in $u$ for
$\delta /(2d_{X,Y})$, yielding
\begin{equation*}
\mathbb{E}[|\mathds1_{\{X>\xi \}}-\mathds1_{\{Y>\xi \}}|] \leq 2\exp
\bigg(-\frac{\delta ^{2}}{4d_{X,Y}}\bigg)+\delta (\|f_{X}\|_{\infty}+
\|f_{Y}\|_{\infty}).
\end{equation*}
Evaluating the right-hand side of that inequality at
$ \delta =\sqrt{2d_{X,Y}\ln{(d_{X,Y}^{-1}\vee 1)}} $ gives
\begin{equation*}
\mathbb{E}[|\mathds1_{\{X>\xi \}}-\mathds1_{\{Y>\xi \}}|] \leq 2\sqrt{d_{X,Y}}+
\sqrt 2(\|f_{X}\|_{\infty}+\|f_{Y}\|_{\infty})\sqrt{d_{X,Y}\ln{(d_{X,Y}^{-1}
\vee 1)}},
\end{equation*}
completing the proof of (\ref{lmm:Xh1-Xh2-ii}).
\end{proof}

\begin{proof}[Proof of Proposition~\ref{prop:local:strong:error:indicator:func}]
Let $h,h'\in \overline{\mathcal{H}}$. Without loss of generality and to
avoid trivialities, we assume $0\leq h<h'$. We study separately the cases
$0<h<h'$ and $0=h<h'$.
\\

\noindent(\ref{prop:local:strong:error:indicator:func-i})\hyperref[prop:local:strong:error:indicator:func-ia]{a}.\
This proof is inspired from Giorgi et al.~\cite[Lemma~3.2]{Giorgi2020}.
Assume that
\begin{equation*}
\mathbb{E}\big[\big|\varphi (Y,Z)-\mathbb{E}[\varphi (Y,Z)\,|\,Y]
\big|^{p}\big]<\infty \qquad \text{for some } p\geq 1.
\end{equation*}
Then for $h=\frac{1}{K}\in \mathcal{H}$,
\begin{equation*}
\begin{aligned}
\mathbb{E}[|X_{h}-X_{0}|^{p}] &\leq \frac{1}{K}\sum _{k=1}^{K}
\mathbb{E}\big[\big|\varphi (Y,Z^{(k)})-\mathbb{E}[\varphi (Y, Z)\,|
\,Y]\big|^{p}\big]
\\
&=\mathbb{E}\big[\big|\varphi (Y,Z)-\mathbb{E}[\varphi (Y,Z)\,|\,Y]
\big|^{p}\big].
\end{aligned}
\end{equation*}

\noindent
\textbf{\emph{Case 1. $0<h<h'$.}}\
Introduce
$\widetilde{\varphi}(Y,Z):=\varphi (Y,Z)-\mathbb{E}[\varphi (Y,Z)\,|
\,Y]$ and take $h=\frac{1}{K}<h'=\frac{1}{K'}$. Then
\begin{equation}
\label{decdf:Xh}
X_{h}-X_{h'} =h\sum _{k=K'+1}^{K}\widetilde{\varphi}(Y,Z^{(k)}) +(h-h')
\sum _{k=1}^{K'}\widetilde{\varphi}(Y, Z^{(k)})
\end{equation}
so that by the triangle inequality,
\begin{equation*}
\mathbb{E}[|X_{h}-X_{h'}|^{p}]^{\frac{1}{p}} \leq h\mathbb{E}\bigg[
\bigg|\sum _{k=K'+1}^{K}\widetilde{\varphi}(Y,Z^{(k)})\bigg|^{p}
\bigg]^{\frac{1}{p}} +(h'-h)\mathbb{E}\bigg[\bigg|\sum _{k=1}^{K'}
\widetilde{\varphi}(Y,Z^{(k)})\bigg|^{p}\bigg]^{\frac{1}{p}}.
\end{equation*}
By the tower law and the Burkholder--Davis--Gundy inequality,
\begin{align}
\label{strong:error:diff:Xh}
&\mathbb{E}[|X_{h}-X_{h'}|^{p}]^{\frac{1}{p}}
\nonumber
\\
&\leq B_{p}\mathbb{E}\big[\big|\varphi (Y,Z)-\mathbb{E}[\varphi (Y,Z)
\,|\,Y]\big|^{p}\big]^{\frac{1}{p}}\big(h(K-K')^{\frac{1}{2}}+(h'-h){K'}^{
\frac{1}{2}}\big)
\nonumber
\\
&\leq B_{p}\mathbb{E}\big[\big|\varphi (Y,Z)-\mathbb{E}[\varphi (Y,Z)
\,|\,Y]\big|^{p}\big]^{\frac{1}{p}}(h'-h)^{\frac{1}{2}}\bigg(\Big(
\frac{h}{h'}\Big)^{\frac{1}{2}}+\Big(1-\frac{h}{h'}\Big)^{\frac{1}{2}}
\bigg)
\nonumber
\\
&\leq \sqrt{2}B_{p}\mathbb{E}\big[\big|\varphi (Y,Z)-\mathbb{E}[
\varphi (Y,Z)\,|\,Y]\big|^{p}\big]^{\frac{1}{p}}(h'-h)^{\frac{1}{2}}.
\end{align}

\noindent
\textbf{\emph{Case 2. $0 = h<h'$.}}\
Note that for
$h'=\frac{1}{K'}\in \mathcal{H}$,
\begin{equation*}
X_{h'}-X_{0} =\frac{1}{K'}\sum _{k=1}^{K'}\big(\varphi (Y,Z^{(k)})-
\mathbb{E}[\varphi (Y,Z)\,|\,Y]\big) =\frac{1}{K'}\sum _{k=1}^{K'}
\widetilde\varphi (Y,Z^{(k)}).
\end{equation*}
Since $(\widetilde\varphi (Y,Z^{(k)}))_{1\leq k\leq K'}$ is conditionally
on $Y$ a martingale increment sequence, using successively the tower law,
the Burkholder--Davis--Gundy inequality and the triangle inequality, we
obtain
\begin{equation*}
\begin{aligned}
\mathbb{E}[|X_{h'}-X_{0}|^{p}]^{\frac{1}{p}} &=h'\mathbb{E}\bigg[
\bigg|\sum _{k=1}^{K'}\widetilde\varphi (Y,Z^{(k)})\bigg|^{p}\bigg]^{
\frac{1}{p}}
\\
&\leq Ch'\mathbb{E}\bigg[\bigg|\sum _{k=1}^{K'}\widetilde\varphi (Y,Z^{(k)})^{2}
\bigg|^{\frac{p}{2}}\bigg]^{\frac{1}{p}} \leq C\mathbb{E}[|
\widetilde\varphi (Y,Z)|^{p}]^{\frac{1}{p}}(h')^{\frac{1}{2}}.
\end{aligned}
\end{equation*}
We conclude the proof of (\ref{prop:local:strong:error:indicator:func-i})\hyperref[prop:local:strong:error:indicator:func-ia]{a}) by applying Lemma~\ref{lmm:Xh1-Xh2}(\ref{lmm:Xh1-Xh2-i})
with $p=p_{\star}$.
\\

\noindent(\ref{prop:local:strong:error:indicator:func-i})\hyperref[prop:local:strong:error:indicator:func-ib]{b}.\
\textbf{\emph{Case 1. $0<h<h'$.}}\
Coming back to the decomposition
\eqref{decdf:Xh}, using that
$(\widetilde{\varphi}(Y,Z^{(k)}))_{1\leq k\leq K}$ are conditionally on
$Y$ independent and
\eqref{assumption:conditional:gaussian:concentration},
\begin{equation*}
\begin{aligned}
&\mathbb{E}\big[\exp \big(u(X_{h}-X_{h'})\big)\,\big|\,Y\big]
\\
&=\mathbb{E}\big[\exp \big(uh\widetilde{\varphi}(Y,Z)\big)\,\big|\,Y
\big]^{K-K'}\mathbb{E}\big[\exp \big(u(h-h')\widetilde{\varphi}(Y,Z)
\big)\,\big|\,Y\big]^{K'}
\\
&\leq \exp \bigg(C_{g}u^{2}h^{2}\Big(\frac{1}{h}-\frac{1}{h'}\Big)
\bigg)\exp \bigg(C_{g}u^{2}\frac{(h-h')^{2}}{h'}\bigg) =\exp \big(C_{g}(h'-h)u^{2}
\big).
\end{aligned}
\end{equation*}

\noindent\textbf{\emph{Case 2. $0= h<h'$.}}\
Let
$h'=\frac{1}{K'}\in \mathcal{H}$. Via the independence of the sequence
$(\widetilde{\varphi}(Y,Z^{(k)}))_{1\leq k\leq K'}$ conditionally on
$Y$, the assumption
\eqref{assumption:conditional:gaussian:concentration} and the fact that
$K'=\frac{1}{h'}$, we get
\begin{equation*}
\begin{aligned}
\mathbb{E}\big[\exp \big(u(X_{h'}-X_{0})\big)\,\big|\,Y\big] &=
\mathbb{E}\bigg[\exp \bigg(uh'\sum _{k=1}^{K'}\widetilde\varphi (Y,Z^{(k)})
\bigg)\,\bigg|\,Y\bigg]
\\
&=\mathbb{E}\big[\exp \big(uh'\widetilde\varphi (Y,Z)\big)\,\big|\,Y
\big]^{K'}
\\
&\leq \exp (C_{g}u^{2}h').
\end{aligned}
\end{equation*}
Taking expectations on both sides and setting $X=X_{h}$, $Y=X_{h'}$,
$d_{X,Y}=C_{g}(h'-h)$ in Case 1, and $X=X_{h}$, $Y=X_{0}$,
$d_{X,Y}=C_{g}h'$ in Case 2 guarantees the condition~\eqref{Gaussian:concentration}.
We conclude the proof by applying Lemma~\ref{lmm:Xh1-Xh2}(\ref{lmm:Xh1-Xh2-ii}).
\\

\noindent
(\ref{prop:local:strong:error:indicator:func-ii})\
We first write
\begin{equation}
\label{local:variance:ML:decomposition}
\mathbb{E}[|\mathds1_{\{X_{h_{\ell}}>\xi \}}-\mathds1_{\{X_{h_{\ell -1}}>
\xi \}}|] =\mathbb{P}[X_{h_{\ell -1}}\leq \xi < X_{h_{\ell}}]+
\mathbb{P}[X_{h_{\ell}}\leq \xi < X_{h_{\ell -1}}].
\end{equation}
Introducing the random variable
$G_{\ell}=h_{\ell}^{-\frac{1}{2}}(X_{h_{\ell}}-X_{h_{\ell -1}})$,
\begin{equation*}
\begin{aligned}
\mathbb{P}[X_{h_{\ell -1}}\leq \xi < X_{h_{\ell}}] &=\mathbb{P}[X_{h_{
\ell -1}}\leq \xi < X_{h_{\ell -1}}+h_{\ell}^{\frac{1}{2}}G_{\ell}]
\\
&=\mathbb{P}[X_{h_{\ell -1}}\leq \xi < X_{h_{\ell -1}}+h_{\ell}^{
\frac{1}{2}}G_{\ell},G_{\ell}>0]
\\
&=\mathbb{E}\big[\mathbb{P}[\xi -h_{\ell}^{\frac{1}{2}}G_{\ell }< X_{h_{
\ell -1}}\leq \xi ,G_{\ell}>0\,|\,G_{\ell}]\big]
\\
&=\mathbb{E}\big[\mathds1_{\{G_{\ell}>0\}}\big(F_{X_{h_{\ell -1}}|G_{
\ell}}(\xi )-F_{X_{h_{\ell -1}}|G_{\ell}}(\xi -h_{\ell}^{\frac{1}{2}}G_{
\ell})\big)\big].
\end{aligned}
\end{equation*}
The condition~\eqref{assump:unif:lipschitz:integrability:conditional:cdf}
yields
\begin{equation*}
\mathbb{P}[X_{h_{\ell -1}}\leq \xi < X_{h_{\ell}}] \leq h_{\ell}^{
\frac{1}{2}}\mathbb{E}[G_{\ell}^{+}K_{\ell}].
\end{equation*}
Similarly,
\begin{equation*}
\mathbb{P}[X_{h_{\ell}}\leq \xi \leq X_{h_{\ell -1}}]\leq h_{\ell}^{
\frac{1}{2}} \mathbb{E}[G_{\ell}^{-}K_{\ell}].
\end{equation*}
Coming back to \eqref{local:variance:ML:decomposition} and summing up the
two preceding inequalities yields
\begin{equation*}
\mathbb{E}[|\mathds1_{\{X_{h_{\ell}}>\xi \}}-\mathds1_{\{X_{h_{\ell -1}}>
\xi \}}|] \leq h_{\ell}^{\frac{1}{2}}\mathbb{E}[|G_{\ell}|K_{\ell}],
\end{equation*}
completing the proof.
\end{proof}

\section{Proof of Theorem~\ref{thm:ml-variance-cv}}
\label{prf:thm:ml-variance-cv}

Throughout, $\bar{C}$ designates a positive constant that may change from line to
line.
\\

\noindent\textbf{\emph{Step~1. Study of
$\xi ^{{{\mathrm{ML}}}}_{\mathbf{N}}-\xi _{\star}^{h_{L}}$.}}\
We follow a
similar strategy to the one used in Frikha~\cite[Lemma~2.7]{10.1214/15-AAP1109}
and decompose the dynamics of the sequence
$(\xi ^{h}_{n})_{n\geq 0}$ given by~\eqref{approximate:sgd:algorithm} into
\begin{align}
\label{decomposition:var:sa}
\xi ^{h}_{n+1}-\xi ^{h}_{\star }&=\big(1-\gamma _{n+1}V_{0}''(\xi ^{0}_{
\star})\big)(\xi ^{h}_{n}-\xi ^{h}_{\star})+\gamma _{n+1}\big(V_{0}''(
\xi ^{0}_{\star})-V_{h}''(\xi ^{h}_{\star})\big)(\xi ^{h}_{n}-\xi ^{h}_{
\star})
\nonumber
\\
&
\hphantom{=:}
+\gamma _{n+1}r^{h}_{n+1}+\gamma _{n+1}\rho ^{h}_{n+1}+\gamma _{n+1}e^{h}_{n+1},
\end{align}
where
\begin{align*}
r^{h}_{n+1} &:= V_{h}''(\xi ^{h}_{\star})(\xi ^{h}_{n}-\xi ^{h}_{
\star})-V_{h}'(\xi ^{h}_{n}),
\\
\rho ^{h}_{n+1} &:= V_{h}'(\xi ^{h}_{n})-V_{h}'(\xi ^{h}_{\star})-
\big(H_{1}(\xi ^{h}_{n},X_{h}^{(n+1)})-H_{1}(\xi ^{h}_{\star},X_{h}^{(n+1)})
\big),
\\
e^{h}_{n+1} &:= V_{h}'(\xi ^{h}_{\star})-H_{1}(\xi ^{h}_{\star},X_{h}^{(n+1)}).
\end{align*}
Iterating $n$ times \eqref{decomposition:var:sa} gives
\begin{align}
\label{eq:thetaopth-thetastarh}
\xi ^{h}_{n}-\xi ^{h}_{\star }&= (\xi ^{h}_{0}-\xi ^{h}_{\star})\Pi _{1:n}
+\sum _{k=1}^{n}\gamma _{k}\Pi _{k+1:n} \big(V_{0}''(\xi ^{0}_{\star})-V_{h}''(
\xi ^{h}_{\star})\big)(\xi ^{h}_{k}-\xi ^{h}_{\star})
\nonumber
\\
&
\hphantom{=:}
+\sum _{k=1}^{n}\gamma _{k}\Pi _{k+1:n}r^{h}_{k} +\sum _{k=1}^{n}
\gamma _{k}\Pi _{k+1:n}\rho ^{h}_{k} +\sum _{k=1}^{n}\gamma _{k}\Pi _{k+1:n}e^{h}_{k},
\end{align}
where for two positive integers $i$ and $n$,
\begin{equation*}
\Pi _{i:n}:=\prod _{j=i}^{n}\big(1-\gamma _{j}V_{0}''(\xi ^{0}_{\star})
\big)
\end{equation*}
with the convention that $\prod _{\varnothing}=1$. Using the inequality
$1+x\leq \mathrm{e} ^{x}$, $x\in \mathbb{R}$, a comparison between series and integrals
and \eqref{eq:phi} yields
\begin{align}
\label{upper:estimate:pi:i:n}
|\Pi _{i:n}| &\leq \exp \bigg(-V_{0}''(\xi ^{0}_{\star})\sum _{j=i}^{n}
\gamma _{j}\bigg)
\nonumber
\\
&\leq
\begin{cases}
\exp (\gamma _{1}V_{0}''(\xi ^{0}_{\star}))i^{\gamma _{1}V_{0}''(\xi ^{0}_{
\star})}(n+1)^{-\gamma _{1}V_{0}''(\xi ^{0}_{\star})}, &\quad \beta =1,
\\
\exp (-\gamma _{1}V_{0}''(\xi ^{0}_{\star})(\varphi _{1-\beta}(n+1)-
\varphi _{1-\beta}(i))), &\quad \beta \in (0,1).
\end{cases}
\end{align}
Since $\gamma _{n}=\gamma _{1}n^{-\beta}$, $\beta \in (0,1]$ with
$\gamma _{1}V_{0}''(\xi ^{0}_{\star})>\gamma _{1}\bar\lambda _{2}>1$ if
$\beta =1$, using~\eqref{upper:estimate:pi:i:n} and a comparison between
series and integrals (with computations similar to those performed in the
proof of Lemma~\ref{lmm:gamma}), we deduce that for any $b\geq 0$ and
$a>0$ such that $\gamma _{1}aV_{0}''(\xi ^{0}_{\star})>b$ if
$\beta =1$, for any integer $n$,
\begin{equation}
\label{eq:gamma^(1+b)|pi|<gamma^b}
\sum _{k=1}^{n}\gamma _{k}^{1+b}|\Pi _{k+1:n}|^{a} \leq \bar{C}\gamma _{n}^{b}
\end{equation}
as well as
\begin{equation}
\label{bound:Pi:1:n}
|\Pi _{1:n}|\leq \bar{C}\gamma _{n}.
\end{equation}
Via \eqref{eq:thetaopth-thetastarh}, for $\ell \in \{1,\dots ,L\}$,
\begin{equation}
\label{eq:xiML(h)-xi*(hL):aux}
\xi ^{h_{\ell}}_{n}-\xi ^{h_{\ell}}_{\star}-(\xi ^{h_{\ell -1}}_{n}-
\xi ^{h_{\ell -1}}_{\star}) =A^{\ell}_{n}+B^{\ell}_{n}+C^{\ell}_{n}+D^{
\ell}_{n}+E^{\ell}_{n},
\end{equation}
where
\begin{align*}
A^{\ell}_{n} &:=\big(\xi ^{h_{\ell}}_{0}-\xi ^{h_{\ell}}_{\star}-(
\xi ^{h_{\ell -1}}_{0}-\xi ^{h_{\ell -1}}_{\star})\big)\Pi _{1:n},
\\
B^{\ell}_{n} &:=\sum _{k=1}^{n}\gamma _{k}\Pi _{k+1:n} \Big(\big(V_{0}''(
\xi ^{0}_{\star})-V_{h_{\ell}}''(\xi ^{h_{\ell}}_{\star})\big)(\xi ^{h_{
\ell}}_{k}-\xi ^{h_{\ell}}_{\star})
\nonumber
\\
&
\hphantom{=:\sum _{k=1}^{n}\gamma _{k}\Pi _{k+1:n}
\Big(}
-\big(V_{0}''(\xi ^{0}_{\star})-V_{h_{\ell -1}}''(\xi ^{h_{\ell -1}}_{
\star})\big)(\xi ^{h_{\ell -1}}_{k} - \xi ^{h_{\ell -1}}_{\star})
\Big),
\\
C^{\ell}_{n} &:=\sum _{k=1}^{n}\gamma _{k}\Pi _{k+1:n}(r^{h_{\ell}}_{k}-r^{h_{
\ell -1}}_{k}),
\\
D^{\ell}_{n} &:=\sum _{k=1}^{n}\gamma _{k}\Pi _{k+1:n}(\rho ^{h_{\ell}}_{k}-
\rho ^{h_{\ell -1}}_{k}),
\\
E^{\ell}_{n} &:=\sum _{k=1}^{n}\gamma _{k}\Pi _{k+1:n}(e^{h_{\ell}}_{k}-e^{h_{
\ell -1}}_{k}).
\end{align*}
Hence the difference between the multilevel SA estimator
$\xi ^{{\mathrm{ML}}}_{\mathbf{N}}$ of the VaR in~\eqref{eq:ml} and the
solution $\xi ^{h_{L}}_{\star}=\argmin{V_{h_{L}}}$ can be decomposed into
\begin{align}
\label{eq:xiML(h)-xi*(hL)}
\xi ^{{\mathrm{ML}}}_{\mathbf{N}}-\xi ^{h_{L}}_{\star }&=\xi ^{h_{0}}_{N_{0}}-
\xi ^{h_{0}}_{\star }+\sum _{\ell =1}^{L}\big((\xi ^{h_{\ell}}_{N_{
\ell}}-\xi ^{h_{\ell}}_{\star})-(\xi ^{h_{\ell -1}}_{N_{\ell}}-\xi ^{h_{
\ell -1}}_{\star})\big)
\nonumber
\\
&=\xi ^{h_{0}}_{N_{0}}-\xi ^{h_{0}}_{\star }+\sum _{\ell =1}^{L}(A^{
\ell}_{N_{\ell}}+B^{\ell}_{N_{\ell}}+C^{\ell}_{N_{\ell}}+D^{\ell}_{N_{
\ell}}+E^{\ell}_{N_{\ell}}).
\end{align}
We now quantify the contribution of each term appearing in the decomposition~\eqref{eq:xiML(h)-xi*(hL)}.
\\

\noindent\textbf{\emph{Step~1.1. Study of
$\xi ^{h_{0}}_{N_{0}}-\xi ^{h_{0}}_{\star}$.}}\ Theorem~\ref{thm:variance-cv}(\ref{thm:variance-cv:i})
guarantees that
\begin{equation*}
\mathbb{E}[(\xi ^{h_{0}}_{N_{0}}-\xi ^{h_{0}}_{\star})^{2}]
\leq \bar{C} \gamma _{N_{0}}.
\end{equation*}

\noindent\textbf{\emph{Step~1.2. Study of
$\sum _{\ell =1}^{L}A^{\ell}_{N_{\ell}}$.}}\
From \eqref{bound:Pi:1:n},
\begin{equation}
\label{eq:E[A]<:aux}
\mathbb{E}[|A^{\ell}_{n}|^{2}]
\leq 2\bar{C}\sup _{\ell \geq 0}{\mathbb{E}[|\xi ^{h_{\ell}}_{0}-\xi ^{h_{\ell}}_{\star}|^{2}]}\gamma ^{2}_{n}
\leq \bar{C}\gamma ^{2}_{n}.
\end{equation}
Hence by the triangle inequality,
\begin{equation*}
\mathbb{E}\bigg[\bigg(\sum _{\ell =1}^{L}A^{\ell}_{N_{\ell}}\bigg)^{2}\bigg]^{\frac{1}{2}}
\leq \bar{C}\sum _{\ell =1}^{L}\gamma _{N_{\ell}}.
\end{equation*}

\noindent\textbf{\emph{Step~1.3. Study of
$\sum _{\ell =1}^{L}B^{\ell}_{N_{\ell}}$.}}\
By Lemma~\ref{lmm:thetah->theta0},
$(\xi ^{h_{\ell}}_{\star})_{\ell \geq 0}$ is bounded. Let
$\mathcal{K} \subseteq \mathbb{R}$ be a compact set containing the sequence
$(\xi ^{h_{\ell}}_{\star})_{\ell \geq 0}$. For any $\ell \geq 0$,
\begin{equation*}
\begin{aligned}
|V_{0}''(\xi ^{0}_{\star})-V_{h_{\ell}}''(\xi ^{h_{\ell}}_{\star})| &
\leq \frac{1}{1-\alpha}\big(|f_{X_{0}}(\xi ^{0}_{\star})-f_{X_{0}}(
\xi ^{h_{\ell}}_{\star})| +|f_{X_{0}}(\xi ^{h_{\ell}}_{\star})-f_{X_{h_{
\ell}}}(\xi ^{h_{\ell}}_{\star})|\big)
\\
&\leq \frac{1}{1-\alpha}\Big([f_{X_{0}}]_{{\mathrm{Lip}}}|\xi ^{h_{
\ell}}_{\star}-\xi ^{0}_{\star}| +\sup _{\xi \in \mathcal{K}}{|f_{X_{0}}(
\xi )-f_{X_{h_{\ell}}}(\xi )|}\Big)
\\
&\leq \bar{C}(h_{\ell}+h_{\ell}^{\frac{1}{4}+\delta})
\leq \bar{C}h_{\ell}^{(\frac{1}{4}+\delta )\wedge 1},
\end{aligned}
\end{equation*}
where we used Assumptions~\ref{asp:misc}(\ref{asp:misc-iv}) and~\ref{asp:fh-f0}
together with Proposition~\ref{prp:bias-cv}. Using this estimate, Theorem~\ref{thm:variance-cv}(\ref{thm:variance-cv:i})
and \eqref{eq:gamma^(1+b)|pi|<gamma^b}, we get
\begin{align}
\label{eq:E[B]<:aux}
\mathbb{E}[|B^{\ell}_{n}|^{2}]^{\frac{1}{2}} &\leq 2\big(|V_{0}''(
\xi ^{0}_{\star})-V_{h_{\ell}}''(\xi ^{h_{\ell}}_{\star})|\vee |V_{0}''(
\xi ^{0}_{\star})-V_{h_{\ell -1}}''(\xi ^{h_{\ell -1}}_{\star})|\big)
\nonumber
\\
&
\hphantom{=:}
\times \bigg(\sum _{k=1}^{n}\gamma _{k}|\Pi _{k+1:n}|\big(\mathbb{E}[|
\xi ^{h_{\ell}}_{k}-\xi ^{h_{\ell}}_{\star}|^{2}]^{\frac{1}{2}}\vee
\mathbb{E}[|\xi ^{h_{\ell -1}}_{k}-\xi ^{h_{\ell -1}}_{\star}|^{2}]^{
\frac{1}{2}}\big)\bigg)
\nonumber
\\
&\leq \bar{C}h_{\ell}^{(\frac{1}{4}+\delta )\wedge 1}\sum _{k=1}^{n}\gamma _{k}^{
\frac{3}{2}}|\Pi _{k+1:n}|
\leq \bar{C}\gamma _{n}^{\frac{1}{2}}h_{\ell}^{(\frac{1}{4}+\delta )\wedge 1}.
\end{align}
Therefore by the triangle inequality,
\begin{equation*}
\mathbb{E}\bigg[\bigg(\sum _{\ell =1}^{L}B^{\ell}_{N_{\ell}}\bigg)^{2}
\bigg]^{\frac{1}{2}}
\leq \bar{C}\sum _{\ell =1}^{L}\gamma _{N_{\ell}}^{\frac{1}{2}}h_{\ell}^{(\frac{1}{4}+\delta )\wedge 1}.
\end{equation*}

\noindent\textbf{\emph{Step~1.4. Study of
$\sum _{\ell =1}^{L}C^{\ell}_{N_{\ell}}$.}}\
Using a first-order Taylor
expansion, the uniform Lipschitz-regularity of
$(f_{X_{h}})_{h\in \mathcal{H}}$ under Assumption~\ref{asp:misc}(\ref{asp:misc-iv})
and Theorem~\ref{thm:variance-cv}(\ref{thm:variance-cv:ii}) gives
\begin{equation*}
\begin{aligned}
&\mathbb{E}[|r^{h_{\ell}}_{k}|^{2}]^{\frac{1}{2}}+\mathbb{E}[|r^{h_{
\ell -1}}_{k}|^{2}]^{\frac{1}{2}}
\\
&\leq
\frac{\sup _{h\in \mathcal{H}}{[f_{X_{h}}]_{{\mathrm{Lip}}}}}{(1-\alpha )}
\big(\mathbb{E}[(\xi ^{h_{\ell}}_{k-1}-\xi ^{h_{\ell}}_{\star})^{4}]^{
\frac{1}{2}}+\mathbb{E}[(\xi ^{h_{\ell -1}}_{k-1}-\xi ^{h_{\ell -1}}_{
\star})^{4}]^{\frac{1}{2}}\big)
\leq \bar{C}\gamma _{k}.
\end{aligned}
\end{equation*}
Hence re-using \eqref{eq:gamma^(1+b)|pi|<gamma^b}, we obtain
\begin{align}
\label{eq:E[C]<:aux}
\mathbb{E}[|C^{\ell}_{n}|^{2}]^{\frac{1}{2}} &\leq \sum _{k=1}^{n}
\gamma _{k}|\Pi _{k+1:n}|\big(\mathbb{E}[|r^{h_{\ell}}_{k}|^{2}]^{
\frac{1}{2}}+\mathbb{E}[|r^{h_{\ell -1}}_{k}|^{2}]^{\frac{1}{2}}\big)
\nonumber
\\
&\leq \bar{C}\sum _{k=1}^{n}\gamma _{k}^{2}|\Pi _{k+1:n}|
\leq \bar{C}\gamma _{n}.
\end{align}
Thus by the triangle inequality,
\begin{equation*}
\mathbb{E}\bigg[\bigg(\sum _{\ell =1}^{L}C^{\ell}_{N_{\ell}}\bigg)^{2}\bigg]^{\frac{1}{2}}
\leq \bar{C}\sum _{\ell =1}^{L}\gamma _{N_{\ell}}.
\end{equation*}

\noindent\textbf{\emph{Step~1.5. Study of
$\sum _{\ell =1}^{L}D^{\ell}_{N_{\ell}}$.}}\
By the definition
\eqref{eq:H1} of $H_{1}$ and \eqref{uniform:L2:bound:var:alg}, we have
\begin{align}
\label{eq:E[H]<|F|gamma^12}
&\mathbb{E}\big[\big(H_{1}(\xi ^{h}_{k},X_{h}^{(k+1)})-H_{1}(\xi ^{h}_{
\star},X_{h}^{(k+1)})\big)^{2}\big]
\nonumber
\\
&=\frac{1}{(1-\alpha )^{2}}\mathbb{E}[|\mathds1_{\{X_{h}^{(k+1)}>\xi ^{h}_{k}
\}} -\mathds1_{\{X_{h}^{(k+1)}>\xi ^{h}_{\star}\}}|]
\nonumber
\\
&=\frac{1}{(1-\alpha )^{2}}\mathbb{E}\big[\mathbb{E}[\mathds1_{\{\xi ^{h}_{k}<X_{h}^{(k+1)}<
\xi ^{h}_{\star}\}}+\mathds1_{\{\xi ^{h}_{\star}<X_{h}^{(k+1)}<\xi ^{h}_{k}
\}}\,|\,\mathcal{F}^{h}_{k}]\big]
\nonumber
\\
&=\frac{1}{(1-\alpha )^{2}}\mathbb{E}[|F_{X_{h}}(\xi ^{h}_{k})-F_{X_{h}}(
\xi ^{h}_{\star})|] \leq
\frac{\sup _{h\in \mathcal{H}}{\|f_{X_{h}}\|_{\infty}}}{(1-\alpha )^{2}}
\mathbb{E}[(\xi ^{h}_{k}-\xi ^{h}_{\star})^{2}]^{\frac{1}{2}}
\nonumber
\\
&\leq \bar{C}\frac{\sup _{h\in \mathcal{H}}{\|f_{X_{h}}\|_{\infty}}}{(1-\alpha )^{2}}\gamma _{k}^{\frac{1}{2}}.
\end{align}
Observe that
$(\rho _{k}^{h_{\ell}}-\rho _{k}^{h_{\ell -1}})_{k\geq 1}$ is a sequence
of $(\mathbb{F}^{h_{\ell}},\mathbb{P})$-martingale increments. Thus recalling
that $2\gamma _{1} V_{0}''(\xi ^{0}_{\star})>\frac{3}{2}$, we obtain by
\eqref{eq:E[H]<|F|gamma^12} and \eqref{eq:gamma^(1+b)|pi|<gamma^b} that
\begin{align}
\label{estimate:square:L2norm:D_n:final}
\mathbb{E}[|D^{\ell}_{n}|^{2}] &=\mathbb{E}\bigg[\bigg(\sum _{k=1}^{n}
\gamma _{k}\Pi _{k+1:n}(\rho ^{h_{\ell}}_{k}-\rho ^{h_{\ell -1}}_{k})
\bigg)^{2}\bigg]
\nonumber
\\
&=\sum _{k=1}^{n}\gamma _{k}^{2}|\Pi _{k+1:n}|^{2}\mathbb{E}[(\rho ^{h_{
\ell}}_{k}-\rho _{k}^{h_{\ell -1}})^{2}]
\nonumber
\\
&\leq 2\sum _{k=1}^{n}\gamma _{k}^{2}|\Pi _{k+1:n}|^{2}(\mathbb{E}[|
\rho ^{h_{\ell}}_{k}|^{2}]+\mathbb{E}[|\rho _{k}^{h_{\ell -1}}|^{2}])
\nonumber
\\
&\leq 2\sum _{k=1}^{n}\gamma _{k}^{2}|\Pi _{k+1:n}|^{2}\Big(
\mathbb{E}\big[\big(H_{1}(\xi ^{h_{\ell}}_{k},X_{h_{\ell}}^{(k+1)})-H_{1}(
\xi ^{h_{\ell}}_{\star},X_{h_{\ell}}^{(k+1)})\big)^{2}\big]
\nonumber
\\
&
\hphantom{=:2\sum _{k=1}^{n}\gamma _{k}^{2}|\Pi _{k+1:n}|^{2}\Big(}
+\mathbb{E}\big[\big(H_{1}(\xi ^{h_{\ell -1}}_{k},X_{h_{\ell -1}}^{(k+1)})-H_{1}(
\xi ^{h_{\ell -1}}_{\star},X_{h_{\ell -1}}^{(k+1)})\big)^{2}\big]
\Big)
\nonumber
\\
&\leq \bar{C}\sum _{k=1}^{n}\gamma _{k}^{\frac{5}{2}}|\Pi _{k+1:n}|^{2}
\leq \bar{C}\gamma _{n}^{\frac{3}{2}}.
\end{align}
Because the random variables $(D_{n}^{\ell})_{1\leq \ell \leq L}$ are independent and centered, \eqref{estimate:square:L2norm:D_n:final} yields
\begin{equation*}
\mathbb{E}\bigg[\bigg(\sum _{\ell =1}^{L}D_{N_{\ell}}^{\ell}\bigg)^{2}\bigg]
=\sum _{\ell =1}^{L}\mathbb{E}[|D_{N_{\ell}}^{\ell}|^{2}]
\leq \bar{C}\sum _{\ell =1}^{L}\gamma _{N_{\ell}}^{\frac{3}{2}}.
\end{equation*}

\noindent\textbf{\emph{Step~1.6. Study of
$\sum _{\ell =1}^{L}E^{\ell}_{N_{\ell}}$.}}\
Note again that
$(e_{k}^{h_{\ell}}-e _{k}^{h_{\ell -1}})_{k\geq 1}$ is a sequence of
$(\mathbb{F}^{h_{\ell}},\mathbb{P})$-martingale increments so that
\begin{align}
\label{estimate:square:L2norm:E_n}
\mathbb{E}[|E_{n}^{\ell}|^{2}] &=\mathbb{E}\bigg[\bigg(\sum _{k=1}^{n}
\gamma _{k}\Pi _{k+1:n}\big(e^{h_{\ell}}_{k}-e^{h_{\ell -1}}_{k}\big)
\bigg)^{2}\bigg]
\nonumber
\\
&=\sum _{k=1}^{n}\gamma _{k}^{2}|\Pi _{k+1:n}|^{2}\mathbb{E}[(e^{h_{
\ell}}_{k}-e^{h_{\ell -1}}_{k})^{2}]
\nonumber
\\
&\leq \sum _{k=1}^{n}\gamma _{k}^{2}|\Pi _{k+1:n}|^{2}\mathbb{E}\big[
\big(H_{1}(\xi ^{h_{\ell}}_{\star},X_{h_{\ell}})-H_{1}(\xi ^{h_{\ell -1}}_{
\star},X_{h_{\ell -1}})\big)^{2}\big].
\end{align}
The last term on the right-hand side can be bounded from above via
\begin{align}
\label{eq:helper1}
&\mathbb{E}\big[\big(H_{1}(\xi ^{h_{\ell}}_{\star},X_{h_{\ell}})-H_{1}(
\xi ^{h_{\ell -1}}_{\star},X_{h_{\ell -1}})\big)^{2}\big]
\nonumber
\\
&\leq 2\Big(\mathbb{E}\big[\big(H_{1}(\xi ^{h_{\ell}}_{\star}, X_{h_{
\ell}})-H_{1}(\xi ^{h_{\ell -1}}_{\star},X_{h_{\ell}})\big)^{2}\big]
\nonumber
\\
&
\hphantom{=2\Big(}
+\mathbb{E}\big[\big(H_{1}(\xi ^{h_{\ell -1}}_{\star},X_{h_{\ell}})-H_{1}(
\xi ^{h_{\ell -1}}_{\star }, X_{h_{\ell -1}})\big)^{2}\big]\Big)
\nonumber
\\
&\leq \frac{2}{(1-\alpha )^{2}}\big(\mathbb{E}[(\mathds1_{\{X_{h_{
\ell}}>\xi ^{h_{\ell}}_{\star}\}}-\mathds1_{\{X_{h_{\ell}}>\xi ^{h_{
\ell -1}}_{\star}\}})^{2}]
\nonumber
\\
&
\hphantom{=\frac{2}{(1-\alpha )^{2}}\big(}
+\mathbb{E}[(\mathds1_{\{X_{h_{\ell}}>\xi ^{h_{\ell -1}}_{\star}\}}-
\mathds1_{\{X_{h_{\ell -1}}>\xi ^{h_{\ell -1}}_{\star}\}})^{2}]\big).
\end{align}
On the one hand, by Proposition~\ref{prp:bias-cv},
\begin{align*}
\mathbb{E}[(\mathds1_{\{X_{h_{\ell}}>\xi ^{h_{\ell}}_{\star}\}}-
\mathds1_{\{X_{h_{\ell}}>\xi ^{h_{\ell -1}}_{\star}\}})^{2}] &=
\mathbb{E}[\mathds1_{\{\xi ^{h_{\ell -1}}_{\star}<X_{h_{\ell}}<\xi ^{h_{
\ell}}_{\star}\}}+\mathds1_{\{\xi ^{h_{\ell}}_{\star}<X_{h_{\ell}}<
\xi ^{h_{\ell -1}}_{\star}\}}]
\nonumber
\\
&=|F_{X_{h_{\ell}}}(\xi ^{h_{\ell}}_{\star})-F_{X_{h_{\ell}}}(\xi ^{h_{
\ell -1}}_{\star})|
\\
&\leq \sup _{h\in \mathcal{H}}{\|f_{X_{h}}\|_{\infty}}|\xi ^{h_{\ell}}_{\star}-\xi ^{h_{\ell -1}}_{\star}|
\leq \bar{C}h_{\ell}.
\end{align*}
On the other hand, Proposition~\ref{prop:local:strong:error:indicator:func}
yields
\begin{equation*}
\mathbb{E}[(\mathds1_{\{X_{h_{\ell}}>\xi ^{h_{\ell -1}}_{\star}\}}-
\mathds1_{\{X_{h_{\ell -1}}>\xi ^{h_{\ell -1}}_{\star}\}})^{2}]
\leq \bar{C}\epsilon (h_{\ell}),
\end{equation*}
where $\epsilon (h_{\ell})$ is defined in~\eqref{eq:eps(hl)}. Combining the two
previous estimates and recalling \eqref{estimate:square:L2norm:E_n},
\eqref{eq:helper1} and \eqref{eq:gamma^(1+b)|pi|<gamma^b}, we get
\begin{align}
\label{estimate:square:L2norm:E_n:final}
\mathbb{E}[|E_{n}^{\ell}|^{2}] &=\mathbb{E}\bigg[\bigg(\sum _{k=1}^{n}
\gamma _{k}\Pi _{k+1:n}(e^{h_{\ell}}_{k}-e^{h_{\ell -1}}_{k})\bigg)^{2}
\bigg]
\nonumber
\\
&\leq \bar{C}\big(h_{\ell}+\epsilon (h_{\ell})\big)\sum _{k=1}^{n}\gamma _{k}^{2}|\Pi _{k+1:n}|^{2}
\leq \bar{C}\epsilon (h_{\ell})\gamma _{n}.
\end{align}
Observe again that the random variables
$(E_{n}^{\ell})_{1\leq \ell \leq L}$ are independent and centered. Hence
\eqref{estimate:square:L2norm:E_n:final} directly yields
\begin{equation*}
\mathbb{E}\bigg[\bigg(\sum _{\ell =1}^{L}E_{N_{\ell}}^{\ell}\bigg)^{2}\bigg]
=\sum _{\ell =1}^{L}\mathbb{E}[|E_{N_{\ell}}^{\ell}|^{2}]
\leq \bar{C}\sum _{\ell =1}^{L}\gamma _{N_{\ell}}\epsilon (h_{\ell}).
\end{equation*}

\noindent\textbf{\emph{Step~1.7. Conclusion.}}\
Gathering all the previous estimates
on each term of \eqref{eq:xiML(h)-xi*(hL)}, we obtain
\begin{align}
\label{preliminary:estimate:L1:norm:ML:VaR}
\mathbb{E}[(\xi ^{{\mathrm{ML}}}_{\mathbf{N}}-\xi ^{h_{L}}_{\star})^{2}]
&\leq \bar{C}\bigg(\gamma _{N_{0}}+\Big(\sum _{\ell =1}^{L} \big(\gamma _{N_{
\ell}}+\gamma _{N_{\ell}}^{\frac{1}{2}}h_{\ell}^{(\frac{1}{4}+\delta )
\wedge 1}\big)\Big)^{2}
\nonumber
\\
&
\hphantom{=:K\bigg(}
+\sum _{\ell =1}^{L}\big(\gamma _{N_{\ell}}^{\frac{3}{2}}+\gamma _{N_{
\ell}}\epsilon (h_{\ell})\big)\bigg).
\end{align}
From the Cauchy--Schwarz inequality and the fact that
$h_{\ell}^{\frac{1}{2}}\leq \bar{C}\epsilon (h_{\ell})$,
\begin{eqnarray}
\label{local:upper:bound:L1:norm:ML:VaR}
\sum _{\ell =1}^{L}\gamma _{N_{\ell}}^{\frac{1}{2}}h_{\ell}^{(
\frac{1}{4}+\delta )\wedge 1} \leq \bigg(\sum _{\ell =1}^{L}\gamma _{N_{
\ell}}h_{\ell}^{\frac{1}{2}}\bigg)^{\frac{1}{2}}\bigg(\sum _{\ell =1}^{L}h_{
\ell}^{2(\delta \wedge \frac{3}{4})}\bigg)^{\frac{1}{2}}
\leq \bar{C}\bigg(\sum _{\ell =1}^{L}\gamma _{N_{\ell}}\epsilon (h_{\ell})\bigg)^{\frac{1}{2}},
\end{eqnarray}
where we used that
$\sum _{\ell =1}^{\infty }h_{\ell}^{2(\delta \wedge \frac{3}{4})} =
\frac{h_{0}^{2(\delta \wedge \frac{3}{4})}}{M^{2(\delta \wedge \frac{3}{4})}-1}<
\infty $. Plugging \eqref{local:upper:bound:L1:norm:ML:VaR} into
\eqref{preliminary:estimate:L1:norm:ML:VaR} concludes the proof of
\eqref{L2:norm:ML:VaR}.
\\

\noindent\textbf{\emph{Step~2. Study of
$\chi ^{{{\mathrm{ML}}}}_{\mathbf{N}}-\chi _{\star}^{h_{L}}$.}}\
For the sake
of simplicity, we assume \text{${\chi _{0}^{h_{\ell}}=0}$,} \text{${\ell
\in \{0, \dots , L\}}$.} The general case follows from similar arguments.
Recalling the definition \eqref{epsilon}, we derive from the decomposition
\eqref{decomposition:stat:error:cvar:sa:algorithm} that
\begin{equation*}
\begin{aligned}
&\chi ^{h_{\ell}}_{n}-\chi ^{h_{\ell}}_{\star}-(\chi ^{h_{\ell -1}}_{n}-
\chi ^{h_{\ell -1}}_{\star})
\\
&=\frac{1}{n}\sum _{k=1}^{n}(\varepsilon _{k}^{h_{\ell}}-\varepsilon _{k}^{h_{
\ell -1}})
\\
&
\hphantom{=:}
+\frac{1}{n}\sum _{k=1}^{n} \Big(V_{h_{\ell}}(\xi _{k-1}^{h_{\ell}})-V_{h_{
\ell}}(\xi ^{h_{\ell}}_{\star})-\big(V_{h_{\ell -1}}(\xi _{k-1}^{h_{
\ell -1}})-V_{h_{\ell -1}}(\xi ^{h_{\ell -1}}_{\star})\big)\Big).
\end{aligned}
\end{equation*}
Thus
\begin{align}
\label{eq:chiML(h)-chi*(hL)}
\chi ^{{\mathrm{ML}}}_{\mathbf{N}}-\chi ^{h_{L}}_{\star }&=\chi ^{h_{0}}_{N_{0}}-
\chi ^{h_{0}}_{\star }+\sum _{\ell =1}^{L} \big(\chi ^{h_{\ell}}_{N_{
\ell}}-\chi ^{h_{\ell}}_{\star}-(\chi ^{h_{\ell -1}}_{N_{\ell}}-\chi ^{h_{
\ell -1}}_{\star})\big)
\nonumber
\\
&=\chi ^{h_{0}}_{N_{0}}-\chi ^{h_{0}}_{\star }+\sum _{\ell =1}^{L}
\frac{1}{N_{\ell}}\sum _{k=1}^{N_{\ell}}(\varepsilon _{k}^{h_{\ell}}-
\varepsilon _{k}^{h_{\ell -1}})
\nonumber
\\
&
\hphantom{=:}
+\sum _{\ell =1}^{L} \frac{1}{N_{\ell}}\sum _{k=1}^{N_{\ell}} \Big( V_{h_{
\ell}}(\xi _{k-1}^{h_{\ell}})-V_{h_{\ell}}(\xi ^{h_{\ell}}_{\star})
\nonumber
\\
&
\hphantom{=:+\sum _{\ell =1}^{L}
\frac{1}{N_{\ell}}\sum _{k=1}^{N_{\ell}}
\Big(}
-\big(V_{h_{\ell -1}}(\xi _{k-1}^{h_{\ell -1}})-V_{h_{\ell -1}}(\xi ^{h_{
\ell -1}}_{\star})\big)\Big).
\end{align}

\noindent\textbf{\emph{Step~2.1. Study of
$\chi _{N_{0}}^{h_{0}}-\chi ^{h_{0}}_{\star}$.}}\
By Theorem~\ref{thm:variance-cv}(\ref{thm:variance-cv:ii}),
\begin{equation}
\label{L2:estimate:CVaR:level:0}
\mathbb{E}[(\chi _{N_{0}}^{h_{0}}-\chi ^{h_{0}}_{\star})^{2}]\leq
\frac{\bar{C}}{N_{0}^{1\wedge 2\beta}}.
\end{equation}

\noindent\textbf{\emph{Step~2.2. Study of
$\sum _{\ell =1}^{L}\frac{1}{N_{\ell}}\sum _{k=1}^{N_{\ell}}(
\varepsilon _{k}^{h_{\ell}}-\varepsilon _{k}^{h_{\ell -1}})$.}}\
Via the
definition \eqref{epsilon}, the \text{$1$-Lipschitz} property of
$x\mapsto x^{+}$, $x\in \mathbb{R}$, the fact that
$\mathbb{E}[(X_{h_{\ell}}-X_{h_{\ell -1}})^{2}]\leq \bar{C}h_{\ell}$ by~\eqref{strong:error:diff:Xh}
with $p=2$ and the fact that
$|\xi ^{h_{\ell}}_{\star}-\xi ^{h_{\ell -1}}_{\star}|\leq \bar{C}h_{\ell}$ by
Proposition~\ref{prp:bias-cv}, we~have
\begin{align}
\label{helper}
&\mathbb{E}[(\varepsilon _{k}^{h_{\ell}}-\varepsilon _{k}^{h_{\ell -1}})^{2}
\,|\, \mathcal{F}_{k-1}]
\nonumber
\\
&=\frac{1}{(1-\alpha )^{2}}\mathbb{V}\mathrm{ar}[(X^{(k)}_{h_{\ell}}-
\xi ^{h_{\ell}}_{k-1})^{+}-(X^{(k)}_{h_{\ell -1}}-\xi ^{h_{\ell -1}}_{k-1})^{+}
\,|\,\mathcal{F}_{k-1}]
\nonumber
\\
&\leq \frac{1}{(1-\alpha )^{2}}\mathbb{E}\big[\big(X^{(k)}_{h_{\ell}}-
\xi ^{h_{\ell}}_{k-1}-(X^{(k)}_{h_{\ell -1}}-\xi ^{h_{\ell -1}}_{k-1})
\big)^{2}\,\big|\,\mathcal{F}_{k-1}\big]
\nonumber
\\
&\leq \frac{3}{(1-\alpha )^{2}}\Big(\mathbb{E}[(X_{h_{\ell}}-X_{h_{
\ell -1}})^{2}]
\nonumber
\\
&
\hphantom{=:\frac{3}{(1-\alpha )^{2}}\Big(}
+\big(\xi _{k-1}^{h_{\ell}}-\xi _{k-1}^{h_{\ell -1}}-(\xi ^{h_{\ell}}_{
\star}-\xi ^{h_{\ell -1}}_{\star})\big)^{2}+(\xi ^{h_{\ell}}_{\star}-
\xi ^{h_{\ell -1}}_{\star})^{2}\Big)
\nonumber
\\
&\leq \bar{C}\Big(h_{\ell}+\big(\xi _{k-1}^{h_{\ell}}-\xi _{k-1}^{h_{\ell -1}}-(
\xi ^{h_{\ell}}_{\star}-\xi ^{h_{\ell -1}}_{\star})\big)^{2}\Big).
\end{align}
We seek an $L^{2}(\mathbb{P})$-estimate for the quantity
$ \xi _{k}^{h_{\ell}}-\xi _{k}^{h_{\ell -1}}-(\xi ^{h_{\ell}}_{\star}-
\xi ^{h_{\ell -1}}_{\star})$. By the decomposition
\eqref{eq:xiML(h)-xi*(hL):aux}, using \eqref{eq:E[A]<:aux}--\eqref{eq:E[C]<:aux}, \eqref{estimate:square:L2norm:D_n:final} and
\eqref{estimate:square:L2norm:E_n:final}, we obtain
\begin{equation}
\label{L2:bound:diff:var:ML}
\mathbb{E}\big[\big(\xi _{k}^{h_{\ell}}-\xi _{k}^{h_{\ell -1}}-(\xi ^{h_{
\ell}}_{\star}-\xi ^{h_{\ell -1}}_{\star})\big)^{2}\big]
\leq \bar{C}\big(\gamma _{k}^{\frac{3}{2}}+\gamma _{k}\epsilon (h_{\ell})\big).
\end{equation}
Recall that for each $\ell \in \{1,\dots ,L\}$,
$(e^{h_{\ell}}_{k}-e^{h_{\ell -1}}_{k})_{1\leq k\leq n}$ are
$(\mathbb{F}^{h_{\ell}},\mathbb{P})$-martingale increments. Hence taking
expectations on both sides of \eqref{helper}, using
\eqref{L2:bound:diff:var:ML} and the fact that by assumption and Lemma~\ref{lmm:thetah->theta0},
$\sup _{\ell \geq 0}\mathbb{E}[|\xi ^{h_{\ell}}_{0}|]+|\xi ^{h_{\ell}}_{
\star}|<\infty $, we get
\begin{equation*}
\begin{aligned}
\mathbb{E}\bigg[\bigg(\frac{1}{n}\sum _{k=1}^{n}(\varepsilon _{k}^{h_{
\ell}}-\varepsilon _{k}^{h_{\ell -1}})\bigg)^{2}\bigg] &=
\frac{1}{n^{2}}\sum _{k=1}^{n}\mathbb{E}[(\varepsilon _{k}^{h_{\ell}}-
\varepsilon _{k}^{h_{\ell -1}})^{2}]
\\
&\leq \frac{1}{n^{2}}\Big(h_{\ell}+\sup _{\ell \geq 1}\mathbb{E}\big[
\big(\xi _{0}^{h_{\ell}}-\xi _{0}^{h_{\ell -1}}-(\xi ^{h_{\ell}}_{
\star}-\xi ^{h_{\ell -1}}_{\star})\big)^{2}\big]\Big)
\\
&
\hphantom{=:}
+\frac{1}{n^{2}}\sum _{k=2}^{n}\mathbb{E}[(\varepsilon _{k}^{h_{\ell}}-
\varepsilon _{k}^{h_{\ell -1}})^{2}]
\\
&\leq \bar{C}\bigg(\frac{1}{n^{2}}+\frac{h_{\ell}}{n}+\frac{1}{n^{2}}\sum _{k=1}^{n}
\big(\gamma _{k}^{\frac{3}{2}}+\gamma _{k}\epsilon (h_{\ell})\big)
\bigg)
\\
&\leq \bar{C}\bigg(\frac{h_{\ell}}{n}+\frac{1}{n^{2}}\sum _{k=1}^{n}\big(
\gamma _{k}^{\frac{3}{2}}+\gamma _{k}\epsilon (h_{\ell})\big)\bigg).
\end{aligned}
\end{equation*}
As the random variables
$(\frac{1}{N_{\ell}}\sum _{k=1}^{N_{\ell}}(\varepsilon _{k}^{h_{\ell}}-
\varepsilon _{k}^{h_{\ell -1}}))_{1\leq \ell \leq L}$ are independent and
centered, we obtain
\begin{align}
\nonumber&\label{L2:estimate:1/n:sum:increment:martingale}
\mathbb{E}\bigg[\bigg(\sum _{\ell =1}^{L}\frac{1}{N_{\ell}}\sum _{k=1}^{N_{
\ell}}(\varepsilon _{k}^{h_{\ell}}-\varepsilon _{k}^{h_{\ell -1}})
\bigg)^{2}\bigg] \\
&\quad =\sum _{\ell =1}^{L}\mathbb{E}\bigg[\bigg(
\frac{1}{N_{\ell}}\sum _{k=1}^{N_{\ell}}(\varepsilon _{k}^{h_{\ell}}-
\varepsilon _{k}^{h_{\ell -1}})\bigg)^{2}\bigg]
\nonumber
\\
&\quad \leq \bar{C}\sum _{\ell =1}^{L}\bigg(\frac{h_{\ell}}{N_{\ell}}+
\frac{1}{N_{\ell}^{2}}\sum _{k=1}^{N_{\ell}}\big(\gamma _{k}^{
\frac{3}{2}}+\gamma _{k}\epsilon (h_{\ell})\big)\bigg).
\end{align}

\noindent\textbf{\emph{Step~2.3. Study of
\begin{equation*}
\sum _{\ell =1}^{L} \frac{1}{N_{\ell}}\sum _{k=1}^{N_{\ell}} \Big(V_{h_{
\ell}}(\xi _{k-1}^{h_{\ell}})-V_{h_{\ell}}(\xi ^{h_{\ell}}_{\star}) -
\big(V_{h_{\ell -1}}(\xi _{k-1}^{h_{\ell -1}})-V_{h_{\ell -1}}(\xi ^{h_{
\ell -1}}_{\star})\big)\Big).
\end{equation*}
}}

\noindent
We now deal with the last term on the right-hand side of
\eqref{eq:chiML(h)-chi*(hL)}. Using Minkowski's inequality, second-order
Taylor expansions of $V_{h_{\ell}}$ and $V_{h_{\ell -1}}$, the facts that
\text{${V_{h_{\ell}}'(\xi ^{h_{\ell}}_{\star})=0}$,}
$\sup _{h\in \mathcal{H}}{\|V_{h}''\|_{\infty}}<\infty $ by Assumption~\ref{asp:misc}(\ref{asp:misc-iv}),
\eqref{uniform:L4:bound:var:alg} and
$\sum _{k\geq 1}\gamma _{k}=\infty $ leads to
\begin{equation*}
\begin{aligned}
&\mathbb{E}\bigg[\bigg(\frac{1}{n}\sum _{k=1}^{n} \Big(V_{h_{\ell}}(
\xi _{k-1}^{h_{\ell}})-V_{h_{\ell}}(\xi ^{h_{\ell}}_{\star})-\big(V_{h_{
\ell -1}}(\xi _{k-1}^{h_{\ell -1}})-V_{h_{\ell -1}}(\xi ^{h_{\ell -1}}_{
\star})\big)\Big)\bigg)^{2}\bigg]^{\frac{1}{2}}
\\
&\leq C\bigg(\frac{1}{n}+\frac{1}{n-1}\sum _{k=2}^{n}\big(\mathbb{E}[(
\xi ^{h_{\ell}}_{k-1}-\xi ^{h_{\ell}}_{\star})^{4}]^{\frac{1}{2}}+
\mathbb{E}[(\xi ^{h_{\ell -1}}_{k-1}-\xi ^{h_{\ell -1}}_{\star})^{4}]^{
\frac{1}{2}} \big)\bigg)
\\
&\leq \bar{C}\bigg(\frac{1}{n}+\frac{1}{n-1}\sum _{k=1}^{n-1}\gamma _{k}\bigg)
\leq \bar{C}\bar{\gamma}_{n}.
\end{aligned}
\end{equation*}
Thus
\begin{align}
\label{L2:estimate:sum:remainder:cvar}
&\mathbb{E}\bigg[\bigg(\sum _{\ell =1}^{L} \frac{1}{N_{\ell}}\sum _{k=1}^{N_{
\ell}} \Big( V_{h_{\ell}}(\xi _{k-1}^{h_{\ell}})-V_{h_{\ell}}(\xi ^{h_{
\ell}}_{\star}) -\big(V_{h_{\ell -1}}(\xi _{k-1}^{h_{\ell -1}})-V_{h_{
\ell -1}}(\xi ^{h_{\ell -1}}_{\star})\big)\Big)\bigg)^{2}\bigg]^{
\frac{1}{2}}
\nonumber
\\
&\leq \bar{C}\sum _{\ell =1}^{L}\bar\gamma _{N_{\ell}}.
\end{align}

\noindent\textbf{\emph{Step~2.4. Conclusion.}}\
Coming back to
\eqref{eq:chiML(h)-chi*(hL)}, we obtain by
\eqref{L2:estimate:CVaR:level:0},
\eqref{L2:estimate:1/n:sum:increment:martingale} and
\eqref{L2:estimate:sum:remainder:cvar} that
\begin{align*}
&\mathbb{E}[(\chi ^{{\mathrm{ML}}}_{\mathbf{N}}-\chi ^{h_{L}}_{\star})^{2}]
\\
&\leq 3\bigg(\mathbb{E}[(\chi ^{h_{0}}_{N_{0}}-\chi ^{h_{0}}_{\star})^{2}]
+\mathbb{E}\bigg[\bigg(\sum _{\ell =1}^{L}\frac{1}{N_{\ell}}\sum _{k=1}^{N_{
\ell}}(\varepsilon _{k}^{h_{\ell}}-\varepsilon _{k}^{h_{\ell -1}})
\bigg)^{2}\bigg]
\\
&
\hphantom{=3\bigg(}
+\mathbb{E}\bigg[\Big( \sum _{\ell =1}^{L} \frac{1}{N_{\ell}}\sum _{k=1}^{N_{
\ell}} \Big( V_{h_{\ell}}(\xi _{k-1}^{h_{\ell}})-V_{h_{\ell}}(\xi ^{h_{
\ell}}_{\star})
\\
&
\hphantom{=3\bigg(+\mathbb{E}\bigg[\Big(\sum _{\ell =1}^{L}
\frac{1}{N_{\ell}}\sum _{k=1}^{N_{\ell}}\Big(}
-\big(V_{h_{\ell -1}}(\xi _{k-1}^{h_{\ell -1}})-V_{h_{\ell -1}}(\xi ^{h_{
\ell -1}}_{\star})\big) \Big)^{2}\bigg]\bigg)
\\
&\leq \bar{C}\bigg(\sum _{\ell =1}^{L}\Big(\frac{h_{\ell}}{N_{\ell}}+
\frac{1}{N_{\ell}^{2}}\sum _{k=1}^{N_{\ell}}\gamma _{k}^{\frac{3}{2}}+
\gamma _{k}\epsilon (h_{\ell})\Big)+\Big(\sum _{\ell =1}^{L}
\bar{\gamma}_{N_{\ell}}\Big)^{2}\bigg).
\end{align*}
This concludes the proof of \eqref{L2:norm:ML:CVaR}.\qed

\end{document}